\tikzset{
  plainNodes/.style={draw=black, thick, circle, font=\scriptsize, minimum size=0.78cm},
  hatNodes/.style={draw=blue!70!black, thick, circle, font=\scriptsize, fill=blue!10, minimum size=0.78cm},
  winNode/.style={draw=green!70!black, thick, circle, font=\tiny, fill=green!10, minimum size=0.78cm},
  loseNode/.style={draw=red!70!black, thick, circle, font=\tiny, fill=red!10, minimum size=0.78cm},
  labelNodes/.style={draw=none, rectangle, font=\scriptsize, inner sep=3pt},
  tinylabelNodes/.style={draw=none, rectangle, font=\tiny, inner sep=3pt},
  deletedTrans/.style = {draw=orange!70,thick},
  addedTrans/.style = {draw=purple!70,thick},
  EveNodes/.style = {draw=black, regular polygon, regular polygon sides=4, thick, font=\scriptsize, minimum size=0.78cm},
  AdamNodes/.style = {draw=black, regular polygon, regular polygon sides=5, thick, font=\scriptsize, minimum size=0.78cm},
}
\tikzset{>=stealth, every path/.style={semithick}}
\pgfplotsset{compat=newest}
\newlength{\rulethickness}
\colorlet{goodArea}{green!50}
\colorlet{badArea}{red!50}
\tikzset{
state/.style={draw,circle, inner sep=0pt, minimum size=20pt},
action/.style={draw, circle,  inner sep=0.75pt, fill},
probabilistic/.style={->},
markovian/.style={dashed, ->},
}
\newcommand{\mc}{\ensuremath{\mathcal{M}}}
\newcommand{\omc}{\ensuremath{\widetilde{\mathcal{M}}}}
\newcommand{\mcprob}{\ensuremath{\mathbf{\delta}}}
\newcommand{\omcprob}{\ensuremath{\mathbf{\widetilde{\delta}}}}
\renewcommand{\Pr}{\ensuremath{\mathrm{Pr}}}
\newcommand{\evestrategy}{\ensuremath{\Sigma^A_{\exists}}}
\newcommand{\adamstrategy}{\ensuremath{\Sigma^A_{\forall}}}
\newcommand{\Veve}{\ensuremath{V_{\exists}}}
\newcommand{\Vadam}{\ensuremath{V_{\forall}}}
\newcommand{\arntrans}{\ensuremath{\Delta}}
\newcommand{\arnpr}{\ensuremath{\mathbb{P}}}
\newcommand{\oarnpr}{\ensuremath{\widetilde{\mathbb{P}}}}
\newcommand{\greach}{\ensuremath{\mathit{RE}}}
\newcommand{\gparity}{\ensuremath{\mathit{PA}}}
\newcommand{\vE}{\ensuremath{\langle{E}\rangle}}
\newcommand{\vA}{\ensuremath{\langle{A}\rangle}}
\newcommand{\vwin}{\ensuremath{v_{\textit{win}}}}
\newcommand{\vlose}{\ensuremath{v_{\textit{lose}}}}
\newcommand{\oVeve}{\ensuremath{\overline{V}_{\exists}}}
\newcommand{\oVadam}{\ensuremath{\overline{V}_{\forall}}}
\newcommand{\ov}{\ensuremath{{\overline{v}}}}
\newcommand{\ou}{\ensuremath{{\overline{u}}}}
\newcommand{\oV}{\ensuremath{{\overline{V}}}}
\newcommand{\oU}{\ensuremath{{\overline{U}}}}
\newcommand{\oC}{\ensuremath{{\overline{C}}}}
\newcommand{\wG}{\ensuremath{\widetilde{G}}}
\newcommand{\wV}{\ensuremath{\widetilde{V}}}
\newcommand{\wu}{\ensuremath{\widetilde{u}}}
\newcommand{\wv}{\ensuremath{\widetilde{v}}}
\newcommand{\ww}{\ensuremath{\widetilde{w}}}
\newcommand{\warntrans}{\ensuremath{{\widetilde{\arntrans}}}}
\newcommand{\wE}{\ensuremath{\widetilde{E}}}
\newcommand{\wPr}{\ensuremath{\widetilde{\mathrm{Pr}}}}
\newcommand{\wU}{\ensuremath{\widetilde{U}}}
\newcommand{\Reach}{\ensuremath{\mathit{Reach}}}
\newcommand{\crossPath}{\ensuremath{\mathit{crossPath}}}
\newcommand{\enterEven}{\ensuremath{\mathit{enterEven}}}
\newcommand{\enterOdd}{\ensuremath{\mathit{enterOdd}}}
\newcommand{\winEven}{\ensuremath{\mathit{winEven}}}
\newcommand{\winOdd}{\ensuremath{\mathit{winOdd}}}
\newcommand{\pM}{\ensuremath{\delta_{\textit{min}}}}
\newcommand{\bigO}{\ensuremath{\mathcal{O}}}
\newcommand{\aK}{\ensuremath{\alpha_{k}}}
\newcommand{\aKP}{\ensuremath{\alpha_{k{+}1}}}
\newcommand{\epsi}{\ensuremath{(n!)^2 M^{2n^2}}}
\title{A Direct Reduction from Stochastic Parity Games to Simple Stochastic Games} %TODO Please add
\author{Rapha\"{e}l Berthon}{RWTH Aachen University, Germany}{berthon@cs.rwth-aachen.de}{https://orcid.org/0000-0002-2580-5193}{Funded by DFG Project POMPOM (KA 1462/6-1)}%TODO mandatory, 
\author{Joost-Pieter Katoen}{RWTH Aachen University, Germany}{katoen@cs.rwth-aachen.de}{https://orcid.org/0000-0002-6143-1926}{}%TODO mandatory, please use full name; only 1 author per \author macro; first two parameters are mandatory, other parameters can be empty. Please provide at least the name of the affiliation and the country. The full address is optional. Use additional curly braces to indicate the correct name splitting when the last name consists of multiple name parts.
\author{Zihan Zhou}{National University of Singapore, Singapore}{zihan.zhou@u.nus.edu}{https://orcid.org/0009-0009-7406-6663}{}
\authorrunning{R. Berthon, J-P. Katoen and Z. Zhou} %TODO mandatory. First: Use abbreviated first/middle names. Second (only in severe cases): Use first author plus 'et al.'
\keywords{stochastic games, parity, reduction} %TODO mandatory; please add comma-separated list of keywords
\begin{document}

\maketitle

\begin{abstract}
Significant progress has been recently achieved in developing efficient solutions for simple stochastic games (SSGs), focusing on reachability objectives. While reductions from stochastic parity games (SPGs) to SSGs have been presented in the literature through the use of multiple intermediate game models, a direct and simple reduction has been notably absent. This paper introduces a novel and direct polynomial-time reduction from quantitative SPGs to quantitative SSGs. By leveraging a gadget-based transformation that effectively removes the priority function, we construct an SSG that simulates the behavior of a given SPG. We formally establish the correctness of our direct reduction. Furthermore, we demonstrate that under binary encoding this reduction is polynomial, thereby directly corroborating the known $\textbf{NP}\,\mathbf{\cap}\,\textbf{coNP}$ complexity of SPGs and providing new understanding in the relationship between parity and reachability objectives in turn-based stochastic games.

\end{abstract}

% \emptydoublepage
% \mainmatter
\section{Introduction}\label{ch:Introduction}

\emph{Stochastic games} (SGs) are a broadly used framework for decision-making under uncertainty with a well-defined objective. They aim at modeling two important aspects under which sequential decisions must be made: turn-based interaction with an adversary environment, and dealing with randomness. Stochastic games have been introduced long ago by Shapley~\cite{shapley1953stochastic} as important models, and many variations have been shown to be inter-reducible and in $\textbf{NP}\,\mathbf{\cap}\,\textbf{coNP}$~\cite{condon1992complexity,andersson2009complexity,Chatterjee_2011}. 
SGs find their applications in various fields, including artificial intelligence~\cite{leslie2020best}, economics~\cite{amir2003stochastic}, operations research~\cite{darlington2023stochastic}, or providing tools to graph theory~\cite{DBLP:journals/tcs/SimardDL21}. 
Moreover, \textit{Markov Decision Processes} (MDP)~\cite{puterman2014markov}, a foundational framework for modeling decision-making in stochastic environments, are special cases of SGs, where one of the players has no states under control. 

While we make use of simple stochastic games (SSGs) with reachability objectives, we focus on the specific case of stochastic \emph{parity} games (SPGs), which are zero-sum, and where the set of winning runs is $\omega$-regular. Solving such a game consists in finding an optimal strategy and determining its winning probability.

Take, for instance, the case of a market competition, where two firms Alpha and Beta try to expand their market share. From the standpoint of each firm, the other acts as a direct competitor, and therefore we assume the players are \textit{adversarial}. States represent the relative valuation of the firms over a sustained period of time, and business decisions (made by either firm Alpha or Beta) and the fluctuation of market shares, policy changes, and external forces (modeled as randomness) lead to transitions between game states. Firm Alpha is interested in keeping its share above a set key threshold, say 50\%. We distinguish between three priorities, leading to a \textit{zero-sum} game: 
\begin{itemize}
    \item \emph{Priority $0$}: Alpha's market share is significantly above 50\%.
    \item \emph{Priority $1$}: Alpha's market share is significantly below 50\%. 
    \item \emph{Priority $2$}: Alpha's market share fluctuates around 50\%. While this is sustainable without major fluctuation, this is not sustainable if the only other fluctuation is Alpha's share regularly dropping below 50\% (priority $1$).  
\end{itemize}
If the minimum priority visited infinitely often is $0$, Alpha can manage in the long term to regularly dominate the market, recovering any loss that occurred in the meantime. If the minimum priority visited infinitely often is $1$, despite any temporary success, Alpha's market share will stay near or below 50\% in the long run, which is not sustainable for the firm. Finally, if the minimum priority visited infinitely often is $2$, Alpha and Beta will eventually find an even balance point.

A variety of algorithms have been considered for (reachability) SSGs~\cite{condon1990algorithms,eisentraut2022value,azeem2022optimistic,phalakarn2020widest}, which we present in our related work section below. 
In Markov chains (MCs), $\omega$-regular objectives can be reduced directly to reachability objectives~\cite{baier2008principles}. A similar reduction exists from MDPs with $\omega$-regular objectives to reachability objectives and has been used extensively, for example in~\cite{DBLP:journals/lmcs/EtessamiKVY08}. This means that solvers often focus on optimizing specifically the computation of reachability probabilities. Such a direct reduction is lacking for SPGs. To reduce quantitative SPGs to SSGs, some intermediate steps are necessary, via a reduction to stochastic mean-payoff and stochastic discounted-payoff games~\cite{chatterjee2008reduction,andersson2009complexity} (see the lower part of Figure~\ref{fig:intro1}), making this approach less appealing. For qualitative solutions, a translation via deterministic parity games (i.e. with no random states) exists~\cite{chatterjee2003sspg, chatterjee2004quantitative,Chatterjee_2011}, see the upper part of Figure~\ref{fig:intro1}.

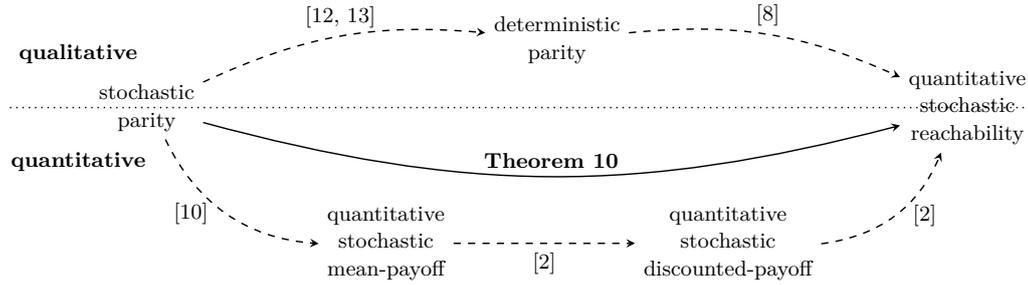
\begin{figure}
\centering\scalebox{0.90}{
\begin{tikzpicture}
[myph/.style={
    rectangle,
    minimum width=0.1cm,
},
every node/.append style={
    font=\small,
},
]
\node[myph,align=center] (00) at (-3,0.8) {\textbf{qualitative}};
\node[myph,align=center] (01) at (-3,-0.8) {\textbf{quantitative}};
\node[myph,align=center] (0) at (-2,0) {stochastic \\ parity};
\node[myph,align=center] (4) at (10,0) {quantitative \\ stochastic \\ reachability};

\node[myph,align=center] (1) at (4,1) {deterministic \\ parity};

\node[myph,align=center] (2) at (1.5,-2) {quantitative \\ stochastic \\ mean-payoff};
\node[myph,align=center] (3) at (6.5,-2) {quantitative \\ stochastic \\ discounted-payoff};

\draw[dotted] (-4,0) -- (11,0);

\draw[->,dashed,] (0) to[bend left=15] node[above] {\cite{chatterjee2003sspg, chatterjee2004quantitative}} (1);
\draw[->,dashed] (1) to[bend left=15] node[above] {\cite{Chatterjee_2011}} (4);
\draw[->,dashed] (0) to[bend right] node[left,xshift=-4pt] {\cite{chatterjee2008reduction}} (2);
\draw[->,dashed] (2) -- node[below] {\cite{andersson2009complexity}} (3);
\draw[->,dashed] (3) to[bend right] node[right,xshift=4pt] {\cite{andersson2009complexity}} (4);
%\draw[->] (0) to[bend left=15] node[below] {Appendix~\ref{sec:bounds}} (4);
\draw[->] (0) to[bend right=15] node[above] {\textbf{Theorem~\ref{th:reduction}}} (4);
%sec:quanspg2quanssg
\end{tikzpicture}
}
\caption{Reducing SPGs to SSGs}
\label{fig:intro1}
\end{figure}

% \begin{figure}[H]
% \centering
% \begin{tikzpicture}
% [myph/.style={
%     rectangle,
%     minimum width=0.1cm,
% },
% every node/.append style={
%     font=\small,
% },
% ]
% \node[myph,align=center] (00) at (-3,0.8) {\textbf{qualitative}};
% \node[myph,align=center] (01) at (-3,-0.8) {\textbf{quantitative}};
% \node[myph,align=center] (0) at (-2,0) {stochastic \\ parity};
% \node[myph,align=center] (4) at (10,0) {quantitative \\ stochastic \\ reachability};

% \node[myph,align=center] (1) at (4,2) {deterministic \\ parity};

% \node[myph,align=center] (2) at (1.5,-2) {quantitative \\ stochastic \\ mean-payoff};
% \node[myph,align=center] (3) at (6.5,-2) {quantitative \\ stochastic \\ discounted-payoff};

% \draw[dotted] (-4,0) -- (11,0);

% \draw[->,dashed] (0) to[bend left] node[above] {\cite{chatterjee2003sspg, chatterjee2004quantitative}, \ref{sec:spg2dpg}} (1);
% \draw[->,dashed] (1) to[bend left] node[above] {\cite{Chatterjee_2011}, \ref{subsec:dpg2ssg}} (4);
% \draw[->,dashed] (0) to[bend right] node[left] {\cite{chatterjee2008reduction}, \ref{subsec:spg2smpg}} (2);
% \draw[->,dashed] (2) -- node[below] {\cite{andersson2009complexity}, \ref{subsec:smpg2sdpg}} (3);
% \draw[->,dashed] (3) to[bend right] node[right] {\cite{andersson2009complexity}, \ref{subsec:sdpg2ssg}} (4);
% \draw[->] (0) to[bend left=15] node[below] {\ref{subsec:spg2ssg}, \ref{sec:bounds}} (4);
% \draw[->] (0) to[bend right=15] node[above] {\ref{sec:quanspg2quanssg}} (4);
% \end{tikzpicture}
% \caption{Reducing SPGs to SSGs}
% \label{fig:intro1}
% \end{figure}

\paragraph*{Outline and Contribution}
In this paper, we propose a \emph{direct} reduction from SPGs to SSGs with a reachability objective (see the solid arrow in Figure~\ref{fig:intro1}). To that end, we leverage a gadget whose structure comes from~\cite{Chatterjee_2011}, but where we use new probability values, to reduce deterministic parity games to quantitative SSGs. 
Given an SPG $G$, where a parity condition is satisfied if the minimum priority seen infinitely often is even, we show in Section~\ref{subsec:gadgets} how to use the gadget to transform $G$ into an SSG $\wG$. 
This introduces two new sink states, one winning and the other losing for the reachability objective. Parity values are removed, and every transition going to a state that used to be even (respectively odd) now has a small chance to go to the winning (resp. losing) sink. We scale the probabilities, with lower parity values yielding a higher chance to go to a sink. 
Theorem~\ref{th:reduction} ensures that any optimal strategy in $\wG$ is also optimal in $G$ (the reciprocal may not be true). We can then compute an optimal memoryless strategy in $\wG$, and compute its value in $G$.  

We show in Theorem~\ref{th:size} that under binary encoding, our reduction is polynomial. We thus reobtain in a direct way the classical result that solving quantitative SPGs is in the complexity class $\textbf{NP}\,\mathbf{\cap}\,\textbf{coNP}$~\cite{andersson2009complexity,Chatterjee_2011}. While the complexity remains the same as for existing algorithms, and the values used in the reduction make it unlikely to be very efficient in practice, this new approach implies that any efficient SSG solver can be used for SPGs. The direct reduction was already conjectured to exist by Chatterjee and Fijalkow in~\cite{Chatterjee_2011}, but as expected, proving its correctness is challenging, and involves the computations of very precise probability bounds. Despite the inspiration drawn from a known gadget, the technical depth of this paper resides in the intricate and novel proofs for the correctness of our reduction. In addition, our direct reduction gives new insights into the relationship between SSGs and SPGs.  

In Section~\ref{ch:preliminaries}, we present the necessary background knowledge. In Section~\ref{sec:gadegts}, we define the gadget and present some related results, which we use in Section~\ref{sec:reduction} to define the reduction properly, and to show its correctness and its complexity. We give some concluding remarks in Section~\ref{ch:conclusion}.

\paragraph*{Related Work}

Stochastic parity games, mean-payoff games and discounted payoff games can all be reduced to SSGs~\cite{jurdzinski1998deciding,zwick1996complexity}, and this also applies to their stochastic extensions, namely stochastic parity games~\cite{Chatterjee_2011}, stochastic mean payoff games and stochastic discounted payoff games~\cite{andersson2009complexity}. SSGs also find their applications in the analysis of MDPs, serving as abstractions for large MDPs~\cite{kattenbelt2010game}. The amount of memory required to solve stochastic parity games has been studied in~\cite{DBLP:journals/lmcs/BouyerORV23}.

Various extensions have been considered within this family of inter-reducible stochastic games. Introducing more than two players allows for the analysis of Nash equilibria~\cite{DBLP:conf/csl/ChatterjeeMJ04,DBLP:journals/corr/abs-1109-4017}. Using continuous states can provide tools to represent timed systems~\cite{DBLP:conf/adhs/MajumdarMSS21}. Multi-objective approaches have been employed to synthesize systems that balance average expected outcomes with worst-case guarantees~\cite{DBLP:conf/concur/ChatterjeeP19}. Parity objectives are significant in many of these scenarios where long-run behavior is relevant, but the classical reduction to SSGs cannot be directly applied. 

Common approaches to solving SSGs, as presented in~\cite{condon1990algorithms}, include value iteration (VI), strategy iteration (SI), and quadratic programming, but are all exponential in the size of the SSG. These approaches have been widely studied on MDPs, where recent advancements have been made to apply VI with guarantees, using interval VI~\cite{baier2017ensuring}, sound VI~\cite{quatmann2018sound}, and optimistic VI~\cite{hartmanns2020optimistic}. Interestingly, optimistic VI does not require an a priori computation of starting vectors to approximate from above. 
Similar ideas have been lifted to SSGs: Eisentraut et al.~\cite{eisentraut2022value} introduce a VI algorithm for under- and over-approximation sequences, as well as the first practical stopping criterion for VI on SSGs. Optimistic VI has been adapted to SSGs~\cite{azeem2022optimistic}, and a novel bounded VI with the concept of widest path has been introduced in~\cite{phalakarn2020widest}. A comparative analysis~\cite{kvretinsky2022comparison} suggests VI and SI are more efficient. Storm~\cite{hensel2021probabilistic} and PRISM~\cite{kwiatkowska2011prism} are two popular model checkers incorporating different variants of VI and SI, and both employ VI as the default algorithm for solving MDPs. PRISM-games~\cite{kwiatkowska2020prism} exploits VI for solving SSGs.

For SPGs, we distinguish three main approaches. Chatterjee et al.~\cite{chatterjee2006strategy} use a strategy improvement algorithm requiring randomized sub-exponential time. With $n$ game states and $d$ priorities, the expected running time is in $2^{O(\sqrt{dn\log(n)})}$. 
The probabilistic game solver GIST~\cite{chatterjee2010gist} reduces qualitative SPGs to deterministic parity games (DPG), and benefits from several quasi-polynomial algorithms for DPGs~\cite{jurdzinski2017succinct,parys2019parity,lehtinen2018modal} since the breakthrough made by Calude et al.~\cite{calude2017deciding}, but this approach is unlikely to achieve polynomial running time~\cite{czerwinski2019universal}. Hahn et al.~\cite{hahn2020model} reduce SPGs to SSGs, allowing the use of reinforcement learning to approximate the values without knowing the game's probabilistic transition structure. Their reduction is only proven correct in the limit.

% \section{Introduction}
\section{Preliminaries}\label{ch:preliminaries}
Our notations on Markov chains and stochastic games on graphs mainly come from~\cite{baier2008principles}.

\subsection{Discrete-Time Markov Chains}
A \textit{discrete distribution} over a countable set $\mathcal{A}$ is a function $\mu:\mathcal{A}\xrightarrow{}\mathbb{R}_{\geq 0}$ with $\sum_{a\in \mathcal{A}}\mu(a) = 1$. 
The \textit{support} of the discrete distribution $\mu$ is defined as $supp(\mu)\triangleq\{a \in \mathcal{A}\ |\ \mu(a)>0\}$. 
We denote the set of all discrete distributions over $\mathcal{A}$ with $\mathbb{D}(\mathcal{A})$.

A \textit{discrete-time Markov Chain (MC)} $\mc$ is a tuple $\mc = (V, \mcprob, v_{I})$ where $V$ is a finite set of states, $\mcprob: V \rightarrow \mathbb{D}(V)$ is a probabilistic transition function, and $v_{I}\in V$ is the initial state. 
Given $\delta(v) = \mu$ with $\mu(v') = p$, we write $\delta(v, v') = p$. 
For $S\subseteq V$ and $v\in V$, let $\delta(v,S)=\sum_{s\in S} \delta(v,S)$.

An infinite sequence $\pi = v_0 v_1 \dots \in V^{\omega}$ is an \textit{infinite path} through MC $\mc$ if $\mcprob(v_i,v_{i+1})>0$ for all $i\in \mathbb{N}$. We denote all infinite paths that start from state $v\in V$ with $\textit{Paths}(v)$. 
\textit{Prefixes} of infinite path $\pi = v_0 v_1 \dots \in V^{\omega}$ are $\{v_0 \cdots v_i\ |\ i\in\mathbb{N}\}$ and are \textit{finite} paths. We denote all finite paths that start from state $v\in V$ with $\textit{Paths}^*(v)$.
The set of infinitely often visited states in $\pi = v_0 v_1 \dots \in V^{\omega}$ is defined as $\textit{inf}(\pi) = \{v \in V\ |\ \forall n \in \mathbb{N}, \exists k \in \mathbb{N} \textit{ s.t. } v_{n+k} = v \}$.

The probability $\Pr$ of a finite path $\pi = v_0 v_1 \dots v_n \in V^*$ is given by $\prod_{i\in[0,{n-1}]}\mcprob(v_{i},v_{i+1})$. The set of infinite paths that start with a given finite path is called a \textit{cylinder}, and as in~\cite{baier2008principles}, we extend the probability of cylinders in a unique way to all \emph{measurable sets} of $V^{\omega}$.

\paragraph*{Reachability Probabilities}

Let $\mc=(V, \mcprob, v_{I})$ be an MC. For target states $T \subseteq V$ and starting state $v_0 \in V$, the event of reaching $T$ is defined as
$ \Reach(T) =  \{ v_0 v_1 \dots \in V^{\omega}\ |\ \exists i \in \mathbb{N}, v_i \in T \}$.
The probability to reach $T$ from $v_0$ is defined as 
$ \Pr^{v_0}(\Reach(T)) = \Pr(\{\hat{\pi}\ |\ \hat{\pi} \in \textit{Paths}^*(v_0) \cap ((V \backslash T)^* T)\})$.

Let variable $x_{v}$ denote the probability of reaching $T$ from any $v \in V$. 
Whether $T$ is reachable from a given state $v$ can be determined using standard graph analysis.
Let  $\textit{Pre}^*(T)$ denote the set of states from which $T$ is reachable. If $v \notin \textit{Pre}^*(T)$, then $x_{v}=0$. If $v \in T$, then $x_{v}=1$. Otherwise, 
$
x_{v}=\sum_{u \in V\backslash T} \mcprob(v,u) \cdot x_{u} + \sum_{w\in T}\mcprob(v,w).
$
This is equivalent to a linear equation system, formalized as follows:
\begin{theorem}[Reachability Probability of Markov Chains~\cite{baier2008principles}]\label{th:reachability}
    Given MC $\mc=(V , \mcprob, v_{I})$ and  target states $T \subseteq V$,
    let $V_{Q} = \textit{Pre}^*(T) \backslash T$, $\mathbf{A} = (\mcprob(v, v'))_{v,v' \in V_{Q}}$ and $\mathbf{b} = (b_{v})_{v\in V_{Q}} = (\delta(v,T))_{v\in V_{Q}}$.
    % \end{itemize}    
    Then, the vector $\mathbf{x} = (x_{v})_{v\in V_{Q}}$ with $x_{v}=\Pr^{v}(\Reach(T))$ is the \textbf{unique} solution of the linear equation system
    $\mathbf{x} = \mathbf{A} \cdot \mathbf{x} + \mathbf{b}$.
\end{theorem}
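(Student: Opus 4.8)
The plan is to prove the two assertions separately: first that the true reachability probabilities $(x_v)_{v \in V_Q}$ form a solution of $\mathbf{x} = \mathbf{A}\mathbf{x} + \mathbf{b}$, and then that this solution is unique.

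For the solution part, I would fix an arbitrary $v \in V_Q$ and perform a \emph{first-step analysis}. Since $v \in \textit{Pre}^*(T) \setminus T$ we have $v \notin T$, so every path in $\Reach(T)$ starting at $v$ takes at least one transition. Conditioning on the first successor and using the multiplicative decomposition of the cylinder measure (the Markov property), I obtain
\[
\Pr^{v}(\Reach(T)) = \sum_{u \in V} \mcprob(v,u)\,\Pr^{u}(\Reach(T)).
\]
I would then split the sum by the location of $u$: if $u \in T$ the inner probability is $1$, if $u \notin \textit{Pre}^*(T)$ it is $0$, and if $u \in V_Q$ it is exactly $x_u$. This collapses the identity to $x_v = \sum_{u \in V_Q} \mcprob(v,u)\,x_u + \mcprob(v,T)$, which is precisely the $v$-th row of $\mathbf{x} = \mathbf{A}\mathbf{x} + \mathbf{b}$ since $b_v = \mcprob(v,T)$. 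The only care needed here is to justify the decomposition rigorously from the cylinder construction, which is standard.

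For uniqueness I would rewrite the system as $(\mathbf{I} - \mathbf{A})\mathbf{x} = \mathbf{b}$ and show that $\mathbf{I} - \mathbf{A}$ is invertible. The decisive structural fact is the very definition of $V_Q$: the target $T$ is reachable from \emph{every} state of $V_Q$. I would exploit this to prove that the substochastic matrix $\mathbf{A}$ is transient, i.e. $\mathbf{A}^k \to \mathbf{0}$. Concretely, $(\mathbf{A}^k)_{v,v'}$ is the probability of moving from $v$ to $v'$ in $k$ steps while never leaving $V_Q$, so the $v$-th row sum of $\mathbf{A}^k$ is the probability of remaining inside $V_Q$ for $k$ steps. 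Writing $N := |V_Q|$, reachability of $T$ from each $v \in V_Q$ yields a simple path of length at most $N$ from $v$ into $T$; since the smallest positive transition probability is $\pM$, this escape route has probability at least $\pM^{\,N}$, so the probability of staying in $V_Q$ for $N$ consecutive steps is at most $c := 1 - \pM^{\,N} < 1$. Hence $\|\mathbf{A}^{N}\|_{\infty} \leq c$, which gives $\rho(\mathbf{A}) \leq c^{1/N} < 1$, so $\mathbf{A}^{k} \to \mathbf{0}$ and $(\mathbf{I}-\mathbf{A})^{-1} = \sum_{k \geq 0} \mathbf{A}^{k}$ exists. Uniqueness is then immediate: if $\mathbf{y}$ is any solution, then $\mathbf{z} := \mathbf{y} - \mathbf{x}$ satisfies $\mathbf{z} = \mathbf{A}\mathbf{z}$, whence $\mathbf{z} = \mathbf{A}^{k}\mathbf{z} \to \mathbf{0}$ and so $\mathbf{y} = \mathbf{x}$.

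I expect the main obstacle to be the uniqueness half, and in particular articulating precisely why restricting the index set to $V_Q = \textit{Pre}^*(T) \setminus T$, rather than to all of $V$, is exactly what forces transience of $\mathbf{A}$. Without deleting the states from which $T$ is unreachable, $\mathbf{A}$ would retain a recurrent class on which some row sums equal $1$; there $\mathbf{I} - \mathbf{A}$ would be singular and the linear system would admit infinitely many solutions, so the careful bookkeeping of $V_Q$ is what makes the clean ``unique solution'' statement hold.
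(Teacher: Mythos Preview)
Your proposal is correct and follows the standard textbook argument. Note, however, that the paper does not actually prove this theorem: it is stated as a known result with a citation to Baier and Katoen's \emph{Principles of Model Checking}, and no proof is given in the body or the appendix. Your first-step analysis plus the transience argument for $\mathbf{A}$ (via $\rho(\mathbf{A})<1$ using that every state in $V_Q$ can escape to $T$) is precisely the proof one finds in that reference, so there is nothing to compare beyond observing that you have reconstructed the cited proof faithfully.
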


\paragraph*{Limit Behavior}

Let MC $\mc=(V, \mcprob, v_{I})$. A set $L \subseteq V$ is \textit{strongly connected} if for all pairs of states $v,v' \in L$, $v$ and $v'$ are mutually reachable. Hence a singleton $\{v\}$ is strongly connected if $\mcprob(v,v)>0$.
Set $L \subseteq V$ is a \textit{strongly connected component (SCC)} if it is maximally strongly connected, i.e., there does not exist another set $L' \subseteq V$ and $L\subsetneq L'$ such that $L'$ is strongly connected.
$L \subseteq V$ is a \textit{bottom SCC (BSCC)} if $L$ is a SCC and there is no transition leaving $L$, i.e., there does not exist $v \in L, v' \in V\backslash L$ such that $\mcprob(v, v')>0$.
We denote the set of BSCCs in MC $\mc$ with $\textit{BSCC}(\mc)$.

The limit behavior of an MC regarding the infinitely often visited states is captured by the following theorem.
\begin{theorem}[Limit behavior of Markov Chains~\cite{baier2008principles}]\label{th:limitMarkov}
For MC $\mc=(V, \mcprob, v_{I})$, it holds that $\Pr\{\pi \in \textit{Paths}(v_I)\  |\ \textit{inf}(\pi) \in \textit{BSCC}(\mc)\} = 1$.
\end{theorem}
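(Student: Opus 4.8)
The plan is to show that, almost surely, the set $\textit{inf}(\pi)$ of infinitely often visited states is simultaneously \emph{strongly connected} and \emph{closed} (admits no outgoing transition), and then to observe that any such set is precisely a BSCC. Since $V$ is finite, every infinite path must revisit some state infinitely often, so $\textit{inf}(\pi)\neq\emptyset$; finiteness also means there are only finitely many candidate subsets $T\subseteq V$ and finitely many transitions to control, which lets the probabilistic analysis proceed one pair of states at a time.

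First I would dispatch the two \emph{deterministic} facts, which hold for \emph{every} infinite path $\pi=v_0v_1\dots$, not merely almost surely. Set $T=\textit{inf}(\pi)$ and choose $N$ with $v_i\in T$ for all $i\geq N$ (possible because the finitely many states outside $T$ occur only finitely often). For any $v,w\in T$ there are indices $N\leq i<j$ with $v_i=v$ and $v_j=w$, and the segment $v_i\cdots v_j$ is a path from $v$ to $w$ using only positive-probability transitions and only states of $T$; hence $T$ is strongly connected. Moreover, if a strongly connected, closed set $T$ were not maximal, some strongly connected $T'\supsetneq T$ would allow reaching a state outside $T$ starting from inside $T$, and the first step leaving $T$ along such a route would be an outgoing transition, contradicting closedness. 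Thus a strongly connected, closed set is automatically a BSCC, and it remains only to establish closedness almost surely.

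The technical heart, and the step I expect to be the main obstacle, is the \emph{probabilistic} claim that $\textit{inf}(\pi)$ has no outgoing transition almost surely. For each ordered pair $(v,v')$ with $p:=\delta(v,v')>0$, let $B_{v,v'}$ be the event $\{v\in\textit{inf}(\pi),\ v'\notin\textit{inf}(\pi)\}$; I would prove $\Pr(B_{v,v'})=0$. On $B_{v,v'}$ the path visits $v$ infinitely often, yet reaches $v'$ only finitely often, so $B_{v,v'}\subseteq\bigcup_m C_m$, where $C_m$ is the event that $v$ recurs infinitely often but $v'$ never occurs after step $m$. On $C_m$ there are, for every $k$, at least $k$ visits to $v$ at times $\geq m$, each followed by a step that avoids $v'$; conditioning successively via the strong Markov property at these visit times bounds $\Pr(C_m)\leq(1-p)^k$ for all $k$, whence $\Pr(C_m)\leq\lim_{k\to\infty}(1-p)^k=0$. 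Making this conditional independence rigorous at the successive (random) visit times is where the measure-theoretic care lies, and the cylinder-based probability measure fixed earlier following~\cite{baier2008principles} is exactly what legitimizes these manipulations.

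Finally, since there are only finitely many pairs $(v,v')$, the union $\bigcup_{v,v'}B_{v,v'}$ still has probability $0$; on its complement $\textit{inf}(\pi)$ is closed, and by the deterministic facts it is nonempty and strongly connected, hence a BSCC. This yields $\Pr\{\pi\in\textit{Paths}(v_I)\mid\textit{inf}(\pi)\in\textit{BSCC}(\mc)\}=1$, as claimed.
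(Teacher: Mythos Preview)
The paper does not prove this theorem: it is stated with a citation to~\cite{baier2008principles} and used as a black box, so there is no in-paper proof to compare against. Your argument is the standard one (essentially the proof given in~\cite{baier2008principles}): show deterministically that $\textit{inf}(\pi)$ is nonempty and strongly connected, show probabilistically that it is closed almost surely via the ``infinitely many coin flips'' bound $\Pr(C_m)\le(1-p)^k$, and conclude that a strongly connected closed set must be a BSCC. The sketch is correct and complete at the level of detail appropriate here; the only place requiring genuine care, as you yourself flag, is making the iterated conditioning at the successive return times to $v$ rigorous, which is exactly what the strong Markov property (or an explicit cylinder decomposition) provides.
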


\subsection{Stochastic Games}

A \textit{stochastic arena} $G$ is a tuple $G=((V,E), (\Veve, \Vadam, V_{R}), \arntrans)$, where $(V,E)$ is a directed graph,  with a finite set of vertices $V$, partitioned as $\Veve \uplus \Vadam \uplus V_R = V$, and a set of edges $E \subseteq V\times V$. The probabilistic transition function $\arntrans$ is such that for all $v_r\in V_R$, $\arntrans(v_r)$ is a distribution over $V$, and for $v\in \Veve \uplus \Vadam$, $v'$, we have $(v_r,v) \in E$ if and only if $v\in supp(\arntrans(v_r))$. We usually uncurry $\arntrans(v_r)(v)$ and write $\arntrans(v_r,v)$ .

Without loss of generality, we assume each vertex has at least one successor. This property is called \textit{non-blocking}.
The finite set $V$  of vertices is partitioned into three sets: $\Veve$ --- vertices where Eve chooses the successor, $\Vadam$ --- vertices where Adam chooses the successor, and $V_R$ are the random vertices. A stochastic arena is a Markov Decision Process (MDP) if either $\Veve = \emptyset$ or $\Vadam = \emptyset$, and an MC if both $\Veve = \emptyset$ and $\Vadam = \emptyset$.

Figure~\ref{fig:running-example-G} illustrates a stochastic arena $G$. Square-shaped vertex \(v_3\) is a vertex in \(\Veve\) where Eve chooses the successor, pentagon-shaped vertex \(v_4\) is in \(\Vadam\) where Adam chooses the successor, and the circular vertices \(V_R = \{v_0, v_1, v_2, v_5\}\) are random. Edges from random vertices are annotated with probabilities from \(\arntrans\).
\begin{figure}
\centering
\begin{tikzpicture}[scale=2]
% [baseline=(current bounding box.south)]
% \node[EveNodes] (v0) at (-1,0) {};
\node[plainNodes] (v0) at (0,0) {$v_0$};
\node[plainNodes] (v1) at (1,0) {$v_1$};
\node[plainNodes] (v2) at (2,0) {$v_2$};
\node[AdamNodes] (v3) at (3,0) {$v_4$};
\node[EveNodes] (v4) at (3,0.75) {$v_3$};
\node[plainNodes] (v5) at (4,0.75) {$v_5$};

\draw[->] (0,-0.5) -- (v0);
\draw[->] (v0) -- node[labelNodes, below]{$0.1$} (v1);
\draw[->] (v1) -- node[labelNodes, below]{$0.1$} (v2);
\draw[->] (v2) -- node[labelNodes, below]{$0.1$} (v3);
\draw[->] (v3) -- (v4);
\draw[->] (v4) -- ++(0,0.5)  -| (v5);
\draw[->] (v5) -- node[labelNodes, above]{$0.5$} (v4);
\draw[->] (v5) -- node[labelNodes, left]{$0.5$} (v3.north east);
\draw[->] (v3) -| (v5);
\draw[->] (v3) -- ++(0,-0.5) -| (v1);
\draw[->] (v1) -- ++(0,0.3) -| node[labelNodes, above, pos=0.2]{$0.9$} (v0.north east);
\draw[->] (v2) -- ++(0,0.6) -| node[labelNodes, above, pos=0.2]{$0.9$} (v0.north);
\draw[->] (v4) -| (v2.north east);
\draw[->] (v0) edge[loop left] node[labelNodes, left]{$0.9$} (v0);

\end{tikzpicture}
\caption{An example stochastic arena $G$}
\label{fig:running-example-G}
\end{figure}
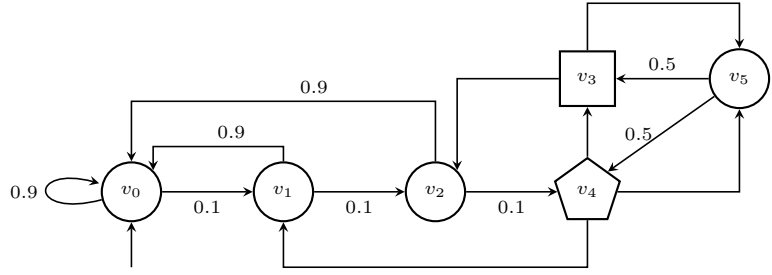

\paragraph*{Strategies}
Let $G=((V,E), (\Veve, \Vadam, V_{R}), \arntrans)$ be a stochastic arena.
    % \begin{itemize}
A \textit{strategy} $\sigma$ of Eve is a function $\sigma: V^*\cdot \Veve \rightarrow \mathbb{D}(V)$, such that for all $v_0 v_1 \dots v_n \in  V^*\cdot \Veve$, we have 
        $\sigma(v_0 v_1 \dots v_n, v_{n+1}) > 0 $ implies $(v_n,v_{n+1})\in E$.
A strategy $\gamma$ of Adam is defined analogously.
We denote the sets of all strategies of Eve and Adam with $\evestrategy$ and $\adamstrategy$ respectively.

A strategy $\sigma$ of Eve is a \textit{pure memoryless strategy}, if for all  $w,w'\in V^*$ and $v\in \Veve$, $\sigma(w \cdot v)=\sigma(w' \cdot v)$ and the support of this distribution is a singleton. A pure memoryless strategy $\gamma$ of Adam is defined analogously. We denote the sets of pure memoryless strategies of Eve and Adam with $\Sigma_{\exists}$ and $\Sigma_{\forall}$ respectively.

In a stochastic arena $G$, when Eve and Adam follow pure memoryless strategies $\sigma\in\Sigma_{\exists}$ and $\gamma\in\Sigma_{\forall}$ respectively, the arena $G_{\sigma,\gamma} = ((V,E'), (\Veve,\Vadam,V_R),\arntrans)$ results. Here, the new edge set $E'$ is such that for all $u\in \Veve$, $(u,v) \in E'$ if and only if $\sigma(u)=v$, and for all $u\in \Vadam$, $(u,v) \in E'$ if and only if $\gamma(u)=v$. We refer to such arenas obtained by fixing pure memoryless strategies as sub-arenas.
In fact, given a fixed starting vertex $v_I \in V$, we often view the sub-arena $G_{\sigma,\gamma}$ as an MC 
$\mc_{\sigma,\gamma} = (V, \mcprob, v_{I})$, where the state space is the vertex set $V$ in $G$, and the transition function $\mcprob$ combines deterministic moves indicated by strategies $\sigma$ and $\tau$, and the transition function $\Delta$ defined on random vertices:
\[
\delta(u, v) = 
\begin{cases} 
\Delta(u, v) & \text{if } u \in V_{R}, \text{or } u \in \Veve, \sigma(u) = v \text{ or } u \in \Vadam, \gamma(u) = v \\
0 & \text{otherwise}
\end{cases}
\]

We continue with the stochastic arena $G$ from Figure~\ref{fig:running-example-G}. Fixing strategy \(\sigma = [v_3 \mapsto v_5]\) for Eve and \(\gamma = [v_4 \mapsto v_5]\) for Adam induces the sub-arena \(G_{\sigma, \gamma}\), as shown in Figure~\ref{fig:running-example-subG}.
\begin{figure}
\centering
\begin{tikzpicture}[scale=2]
% [baseline=(current bounding box.south)]
% \node[EveNodes] (v0) at (-1,0) {};
\node[plainNodes] (v0) at (0,0) {$v_0$};
\node[plainNodes] (v1) at (1,0) {$v_1$};
\node[plainNodes] (v2) at (2,0) {$v_2$};
\node[AdamNodes] (v3) at (3,0) {$v_4$};
\node[EveNodes] (v4) at (3,0.75) {$v_3$};
\node[plainNodes] (v5) at (4,0.75) {$v_5$};

\draw[->] (0,-0.4) -- (v0);
\draw[->] (v0) -- node[labelNodes, below]{$0.1$} (v1);
\draw[->] (v1) -- node[labelNodes, below]{$0.1$} (v2);
\draw[->] (v2) -- node[labelNodes, below]{$0.1$} (v3);
% \draw[->] (v3) -- (v4);
\draw[->] (v4) -- ++(0,0.5)  -| node[labelNodes, below, pos=0.2]{$1$} (v5);
\draw[->] (v5) -- node[labelNodes, above]{$0.5$} (v4);
\draw[->] (v5) -- node[labelNodes, left]{$0.5$} (v3.north east);
\draw[->] (v3) -| node[labelNodes, above, pos=0.2]{$1$} (v5);
\draw[->] (v1) -- ++(0,0.3) -| node[labelNodes, above, pos=0.2]{$0.9$} (v0.north east);
\draw[->] (v2) -- ++(0,0.6) -| node[labelNodes, above, pos=0.2]{$0.9$} (v0.north);
% \draw[->] (v4) -| (v2.north east);
\draw[->] (v0) edge[loop left] node[labelNodes, left]{$0.9$} (v0);

\end{tikzpicture}
\caption{Sub-arena $G_{\sigma,\gamma}$ induced by strategies $\sigma = [v_4\mapsto v_5]$ and $\gamma=[v_3 \mapsto v_5]$}
\label{fig:running-example-subG}
\end{figure}
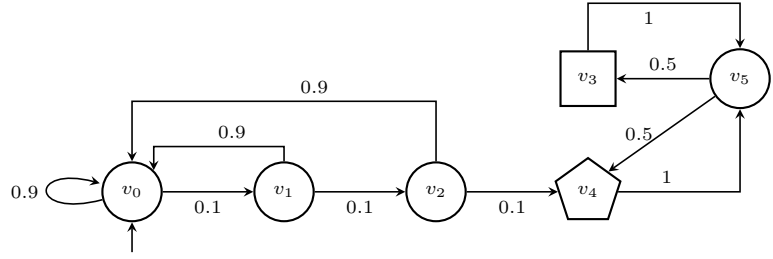

\paragraph*{Winning Objectives} 
A \textit{play} of $G$ is an infinite sequence of vertices $\pi=v_0 v_1\dots \in V^{\omega}$ where for all $i\in\mathbb{N}$, $(v_i,v_{i+1})\in E$. We denote the set of all plays of $G$ with $\Pi_G$, or in short $\Pi$ when $G$ is clear from the context. 

Let $G$ be a stochastic arena. A \textit{winning objective} for Eve is defined as a set of plays $\mathcal{O}\subseteq \Pi$. As we study zero-sum games, the winning objectives of the two players are complementary. The winning objective for Adam is thus $\Pi\backslash\mathcal{O}$. A play $\pi$ \textit{satisfies} an objective $\mathcal{O}$ if $\pi \in \mathcal{O}$, and is a \textit{winning play} of Eve. A winning play $\pi$ of Adam satisfies $\Pi\backslash\mathcal{O}$.

A \textit{reachability} objective asserts that the play in $G$ has to reach target vertices  $T \subseteq V$, formally given by $\greach(T)=\{v_0 v_1\dots \in\Pi\ |\ \exists k\in\mathbb{N}, v_k\in T\}$. 
If $T = \{v\}$ for some vertex $v$, we simply write $\greach(v)$. 

Let $p:V\rightarrow\mathbb{N}$ be a \textit{priority function} which assigns a priority $p(v)$ to each vertex $v\in V$. For $T\subseteq V$, let $p(T) = \{p(t)\ |\ t\in T\}$. A parity objective asserts that the minimum priority visited infinitely often along an infinite path is even:
$\gparity(p)=\{\pi=v_0 v_1\dots\in\Pi\ |\ \min(p(\textit{inf}(\pi)))\text{ is even}\}$.

We formally define stochastic games as follows:
\begin{definition}[Stochastic Games]\label{def:stochastic_games}
    Let $G=((V,E), (\Veve, \Vadam, V_{R}), \arntrans)$ be a stochastic arena. 
        A \textbf{stochastic game (SG)} with winning objective $\Phi\subseteq \Pi$ is defined as $(G,\Phi)$. If $\Phi$ is a reachability or parity objective, $(G,\Phi)$ is a \textit{stochastic reachability game (SRG)} or \textbf{stochastic parity game (SPG)} respectively. SRGs are also referred to as \textbf{simple stochastic games (SSG)}. 
    When the winning objective is clear from the context, we refer to $G$ as a stochastic game. 
\end{definition}

\paragraph*{Solving stochastic games}
Let $(G,\Phi)$ be an SG, and let Eve and Adam follow strategies $\sigma\in\evestrategy$ and $\gamma\in\adamstrategy$.    Given a starting vertex $v \in V$, the probability for play $\pi$ to satisfy $\Phi$ --- the probability for Eve to win --- is denoted $\arnpr_{\sigma,\gamma}^{v}(\Phi)$. The probability for Adam to win is $\arnpr_{\sigma,\gamma}^{v}(\Pi\backslash\Phi)$.

Let the \textit{value} of a vertex $v$ be the maximal probability of generating a play from $v$ that satisfies $\Phi$, formally defined using 
a \textit{value function} 
$\vE(\Phi)(v) = \sup_{\sigma\in\evestrategy}\inf_{\gamma\in\adamstrategy}\arnpr_{\sigma,\gamma}^{v}(\Phi)$ for Eve, and $\vA(\Pi\backslash\Phi)(v) = \sup_{\gamma\in\adamstrategy}\inf_{\sigma\in\evestrategy}\arnpr_{\sigma,\gamma}^{v}(\Pi \backslash \Phi)$ for Adam.
A strategy $\sigma$ for Eve is \textit{optimal} from vertex $v$ if $ \inf_{\gamma\in\adamstrategy} \arnpr_{\sigma,\gamma}^{v}(\Phi) = \vE(\Phi)(v)$.
Optimal strategies for Adam are defined analogously. 

We divide solving stochastic games into three distinct tasks. Given an SG, 
solving the SG \textit{quantitatively} amounts to computing the values of all vertices in the arena. 
Solving the SG \emph{strategically} amounts to computing an optimal strategy of Eve (or Adam) for the game.

Since for both SSGs and SPGs, solving quantitatively and strategically is polynomially equivalent~\cite{andersson2009complexity}, we just say "solving" in what follows. We mainly consider quantitative solving, but Theorem~\ref{th:reduction} applies to both quantitative and strategic solving.

\paragraph*{Determinacy}
\textit{Determinacy} refers to the property of an SG where both players, Eve and Adam, have optimal strategies, meaning they can guarantee to achieve the values of the game, regardless of the strategies employed by the other player. \textit{Pure memoryless determinacy} means that both players have pure memoryless optimal strategies.
\begin{theorem}[Pure Memoryless Determinacy~\cite{martin1998determinacy}]\label{th:determinacy}

Let $(G,\Phi)$ be an SG, where $\Phi$ is a reachability or parity objective. For all $v\in V$, it holds that $\vE(\Phi)(v) + \vA(\Pi\backslash\Phi)(v) = 1$. 
Pure memoryless optimal strategies exist for both players from all vertices. 

\end{theorem}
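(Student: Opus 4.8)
The statement bundles two claims: the minimax equality $\vE(\Phi)(v) + \vA(\Pi\backslash\Phi)(v) = 1$ (determinacy), and the existence of pure memoryless optimal strategies. My plan is to prove these in that order, and to treat the reachability case as a warm-up before parity, since the long-run analysis for parity will ultimately be reduced to a reachability computation in an induced Markov chain. For the minimax equality I would first observe that both $\greach(T)$ and $\gparity(p)$ are Borel subsets of $V^\omega$ ($\greach(T)$ is open, and $\gparity(p)$ sits low in the Borel hierarchy). As the arena is finite and all randomness is confined to the fixed distributions $\arntrans$ at vertices of $V_R$, this is a Blackwell game, so by Martin's theorem its value exists, i.e. $\sup_{\sigma\in\evestrategy}\inf_{\gamma\in\adamstrategy} \arnpr^{v}_{\sigma,\gamma}(\Phi) = \inf_{\gamma\in\adamstrategy}\sup_{\sigma\in\evestrategy} \arnpr^{v}_{\sigma,\gamma}(\Phi)$. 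Using $\arnpr^{v}_{\sigma,\gamma}(\Pi\backslash\Phi) = 1 - \arnpr^{v}_{\sigma,\gamma}(\Phi)$ and pushing the complement through the suprema and infima rewrites $\vA(\Pi\backslash\Phi)(v)$ as $1 - \inf_{\gamma}\sup_{\sigma} \arnpr^{v}_{\sigma,\gamma}(\Phi)$; adding this to $\vE(\Phi)(v)$ and invoking the minimax equality gives exactly $1$.

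For pure memoryless optimality in the reachability case, I would characterise the value vector $(\vE(\greach(T))(v))_v$ as a solution of the Bellman optimality equations: value $1$ on $T$, a maximum over successors on $\Veve$, a minimum on $\Vadam$, and the $\arntrans$-average on $V_R$. I would then let Eve play a locally value-maximising successor and Adam a locally value-minimising one, both memoryless. Fixing these strategies induces a Markov chain, and I would verify optimality by a supermartingale argument on the value vector, combined with the reachability characterisation of Theorem~\ref{th:reachability}, taking the usual care to treat vertices of value $0$ (those outside $\textit{Pre}^*(T)$) so that no probability mass is trapped away from $T$.

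For parity I would induct on the number of distinct priorities. The base case of a single priority is immediate, since the objective is then won everywhere or lost everywhere. In the inductive step, let $m$ be the least priority: if $m$ is even it favours Eve, and if odd it favours Adam. In either case I would decompose the arena using the attractor of the favoured player to the priority-$m$ vertices, recurse on the complementary subgame (which has strictly fewer priorities), and stitch the resulting memoryless sub-strategies into one memoryless strategy, which is well defined because each vertex lies in exactly one region. The bridge back to reachability is Theorem~\ref{th:limitMarkov}: once both players commit to memoryless strategies, the induced Markov chain's paths almost surely settle in a BSCC, within which the least priority seen infinitely often is fixed, so the probability of satisfying $\gparity(p)$ equals the probability of reaching a BSCC whose least priority is even, which is a reachability quantity handled by the previous case.

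I expect the inductive combination for parity to be the main obstacle. In the stochastic setting the attractor and subgame decomposition must account for random vertices whose transitions may leave a region, so the bookkeeping needed to show that the pieced-together memoryless strategies realise the claimed values, rather than merely local optima, is delicate. It is precisely here that the interaction between almost-sure reachability of the dominant priority and the quantitative value of the residual subgame must be controlled, and I would isolate this as the key lemma of the argument.
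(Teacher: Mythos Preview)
The paper does not prove Theorem~\ref{th:determinacy}; it states the result with a citation to Martin~\cite{martin1998determinacy} and uses it as a black box in the preliminaries. There is therefore no proof in the paper to compare your proposal against.

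That said, your sketch is a reasonable outline of how such results are established in the literature, and it is worth flagging two points where the plan, as written, would need more care. First, Martin's Blackwell determinacy gives the equality of the $\sup\inf$ and $\inf\sup$, but it does not by itself guarantee that the suprema and infima are \emph{attained}; existence of optimal (as opposed to $\varepsilon$-optimal) strategies requires a separate finiteness/compactness argument, which in this setting is supplied by the Bellman fixed-point characterisation you invoke for reachability and by the inductive construction for parity. Second, the attractor-based induction you describe is the standard scheme for \emph{deterministic} parity games; in the stochastic quantitative setting the decomposition is genuinely more delicate, because random vertices can straddle regions and the value is a real number rather than a win/lose bit. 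The usual route (as in Chatterjee--Jurdzi\'nski--Henzinger or Zielonka-style arguments adapted to $2\tfrac{1}{2}$-player games) requires a careful treatment of \emph{almost-sure} attractors and a ranking argument to glue the subgame strategies while preserving the value, which goes beyond the bookkeeping you allude to. Your identification of this as the key lemma is correct, but the proposal underestimates what that lemma entails.
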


When Eve and Adam follow pure memoryless strategies $\sigma\in\Sigma_{\exists}$ and $\gamma\in\Sigma_{\forall}$ respectively, we obtain sub-arena $G_{\sigma,\gamma}$, which can be seen as the MC $\mc_{\sigma,\gamma}$.
We can reduce the winning probabilities $\arnpr^{v_I}_{\sigma,\gamma}$ to reachability probabilities in $G_{\sigma,\gamma}$ as follows. 
Given a reachability objective $\greach(T)$, $\arnpr^{v_I}_{\sigma,\gamma}(\greach(T)) = \Pr^{v_I}_{\sigma,\gamma}(\Reach(T))$. 
Given a parity objective $\gparity(p)$, , i.e. $(G,\gparity(p))$, we call $B\in\textit{BSCC}(\mc_{\sigma,\gamma})$ an \textit{even BSCC} if $min(p(B))$ is even, meaning intuitively the smallest priority of its vertices is even. \textit{Odd BSCCs} are defined analogously. Then $\arnpr^{v_I}_{\sigma,\gamma}(\gparity(p)) = \Pr^{v_I}_{\sigma,\gamma}(\Reach(B_E))$, where 
    $B_E = \bigcup_{min(p(B))\textrm{ is even}} B\in \textit{BSCC}(\mc_{\sigma,\gamma})$.

\begin{corollary}[Sufficiency of Pure Memoryless Strategies~\cite{chatterjee2008reduction}]\label{corollary:pms}
Let $G=((V,E), (\Veve, \Vadam, V_{R}), \arntrans)$ be a stochastic arena, $\greach(T)$ a reachability objective, and $\gparity(p)$ a parity objective. For all vertices $v\in V$, it holds:
\begin{itemize}
    \item 

    $\vE(\greach(T))(v) = \displaystyle{\sup_{\sigma\in\evestrategy}\inf_{\gamma\in\adamstrategy}\arnpr_{\sigma,\gamma}^{v}(\greach(T)) 
             = \sup_{\sigma\in\Sigma_{\exists}}\inf_{\gamma\in\Sigma_{\forall}}\Pr_{\sigma,\gamma}^{v}(\Reach(T))}$
    \item 
    $\vE(\gparity(p))(v) = \displaystyle{\sup_{\sigma\in\evestrategy}\inf_{\gamma\in\adamstrategy}\arnpr_{\sigma,\gamma}^{v}(\gparity(p)) 
             = \sup_{\sigma\in\Sigma_{\exists}}\inf_{\gamma\in\Sigma_{\forall}}\Pr^{v}_{\sigma,\gamma}(\Reach(B_E))}$
    \\ where $B_E = \bigcup_{min(p(B))\textrm{ is even}} B\in \textit{BSCC}(\mc_{\sigma,\gamma})$.
\end{itemize}

\end{corollary}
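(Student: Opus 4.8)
The plan is to reduce everything to two ingredients already available: pure memoryless determinacy (Theorem~\ref{th:determinacy}) and the reduction of fixed-strategy winning probabilities to Markov-chain reachability probabilities described just before the statement. The first equality in each bullet is merely the definition of the value function $\vE$, so the real content is the second equality: one must replace the suprema and infima over arbitrary strategies $\evestrategy,\adamstrategy$ by suprema and infima over the pure memoryless classes $\Sigma_{\exists},\Sigma_{\forall}$, and simultaneously rewrite the objective probability as a reachability probability in the induced Markov chain $\mc_{\sigma,\gamma}$.

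First I would establish, for either objective $\Phi\in\{\greach(T),\gparity(p)\}$, the identity $\sup_{\sigma\in\evestrategy}\inf_{\gamma\in\adamstrategy}\arnpr^{v}_{\sigma,\gamma}(\Phi)=\sup_{\sigma\in\Sigma_{\exists}}\inf_{\gamma\in\Sigma_{\forall}}\arnpr^{v}_{\sigma,\gamma}(\Phi)$ by a two-sided squeeze. For the $\ge$ direction, Theorem~\ref{th:determinacy} furnishes a pure memoryless $\sigma^{\star}\in\Sigma_{\exists}$ optimal for Eve, so that $\inf_{\gamma\in\adamstrategy}\arnpr^{v}_{\sigma^{\star},\gamma}(\Phi)=\vE(\Phi)(v)$; restricting the infimum to the smaller class $\Sigma_{\forall}\subseteq\adamstrategy$ can only increase it, whence $\sup_{\sigma\in\Sigma_{\exists}}\inf_{\gamma\in\Sigma_{\forall}}\arnpr^{v}_{\sigma,\gamma}(\Phi)\ge\vE(\Phi)(v)$. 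For the $\le$ direction I would invoke Adam's side of determinacy: there is a pure memoryless $\gamma^{\star}\in\Sigma_{\forall}$ with $\inf_{\sigma\in\evestrategy}\arnpr^{v}_{\sigma,\gamma^{\star}}(\Pi\setminus\Phi)=\vA(\Pi\setminus\Phi)(v)=1-\vE(\Phi)(v)$, and since the objectives are complementary this rewrites as $\sup_{\sigma\in\evestrategy}\arnpr^{v}_{\sigma,\gamma^{\star}}(\Phi)=\vE(\Phi)(v)$. Then for every $\sigma\in\Sigma_{\exists}$ we have $\inf_{\gamma\in\Sigma_{\forall}}\arnpr^{v}_{\sigma,\gamma}(\Phi)\le\arnpr^{v}_{\sigma,\gamma^{\star}}(\Phi)\le\vE(\Phi)(v)$, using $\gamma^{\star}\in\Sigma_{\forall}$ and $\sigma\in\evestrategy$, so taking the supremum over $\sigma\in\Sigma_{\exists}$ yields the matching upper bound.

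It then remains to pass from the objective probability to a reachability probability once both strategies are pure memoryless, which is exactly the reduction recalled before the statement. For fixed $\sigma\in\Sigma_{\exists}$ and $\gamma\in\Sigma_{\forall}$ the sub-arena is the Markov chain $\mc_{\sigma,\gamma}$, and $\arnpr^{v}_{\sigma,\gamma}(\greach(T))=\Pr^{v}_{\sigma,\gamma}(\Reach(T))$ holds directly. For the parity case I would apply Theorem~\ref{th:limitMarkov}: almost every path settles in some BSCC $B$ and visits exactly the states of $B$ infinitely often, so $\min(p(\textit{inf}(\pi)))$ is even if and only if $\pi$ is absorbed into an even BSCC; hence $\arnpr^{v}_{\sigma,\gamma}(\gparity(p))=\Pr^{v}_{\sigma,\gamma}(\Reach(B_E))$ with $B_E$ the union of even BSCCs. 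Substituting these identities into the pure memoryless sup-inf completes both bullets.

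The main obstacle I anticipate is the squeeze step: it is tempting to prove one inequality and invoke symmetry, but the two directions genuinely require the two different players' optimal strategies, and one must track carefully that restricting an infimum (respectively a supremum) to a subclass moves the value in the right direction. The parity reduction is routine given Theorem~\ref{th:limitMarkov}, the only point worth spelling out being that within a BSCC every state is visited infinitely often, so that $\textit{inf}(\pi)$ equals the BSCC itself rather than a proper subset, which is what lets the parity condition collapse to reaching an even BSCC.
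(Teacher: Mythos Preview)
Your argument is correct. The paper does not actually prove this corollary: it is stated as a known result, cited from~\cite{chatterjee2008reduction}, immediately after Theorem~\ref{th:determinacy} and the paragraph explaining how $\arnpr^{v}_{\sigma,\gamma}$ reduces to a reachability probability in $\mc_{\sigma,\gamma}$ once $\sigma,\gamma$ are pure memoryless. Your derivation simply spells out the standard route from those two ingredients: the two-sided squeeze using Eve's and Adam's pure memoryless optimal strategies from Theorem~\ref{th:determinacy}, followed by the Markov-chain reduction (trivial for reachability, via Theorem~\ref{th:limitMarkov} for parity). This is exactly the argument one would expect and is consistent with how the paper positions the corollary as an immediate consequence of what precedes it.
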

Therefore we consider only pure memoryless strategies in the sequel, unless stated otherwise.

\section{A Gadget for Transforming SPGs into SSGs}\label{sec:gadegts}

The aim of this paper is to reduce an SPG $(G,\gparity(p))$ to an SSG $(\wG, \greach(\vwin))$ such that the probability of reaching target vertex $\vwin$ in this SSG is related to the probability of winning in the SPG $(G,\gparity(p))$. 
As an important step toward this goal, we introduce in this section a gadget that expands each transition of $G$ while removing the priority function.

Let $G=((V,E), (\Veve, \Vadam, V_{R}), \arntrans)$ be a stochastic arena, $p:V\rightarrow\mathbb{N}$ be a priority function, and $(G,\gparity(p))$ be an SPG. 
Section~\ref{subsec:gadgets} presents the gadget enabling the reduction 
from SPG to SSG $(\wG, \greach(\vwin))$. We then analyze how probabilistic events in $\wG$ are related to those in $G$. Section~\ref{subsec:crosspath} presents a bound on the probability of reaching BSCCs in $\wG$. Section~\ref{subsec:wineven} provides a bound on the winning probability once a BSCC in $\wG$ is reached, while Section~\ref{subsec:interval} gives interval bounds on the winning probabilities in $\wG$ with regard to those in $G$.

\subsection{Gadget Construction}\label{subsec:gadgets}

To reduce the parity objective to a reachability objective, we transform the SPG $(G,\gparity(p))$ into the SSG $(\wG,\greach(\vwin))$ by means of a gadget, whose structure was defined by Chatterjee and Fijalkow in~\cite{Chatterjee_2011}. The specific values they introduce give a reduction from deterministic parity games to SSGs, but does not work for a reduction from SPGs to SSGs because the probability they use are not small enough. 
The intuition of the gadget is as follows: whenever a play visits a vertex with even priority in $G$, give a small but positive chance to reach a winning sink in $\wG$. Vertices with odd priority yield a small chance to reach a losing sink. Finally, to represent that smaller priorities have precedence over larger ones, the probability of reaching a sink from a vertex depends on the priority it is associated to. We introduce a monotonically decreasing function $\alpha$ for this purpose. 

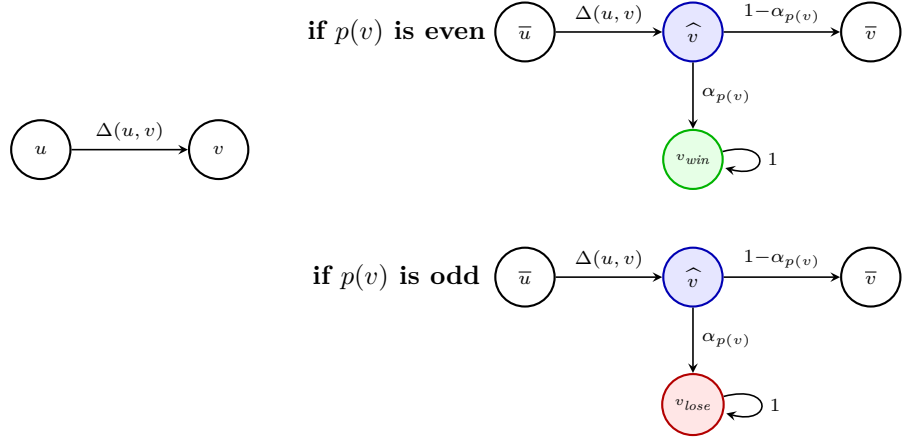
\begin{figure}
\centering
\begin{tikzpicture}[scale=0.13,
% every node/.append style={
%     font=\small,
%     inner sep=0pt,
%     minimum size=0.9cm
% },
% label/.style={
%     rectangle,
%     minimum size=0.2cm, 
%     inner sep=2pt
%     }
]
% \tikzstyle{every node}+=[inner sep=0pt, minimum size=1cm]
\node[plainNodes] (u) at (-4,-27) {$u$};
\node[plainNodes] (v) at (14,-27) {$v$};
\draw[->] (u) -- node[labelNodes, above]{$\arntrans(u,v)$} (v);

\node at (32,-15) {\textbf{if $p(v)$ is even}};
\node[plainNodes] (u_prime_1) at (45,-15) {$\ou$};
\node[hatNodes] (u_v_1) at (62,-15) {$\widehat{v}$};
\node[winNode] (win) at (62,-28) {$\vwin$};
\node[plainNodes] (v_prime_1) at (80,-15) {$\ov$};
\draw[->] (u_prime_1) -- node[labelNodes, above] {$\arntrans(u,v)$} (u_v_1);
\draw[->] (u_v_1) -- node[labelNodes, above] {$1{-}\alpha_{p(v)}$}(v_prime_1);
\draw[->] (u_v_1) -- node[labelNodes,right] {$\alpha_{p(v)}$} (win);
\draw[->] (win) edge [loop right] node[labelNodes] {$1$} ();

\node at (32,-40) {\textbf{if $p(v)$ is odd}};
\node[plainNodes] (u_prime_2) at (45,-40) {$\ou$};
\node[hatNodes] (u_v_2) at (62,-40) {$\widehat{v}$};
\node[loseNode] (lose) at (62,-53) {$\vlose$};
\node[plainNodes] (v_prime_2) at (80,-40) {$\ov$};
\draw[->] (u_prime_2) -- node[labelNodes, above] {$\arntrans(u,v)$} (u_v_2);
\draw[->] (u_v_2) -- node[labelNodes, above] {$1{-}\alpha_{p(v)}$} (v_prime_2);
\draw[->] (u_v_2) -- node[labelNodes,right] {$\alpha_{p(v)}$} (lose);
\draw[->] (lose) edge [loop right] node[labelNodes] {$1$} ();
\end{tikzpicture}
\caption{The gadgets for reducing SPG $G$ (left) to SSG $\wG$ (right)}
\label{fg:r2gadget}
\end{figure}

We obtain the stochastic arena $\wG$ by modifying $G$ as indicated in Figure~\ref{fg:r2gadget}. 
Each vertex $v$ in $G$ is duplicated in $\wG$ yielding vertices $\widehat{v}$ and $\ov$. A transition $\arntrans(u,v)$ in $G$ is replaced by first moving to $\widehat{v}$, which can then either evolve to a sink with probability $\alpha_{p(v)}$, or to the copy  $\ov$ with the complementary probability. Depending on $p$ being even or odd, the sink is $\vwin$ or $\vlose$.

Formally, for $U \subseteq V$, let $\oU = \{\ov\ |\ v\in U\}$, $\widehat{U}=\{\widehat{v}\ |\ v\in U\}$ and $\widetilde{U}= \oU \uplus \widehat{U}$.
We define the arena $\wG = ((\wV \uplus\{\vwin,\vlose\}, \wE), 
(\oVeve, \oVadam, \oV_R\uplus\widehat{V}\uplus\{\vwin,\vlose\} ), \warntrans)$
where the new edge set $\wE$ is as follows: $
\wE = \  \{(\ou,\widehat{v})\ |\ (u,v)\in E\} \ \uplus 
     \{ (\widehat{v}, \ov), (\widehat{v},\vwin)\ |\ v\in V,\ p(v)\ \text{is even}\} \ \uplus 
     \{ (\widehat{v}, \ov), (\widehat{v},\vlose)\ |\ v\in V,\ p(v)\ \text{is odd}\} \ \uplus 
     \{(\vwin,\vwin), (\vlose,\vlose)\}   $.

To define the new transition function $\warntrans$, let $\alpha:\mathbb{N}\rightarrow[0,1]$ where $\alpha_i$ represents the probability of entering the winning (resp. losing) sink before visiting a vertex with even (resp. odd) priority $i$. We give suitable values for $\alpha$ later, in Lemma~\ref{lm:ar-alpha} on page~\pageref{lm:ar-alpha}. 
Now, we define $\warntrans:\wV \times \wV \rightarrow [0,1]$ as follows:
\[
\warntrans(\wu,\ww) = 
\begin{cases}
\arntrans(u,w) & \text{if }\wu\in\oV, \ww\in\widehat{V} \\
1-\alpha_{p(u)} & \text{if }\wu\in\widehat{V}, \ww\in \oV, u=w \\
\alpha_{p(u)} & \text{if }\wu\in\widehat{V},p(u)\text{ is even, } \ww=\vwin \\
\alpha_{p(u)} & \text{if }\wu\in\widehat{V},p(u)\text{ is odd, } \ww=\vlose \\
1 & \text{if }\wu=\ww=\vwin\text{ or }\wu=\ww=\vlose \\
0 & \text{otherwise}
\end{cases}
\]
When the context is clear, we also address the SPG $(G,\gparity(p))$ and the SSG $(\wG,\greach(\vwin))$ with $G$ and $\wG$ respectively.

Since all new vertices $\widehat{V}\uplus\{\vwin,\vlose\}$ are random vertices, a strategy of either player in SPG $G$ is a strategy in SSG $\wG$ and vice versa. That is, there is a one-to-one relationship between strategies in $G$ and $\wG$. Hence, to keep notations simpler, we do not distinguish between strategies in $G$ and $\wG$. 

A pair of strategies $\sigma,\gamma\in\Sigma_{\exists}\times\Sigma_{\forall}$ for Eve and Adam in $G$ induces the sub-arena $G_{\sigma, \gamma}$. Similarly, we obtain $\wG_{\sigma, \gamma}$. 
    % \item 
If $U$ is an even or odd BSCC in SPG $G_{\sigma,\gamma}$, we denote with $\widetilde{U}$ what we call the associated \textit{even pBSCC} or \textit{odd pBSCC} in SSG $\wG_{\sigma,\gamma}$ respectively.  While those are not BSCCs, they correspond to the BSCC of the associated parity game, and we never consider the only true BSCCs of $\wG_{\sigma,\gamma}$, i.e. $\{\vwin\}$ and $\{\vlose\}$.

We continue with the example in Figure~\ref{fig:running-example-G} and Figure~\ref{fig:running-example-subG}. For the vertices \(v_0\), \(v_1\), and \(v_2\), we assign priority \(0\), and for each remaining vertex \(v_i\), $i\in\{3,4,5\}$, we assign priority \(i\). An illustration of the corresponding sub-arena \(\widetilde{G}_{\sigma,\gamma}\), induced by our gadget construction, is provided in Figure~\ref{fig:running-example-subtildeG}. 
Since we do not need to distinguish between different types of vertices, we use circles for all vertices.
In \(G_{\sigma,\gamma}\), the set \(\{v_3, v_4, v_5\}\) forms an odd BSCC. We refer to the set \(\{\widehat{v}_3, \overline{v}_3, \widehat{v}_4, \overline{v}_4, \widehat{v}_5, \overline{v}_5\}\) in \(\widetilde{G}_{\sigma,\gamma}\) as the associated odd pBSCCn with \(\widehat{v}_3\), \(\widehat{v}_4\), and \(\widehat{v}_5\) having outgoing transitions to either \(\vwin\) or \(\vlose\).

\begin{figure}
\centering
\begin{tikzpicture}[
    scale=1.8, 
]
    % \node[circle] (-1) at (0,-1.3) {};
    \node[hatNodes] (00) at (-1,0) {$\widehat{v}_0$};
    \node[plainNodes] (0) at (0,0) {$\ov_0$};
    \node[hatNodes] (01) at (1,0) {$\widehat{v}_1$};
    \node[plainNodes] (1) at (2,0) {$\ov_1$};
    \node[hatNodes] (02) at (3,0) {$\widehat{v}_2$};
    \node[plainNodes] (2) at (4,0) {$\ov_2$};
    \node[hatNodes] (03) at (5,0) {$\widehat{v}_4$};
    \node[plainNodes] (3) at (6,0) {$\ov_4$};

    \node[hatNodes] (05) at (6,2) {$\widehat{v}_5$};
    \node[plainNodes] (5) at (5,2) {$\ov_5$};
    \node[hatNodes] (04) at (4,2) {$\widehat{v}_3$};
    \node[plainNodes] (4) at (3,2) {$\ov_3$};

    \node[winNode] (7) at (1,-0.8) {$\vwin$};
    \node[loseNode] (8) at (4,1.3) {$\vlose$};
    
    \draw[->] (0,-0.5) -- (0);
    \draw[->] (00) -- node[tinylabelNodes,below] {$1{-}\alpha_0$} (0);
    \draw[->] (0) -- node[tinylabelNodes,below] {$0.1$} (01);
    \draw[->] (01) -- node[tinylabelNodes,below] {$1{-}\alpha_0$} (1);
    \draw[->] (1) -- node[tinylabelNodes,below] {$0.1$} (02);
    \draw[->] (02) -- node[tinylabelNodes,below] {$1{-}\alpha_0$} (2);
    \draw[->] (2) -- node[tinylabelNodes,below] {$0.1$} (03);
    \draw[->] (03) -- node[tinylabelNodes, below] {$1{-}\alpha_4$} (3);
    \draw[->] (3.east) -- ++(0.5,0) |- node[tinylabelNodes, right, pos=0.3] {$1$} (05.east);
    
    \draw[->] (05) -- node[tinylabelNodes, above] {$1{-}\alpha_5$} (5);
    \draw[->] (5) -- node[tinylabelNodes, above] {$0.5$} (04);
    \draw[->] (04) -- node[tinylabelNodes, above] {$1{-}\alpha_3$} (4);
    \draw[->] (4) -- ++(0,0.5) -| node[tinylabelNodes, above, pos=0.3] {$1$} (05);
    \draw[->] (5) -- node[tinylabelNodes, right, pos=0.8] {$0.5$} (03);
    \draw[->] (04) -- node[tinylabelNodes, left] {$\alpha_3$} (8);
    \draw[->] (05) |- node[tinylabelNodes, below, pos=0.9] {$\alpha_5$} (8);
    
    \draw[->] (0) -- ++(0,0.3) -| node[tinylabelNodes,above, pos=0.2]{$0.9$} (00.north east);
    \draw[->] (1) -- ++(0,0.55) -| node[tinylabelNodes,above, pos=0.35]{$0.9$} (00);
    \draw[->] (2) -- ++(0,0.8) -| node[tinylabelNodes,above, pos=0.35]{$0.9$} (00.north west);
    % \draw (7) edge[loop below] node[tinylabelNodes] {$1$} (7);
    \draw[->] (00) |- node[tinylabelNodes, above, pos=0.9]{$\alpha_0$} (7.west);
    \draw[->] (01) -- node[tinylabelNodes, right]{$\alpha_0$} (7.north);
    \draw[->] (02) |- node[tinylabelNodes, above, pos=0.8]{$\alpha_0$} (7.north east);
    \draw[->] (03) |- node[tinylabelNodes, above, pos=0.8]{$\alpha_4$} (7.south east);
    \end{tikzpicture}
    \caption{The sub-arena $\wG_{\sigma,\gamma}$ induced by the gadget}
    % , where $\wPr_{\sigma,\gamma}^{\ov}(\crossPath)$ is minimized}
    \label{fig:running-example-subtildeG}
\end{figure}
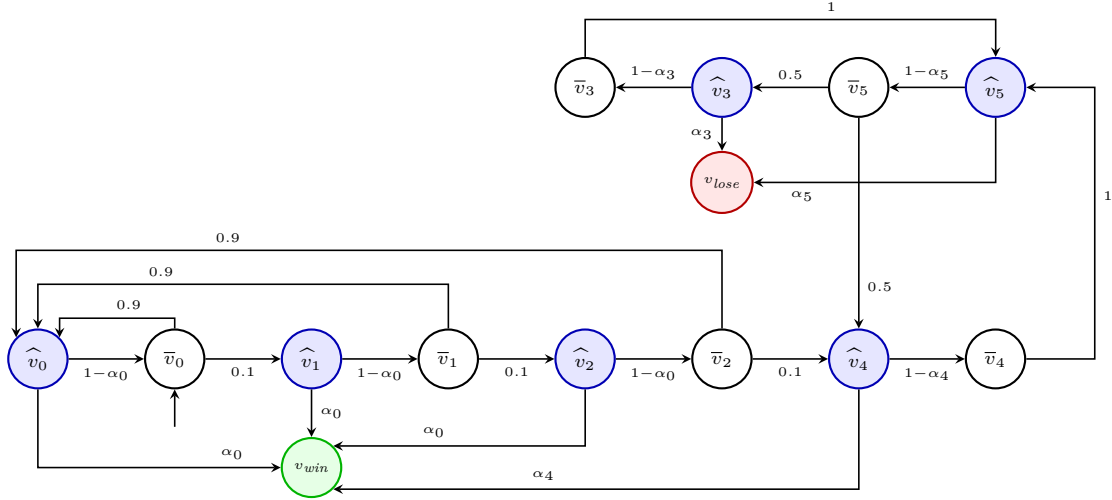

\subsection{Before Entering a pBSCC in SSG $\wG$}
\label{subsec:crosspath}

Recall that when Eve and Adam follow pure memoryless strategies $\sigma \in \Sigma_{\exists}$ and $\gamma \in \Sigma_{\forall}$, the resulting sub-arenas $G_{\sigma,\gamma}$ and $\wG_{\sigma,\gamma}$ can be viewed as finite Markov chains. We first focus on what happens before a play reaches a pBSCC in $\wG_{\sigma,\gamma}$. 
Specifically, we give a lower bound on the probability of reaching an entry state of a pBSCC without entering a winning or losing sink.
Later, in Lemma~\ref{lm:ar-alpha}, we use this bound to determine a suitable value for \(\alpha_0\).

Intuitively, we show that the probability of entry is minimized in a classical worst-case scenario extending the sub-arena \(\{v_0, v_1, v_2\}\) in Figure~\ref{fig:running-example-subG}, and \(\{\widehat{v}_0, \overline{v}_0, \widehat{v}_1, \overline{v}_1, \widehat{v}_2, \overline{v}_2\}\) in Figure~\ref{fig:running-example-subtildeG}. 
More precisely, we consider an original sub-arena \(G_{\sigma,\gamma}\), with \(n\) states arranged in a sequence (as \(\{v_0, v_1, v_2\}\) in Figure~\ref{fig:running-example-subG}) before reaching a BSCC. Each state has a minimal probability \(\pM\) of progressing to the next state and a maximal probability \(1 - \delta_{\min}\) of returning to the initial state. All these states are assigned parity value \(0\). Upon applying our gadget construction, we introduce corresponding random states (\(\{\widehat{v}_0, \widehat{v}_1, \widehat{v}_2\}\) in Figure~\ref{fig:running-example-subtildeG}) that have the highest probability \(\alpha_0\) of transitioning to the winning sink \(\vwin\).

In the following, let $(G,\gparity(p))$ be an SPG, and $(\wG,\greach(\vwin))$ be its associated SSG. For all strategy pairs $\sigma,\gamma\in\Sigma_{\exists}\times\Sigma_{\forall}$, let $\wPr_{\sigma,\gamma}^{\ov}(\crossPath)$ denote the probability for a play starting from $\ov\in\oV$ to reach a pBSCC in $\wG$. 
Note that we never consider $\widehat{v} \in \widehat{V}$ as the starting vertex.

\begin{restatable}{lemma}{lowercross}\label{lm:formal-crosspath}
 For all strategy pairs $\sigma,\gamma\in\Sigma_{\exists}\times\Sigma_{\forall}$, for all $v\in V$, it holds:
\begin{equation*}
    \wPr^{\ov}_{\sigma,\gamma}(\crossPath) \ge \frac{(1-x_0)x_0^n}{(1-x_0)-(1-x_0^n)x_1}
\end{equation*}
where $n = |V|$, $x_0 = \pM(1-\alpha_0)$, $x_1 =(1-\pM)(1-\alpha_0)$, and $\pM = \min\limits_{u,v\in V}\{\arntrans(u,v) \mid \arntrans(u,v)>0\}$.
\end{restatable}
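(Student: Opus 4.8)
The plan is to collapse the gadget's intermediate $\widehat{\cdot}$-vertices, identify a one-dimensional worst-case chain, compute $\crossPath$ there in closed form, and finally argue that this value lower-bounds $\crossPath$ in every sub-arena.

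First I would eliminate the $\widehat{v}$-vertices. Fixing $\sigma,\gamma$ and viewing $\wG_{\sigma,\gamma}$ as a finite Markov chain, a transient $\ov$ moves to $\ow$ with probability $\arntrans(v,w)\,(1-\alpha_{p(w)})$ and is otherwise absorbed into a sink with the complementary probability. Since $\alpha$ is monotonically decreasing, $\alpha_{p(w)}\le\alpha_0$ for every $w$, so each per-step survival probability is at least $\arntrans(v,w)\,(1-\alpha_0)$. As $\wPr^{\ov}(\crossPath)$ is a reachability probability (Theorem~\ref{th:reachability}) and hence monotonically increasing in the per-edge survival probabilities, it suffices to lower-bound the pessimal case in which every transient step survives with probability $1-\alpha_0$ and dies with probability $\alpha_0$ --- i.e. where every transient vertex carries the minimal priority $0$.

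Next I would fix the worst-case arena as a single line $v_0\to v_1\to\dots\to v_{n-1}\to(\text{pBSCC})$ in which each forward move takes the least possible probability $\pM$ and the complementary mass resets to the start $v_0$. After collapsing, this is the chain with forward-and-survive probability $x_0=\pM(1-\alpha_0)$, reset-and-survive probability $x_1=(1-\pM)(1-\alpha_0)$, and death probability $\alpha_0=1-x_0-x_1$. I would then compute $q_0\triangleq\wPr^{\ov_0}(\crossPath)$ by a renewal argument over ``rounds'' that restart at $\ov_0$: one round reaches the pBSCC with probability $x_0^{\,n}$, resets with probability $x_1(1-x_0^{\,n})/(1-x_0)$, and dies otherwise, so that $q_0 = x_0^{\,n} + \frac{x_1(1-x_0^{\,n})}{1-x_0}\,q_0$, which solves to exactly $\frac{(1-x_0)\,x_0^{\,n}}{(1-x_0)-(1-x_0^{\,n})\,x_1}$.

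I expect the last step --- proving that this line truly minimizes $\crossPath$, so that $\wPr^{\ov}_{\sigma,\gamma}(\crossPath)\ge q_0$ in general --- to be the main obstacle. The intended route is a monotonicity/domination argument on the linear fixed-point system defining the crossPath probabilities $h(\ov)$, showing that three separate pessimizations can each only decrease $h$: lowering every forward transition to $\pM$; redirecting all non-advancing probability mass to the start $v_0$ rather than to an intermediate vertex (which can only discard more progress); and stretching the transient region to the maximal length $n=|V|$ (legitimate since $\crossPath$ is decreasing in the length, so over-counting the number of transient vertices only weakens the bound). The two delicate points are that reset-to-start really is the pessimal redirection of the failure mass --- intuitively because it maximizes the number of forward steps that must be re-traversed, hence the cumulative death risk before the target --- and that one genuinely must account for repeated attempts: a naive ``follow one shortest path and survive'' estimate yields only $x_0^{\,n}$, which is strictly smaller than $q_0$ and therefore too weak.
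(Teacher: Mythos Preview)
Your first two moves---collapsing the $\widehat v$-vertices and replacing every $\alpha_{p(w)}$ by $\alpha_0$---are exactly the paper's Steps~1--2, and your renewal computation of $q_0$ on the line chain is correct and yields the stated bound. The problem is where you already locate it: the reduction to the line chain.

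Your three-pessimization plan has a genuine gap. The notions ``forward transition'' and ``non-advancing mass'' are not intrinsic to an arbitrary transient chain; the only natural way to orient edges is by the reachability values $h(\ov)$ themselves, which makes ``redirect the non-advancing mass to the start $v_0$'' circular---you do not yet know which vertex has the smallest $h$, nor that there is a single such vertex playing the role of a start. A second issue you do not touch is that in general several transient vertices may have direct edges into the pBSCC region (multiple ``frontier'' vertices); none of your three moves explains how to collapse these to a single entry point while only decreasing $h$.

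The paper handles both points, but not by further pessimizing toward the line chain. First it \emph{defrontierizes}: iteratively pick a frontier vertex $v_f$, reroute its edge $(v_f,v_b)$ to a pivot $v_p$ from which $v_b$ is still reachable through a different frontier, and show by a path-decomposition argument that each such rerouting can only decrease $\Pr^{v_0}(\Reach(v_b))$. This leaves a single frontier vertex with $\mcprob(v_f,v_b)=\pM(1-\alpha_0)$, uniform death probability $\alpha_0$ elsewhere, and every surviving transition $\ge\pM(1-\alpha_0)$. Second, rather than transforming further, the paper proves a general lower-bound lemma directly on any such chain: order the transient vertices $p_1\le\dots\le p_m\le p_{m+1}$ by their reachability to $v_b$ (this ordering is the non-circular stand-in for ``forward''), note that each $v_i$ must send at least probability $s=\pM(1-\alpha_0)$ to some $v_j$ with $j>i$, hence $p_i\ge t\,p_1+s\,p_{i+1}$ with $t=1-\alpha_0-s$, and telescope to obtain $p_1\ge q_0$. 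The line chain is simply where equality holds. So the ordering-by-reachability device is the missing ingredient that turns your ``reset to the worst vertex'' intuition into a proof.
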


\textbf{Sketch of Proof:}
We fix an arbitrary strategy pair $\sigma,\gamma\in\Sigma_{\exists}\times\Sigma_{\forall}$, and analyze the corresponding MC~$\wG_{\sigma,\gamma}$. We simplify the MC while either preserving or under-approximating the probability of reaching a pBSCC in $\wG_{\sigma,\gamma}$. These steps merge all pBSCCs into a sink $v_b$, eliminate auxiliary states to simplify the MC, 
increase all values of $\alpha$, and restructure transitions so that only one designated vertex can reach the sink $v_b$ directly. We denote the resulting MC with $\wG_4$.
We then derive a lower bound on the probability of reaching $v_b$ in $\wG_4$, 
which provides a reachability lower bound in a template MC with absorbing sinks and bounded transition probabilities. 
As the reachability probabilities of $v_b$ in $\wG_4$ underapproximate those in $\wG_{\sigma,\gamma}$, this yields the desired lower bound on $\wPr^{\ov}_{\sigma,\gamma}(\crossPath)$. 
The full proof of this lemma can be found in Appendix~\ref{app-subsec:crosspath}.

\subsection{Inside a pBSCC in SSG \wG}\label{subsec:wineven}
We now focus on what happens after a play reaches a pBSCC in sub-arena $\wG_{\sigma,\gamma}$. Specifically, we give a lower bound on the probability of reaching the winning sink after reaching an even pBSCC, and dually an upper bound on the probability of reaching the winning sink after reaching an odd pBSCC.

The lower bound is attained in the MC shown in Figure~\ref{fig:achievable2}, where $k$ is an even parity value.
There are $2n+1$ states in a line, and winning and losing sinks. Each white state has maximal probability $1-\delta_{\min}$ to return to the initial state, and otherwise proceeds to the next blue state.
Each blue state, except $\widehat{v}'$, can with probability $\aKP$ go to the losing sink, and otherwise proceeds to the next white state. The special state $\widehat{v}'$ goes with probability $\aK$ to $\vwin$, and otherwise proceeds to $\ov$.
Unlike the case with \(\{\widehat{v}_0, \overline{v}_0, \widehat{v}_1, \overline{v}_1, \widehat{v}_2, \overline{v}_2\}\) in Figure~\ref{fig:running-example-subtildeG}, this MC cannot be obtained by applying our gadget on some sub-arena $G_{\sigma,\gamma}$, and hence this bound is not guaranteed to be tight. More precisely, the outgoing transitions of $\widehat{v}$ indicate that $v$ has an odd parity value, while $\widehat{v}'$ suggests otherwise. 
The upper bound is obtained by considering the same MC, where $k$ is an odd parity value.
Later, in Lemma~\ref{lm:ar-alpha}, we use these two bounds to find suitable values for all $\alpha_k$ with $k\in\mathbb{N}$.

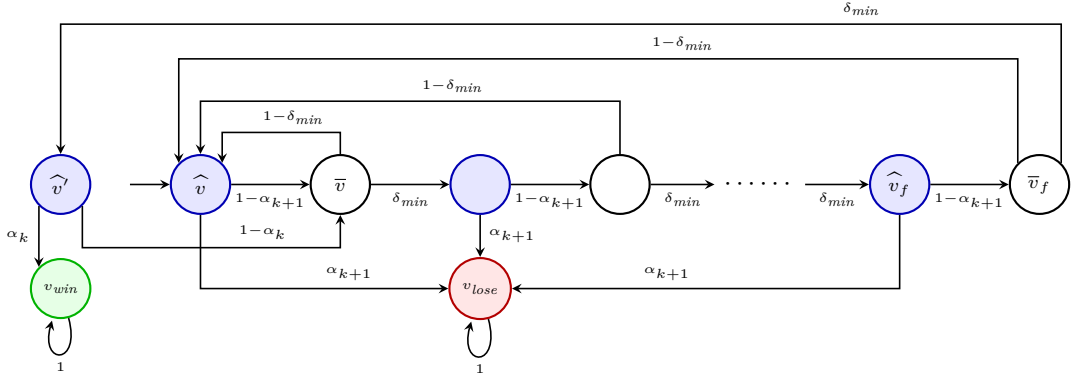
\begin{figure}[]
    \centering
    \begin{tikzpicture}[scale=0.92]
    \node[circle] (-1) at (-3.2,0) {};
    \node[hatNodes] (00) at (-2,0) {$\widehat{v}$};
    \node[plainNodes] (0) at (0,0) {$\ov$};
    \node[hatNodes] (01) at (2,0) {};
    \node[plainNodes] (1) at (4,0) {}; 
    \node[rectangle] (5) at (6,0) {$\cdots\cdots$};
    \node[hatNodes] (06) at (8,0) {$\widehat{v}_f$};
    \node[plainNodes] (6) at (10,0) {$\ov_f$};
    \node[hatNodes] (add) at(-4,0) {$\widehat{v}'$};
    \node[winNode] (8) at (-4,-1.5) {$\vwin$};
    \node[loseNode] (7) at (2,-1.5) {$\vlose$};

    \draw[->] (-1) -- (00);
    \draw[->] (00) -- node[labelNodes,below,font=\tiny] {$1{-}\aKP$} (0);
    \draw[->] (0) -- node[labelNodes,below,font=\tiny] {$\pM$} (01);
    \draw[->] (01) -- node[labelNodes,below, font=\tiny] {$1{-}\aKP$} (1);
    \draw[->] (1) -- node[labelNodes,below,font=\tiny] {$\pM$} (5);
    \draw[->] (5) -- node[labelNodes,below,font=\tiny] {$\pM$} (06);
    \draw[->] (06) -- node[labelNodes,below, font=\tiny] {$1{-}\aKP$} (6);

    % \draw[->] (6) -- ++(1,0) -- node[labelNodes,right] {$\pM\aK$} (8);
    
    \draw[->] (00) |- node[labelNodes,above, pos=0.8, font=\tiny] {$\aKP$} (7);
    \draw[->] (01) -- node[labelNodes,right, pos=0.5, font=\tiny] {$\aKP$} (7);
    \draw[->] (06) |- node[labelNodes, above,pos=0.8, font=\tiny]  {$\aKP$} (7.east);
    % \draw[->] (6) |- node[labelNodes,above, pos=0.8] {$(1{-}\pM)\aKP$} (7.south east);

    \draw[->] (0) -- ++(0,0.75) -| node[labelNodes, above,pos=0.2, font=\tiny] {$1{-}\pM$} (00.north east);
    \draw[->] (1) -- ++(0,1.2) -| node[labelNodes, above,pos=0.2, font=\tiny] {$1{-}\pM$} (00);
    \draw[->] (6.north west) -- ++(0,1.5) -| node[labelNodes,above,pos=0.2, font=\tiny] {$1{-}\pM$} (00.north west);
    \draw[->] (6.north east) -- ++(0, 2) -| node[labelNodes, above, pos=0.1, font=\tiny] {$\pM$} (add);
    \draw[->] (add.south east) -- ++(0,-0.6) -| node[above, pos=0.35, labelNodes, font=\tiny]{$1{-}\aK$} (0);
    \draw[->] (add.south west) -- node[labelNodes, left, font=\tiny] {$\aK$} (8.north west);
    \draw (8) edge[loop below] node[labelNodes, font=\tiny] {$1$} (8);
    \draw (7) edge[loop below] node[labelNodes, font=\tiny] {$1$} (7);
    \end{tikzpicture}
    \caption{The MC where the lower bound on minimum $\winEven$ probability is attained}
    \label{fig:achievable2}
\end{figure}

Let $U$ be an even BSCC with smallest priority $k$ in $G_{\sigma,\gamma}$ and $\wU$ its associated pBSCC in $\wG_{\sigma,\gamma}$. Let $\wPr_{\sigma,\gamma}^{k,\wU}(\winEven)$ denote the minimum probability of reaching the winning sink after reaching $\wU$. That is, $\wPr_{\sigma,\gamma}^{k,\wU}(\winEven) = \min\{\wPr_{\sigma,\gamma}^{\wv}(\Reach(\vwin)) \mid \wv\in\wU\}$. We denote it $\wPr_{\sigma,\gamma}^{k,\wU}(\winEven)$ when $\wU$ is clear from context. Analogously, given an odd BSCC $U$ with smallest priority $k$ in $G_{\sigma,\gamma}$, we use $\wPr^{k}_{\sigma,\gamma}(\winOdd)$ to denote the maximum probability of reaching the winning sink after reaching $\wU$.

\begin{restatable}{lemma}{lowerwineven}\label{lm:formal-wineven}
For all strategy pairs $\sigma,\gamma\in\Sigma_{\exists}\times\Sigma_{\forall}$, for all even $k$, it holds:
\begin{equation}
\wPr_{\sigma,\gamma}^{k}(\winEven) \ge (1-\alpha_{k+1}) \cdot \frac{(1-x_2)\cdot x_2^{n-1}\cdot x_4}{1-(x_2+x_3) + x_5\cdot x_2^{n} + t\cdot x_2^{n-1} - x_5 \cdot x_2^{n-1}} \notag
\end{equation}
and for all odd $k$, it holds:
\begin{equation}
\wPr_{\sigma,\gamma}^{k}(\winOdd) \leq 1 - (1-\alpha_{k+1}) \cdot \frac{(1-x_2)\cdot x_2^{n-1}\cdot x_4}{1-(x_2+x_3) + x_5\cdot x_2^{n} + t\cdot x_2^{n-1} - x_5 \cdot x_2^{n-1}}\notag
\end{equation}
where  $n = |V|$, $x_2 = \pM(1-\alpha_{k+1})$, $x_3 = (1-\pM)(1-\alpha_{k+1})$, $x_4 = \pM\alpha_{k}$, $x_5 = \pM(1-\alpha_{k})+(1-\pM)(1-\alpha_{k+1})$, and $\pM$ is as before. 
\end{restatable}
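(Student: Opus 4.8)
The plan is to fix an arbitrary strategy pair $\sigma,\gamma\in\Sigma_{\exists}\times\Sigma_{\forall}$ and analyze the finite Markov chain $\wG_{\sigma,\gamma}$ restricted to the even pBSCC $\wU$ together with the two sinks $\vwin$ and $\vlose$. Since $\wU$ arises from applying the gadget to an even BSCC $U$ of $G_{\sigma,\gamma}$ with smallest priority $k$, each $\widehat{v}\in\wU$ escapes to $\vwin$ (if $p(v)$ is even) or $\vlose$ (if $p(v)$ is odd) with probability $\alpha_{p(v)}$. Following the same strategy as in the proof of Lemma~\ref{lm:formal-crosspath}, I would transform this chain through a sequence of steps, each of which provably does not increase the probability of reaching $\vwin$, until it becomes the template chain of Figure~\ref{fig:achievable2}; the closed form is then obtained by solving the reachability equations of that template via Theorem~\ref{th:reachability}, and because each step under-approximates the winning probability, the resulting value is a valid lower bound on the true minimum.

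The under-approximating transformations exploit the monotonicity of $\alpha$ together with the fact that all priorities in $U$ are at least $k$. Because $k$ is even and minimal, at least one vertex $v^\ast$ has $p(v^\ast)=k$ and hence a winning escape of probability $\alpha_k$; this becomes the state $\widehat{v}'$ of the template. Every remaining vertex I would turn into a pure losing escape of probability $\alpha_{k+1}$: for an odd-priority vertex this only enlarges its losing escape (its priority is at least $k+1$, so its original escape $\alpha_{p(v)}\le\alpha_{k+1}$), while for any other even-priority vertex it deletes a winning escape and installs a losing one, and both changes can only lower the probability of reaching $\vwin$. After these steps exactly one winning escape $\alpha_k$ remains and all other $n-1$ escapes lead to $\vlose$ with probability $\alpha_{k+1}$; the quantities $x_2,x_3,x_4,x_5$ record the resulting compound forward, return, win, and loss probabilities. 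Finally, as in the crosspath argument, I would restructure the chain into the worst-case line of Figure~\ref{fig:achievable2}: the return edges of the white states all point back to the initial vertex with maximal probability $1-\pM$, the forward edges carry minimal probability $\pM$, and the unique winning escape $\widehat{v}'$ is placed at the far end so that it is reached only after the whole line has been traversed. Padding the line up to $n=|V|$ states is sound because lengthening it can only make $\vwin$ harder to reach, and the minimum over starting states is attained at the initial vertex $\widehat{v}$.

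I would then compute the value by letting $w_i$ denote the probability of reaching $\vwin$ from each state of the line and solving the induced linear system of Theorem~\ref{th:reachability}. This is essentially a one-dimensional recurrence whose behavior is governed by the ratio $x_2=\pM(1-\alpha_{k+1})$, and summing the resulting geometric contributions yields the stated fraction with prefactor $1-\alpha_{k+1}$, which accounts for surviving the losing escape of the initial blue state before the traversal begins. The main obstacle is not the transformation bookkeeping but this closed-form evaluation: the return edges all point back to the single initial vertex, so the system is not a plain birth–death chain but is globally coupled, and one must carefully track the boundary contribution of $\widehat{v}'$ (the term carrying $x_4=\pM\alpha_k$) in order to collapse the solution into the compact denominator.

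For odd $k$ the bound follows by duality. Since every $\widehat{v}$ has a strictly positive escape and the only true BSCCs of $\wG_{\sigma,\gamma}$ are $\{\vwin\}$ and $\{\vlose\}$, Theorem~\ref{th:limitMarkov} gives $\wPr^{\wv}_{\sigma,\gamma}(\Reach(\vwin))+\wPr^{\wv}_{\sigma,\gamma}(\Reach(\vlose))=1$ for every $\wv\in\wU$, so maximizing the probability of reaching $\vwin$ equals one minus the minimum probability of reaching $\vlose$. An odd BSCC with minimal priority $k$ has its single minimal escape going to $\vlose$ with probability $\alpha_k$, which is precisely the mirror image of the even case with $\vwin$ and $\vlose$ exchanged. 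Applying the even-case argument to the losing sink therefore lower-bounds $\min_{\wv}\wPr^{\wv}_{\sigma,\gamma}(\Reach(\vlose))$ by the same fraction, and complementing yields the claimed upper bound on $\wPr^{k}_{\sigma,\gamma}(\winOdd)$.
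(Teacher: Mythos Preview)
Your overall strategy matches the paper's: restrict to the pBSCC, perform under-approximating transformations that leave exactly one escape to $\vwin$ of weight $\alpha_k$ while all other escapes go to $\vlose$ with weight $\alpha_{k+1}$, extract a closed-form lower bound, append the factor $(1-\alpha_{k+1})$ to cover starting vertices in $\widehat{V}$, and obtain the odd case by complementing. The transformation steps you outline correspond to the paper's four steps (eliminate $\widehat{V}$; redirect even escapes with priority $>k$ to $\vlose$; raise all $\alpha_n$ with $n\ge k+1$ to $\alpha_{k+1}$; ``defrontierize'' all but one transition to $\vwin$).

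The one genuine difference is in how the final bound is extracted. You propose to \emph{structurally} transform the reduced chain into the line of Figure~\ref{fig:achievable2} and then solve that chain explicitly. The paper does not do this, neither here nor in the $\crossPath$ proof you invoke: after the four reductions it applies a general-purpose lemma (Lemma~\ref{lm:lower_bound_general}) which, for \emph{any} chain with uniform escape $\alpha$, a single ``frontier'' vertex to the good sink, and all other transition probabilities bounded below by $s$, derives the bound purely algebraically by ordering the states by their reachability probability and chaining the inequalities $p_i\ge t\,p_1+s\,p_{i+1}$. The line chain is then exhibited only as a witness that the bound is tight, not as the endpoint of a transformation. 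Your route would work too, but the step ``restructure into a line while only decreasing the winning probability'' is not something the $\crossPath$ argument actually supplies, and justifying it directly (e.g.\ that rerouting all returns to a single worst state and pushing the unique winning escape to the far end are each monotone) is extra work that the paper's ordering lemma sidesteps. Conversely, your approach has the pedagogical advantage of making the extremal instance explicit from the start.
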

\textbf{Sketch of Proof:}
The proof follows the same structure as the one for Lemma~\ref{lm:formal-crosspath}, applied to a BSCC in $\mc_{\sigma,\gamma}$. 
 We apply a similar four-step transformation and obtain a simplified MC where 
we can directly derive an upper bound. The lower bound comes as the dual. 
he full proof of this lemma can be found in Appendix~\ref{app-subsec:wineven}.

\subsection{Range of Winning Probabilities in the SSG}\label{subsec:interval}

We now relate the winning probabilities in the constructed SSG $\wG$ to the original SPG $G$. 
Intuitively, with a fixed strategy pair $\sigma,\gamma\in\Sigma_{\exists}\times\Sigma_{\forall}$, the value $\widetilde{\arnpr}_{\sigma,\gamma}$ of the SSG $\wG$ falls into a range around the value $\arnpr_{\sigma,\gamma}$ of the SPG $G$, and the range size depends on the probabilities $\crossPath$, $\winEven$ and $\winOdd$. 
\begin{restatable}{lemma}{interval}\label{lm:interval}
Let $x,y\in(0,1)$ such that for all even $k$ $\wPr_{\sigma,\gamma}^\ov(\crossPath) > x$ and $\wPr_{\sigma,\gamma}^{k}(\winEven) \ge y$, and for all odd $k$, $\wPr_{\sigma,\gamma}^{k}(\winOdd)\leq 1-y$, then it holds:
\[
y\cdot \arnpr^{v}_{\sigma,\gamma} - y+x\cdot y \  \leq \
\oarnpr^{\ov}_{\sigma,\gamma} \ \leq \
\arnpr^{v}_{\sigma,\gamma} + 1-x \cdot y
\]    
\end{restatable}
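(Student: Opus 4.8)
The plan is to work entirely inside the finite Markov chains $G_{\sigma,\gamma}$ and $\wG_{\sigma,\gamma}$ obtained by fixing the pure memoryless pair $\sigma,\gamma$, and to compare reachability probabilities between them. Write $p := \arnpr^{v}_{\sigma,\gamma}$, which by the parity-to-reachability reduction (Corollary~\ref{corollary:pms}) together with Theorem~\ref{th:limitMarkov} equals the probability of reaching an even BSCC from $v$ in $G_{\sigma,\gamma}$ (so the probability of reaching an odd BSCC is $1-p$), and write $\widetilde p := \oarnpr^{\ov}_{\sigma,\gamma}$, the probability of reaching $\vwin$ from $\ov$ in $\wG_{\sigma,\gamma}$. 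Since every pBSCC leaks to a sink with positive probability, the only genuine BSCCs of $\wG_{\sigma,\gamma}$ are $\{\vwin\}$ and $\{\vlose\}$; hence a play is almost surely absorbed and I may use $\wPr(\Reach(\vlose)) = 1-\widetilde p$ throughout.

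First I would decompose $\widetilde p$ by the first phase of a play from $\ov$: it is either (i) absorbed into a sink during the transient \emph{crossing} phase, or (ii) reaches the entry of a (unique, first) pBSCC $\wU$ and is later absorbed from inside it. Writing $C := \wPr^{\ov}_{\sigma,\gamma}(\crossPath)$ and letting $R_E$ (resp. $R_O$) be the probability of reaching an even (resp. odd) pBSCC first, one has $R_E + R_O = C > x$. Conditioning on the entry vertex and invoking $\wPr^{k}(\winEven)\ge y$ for even pBSCCs and $\wPr^{k}(\winOdd)\le 1-y$ for odd pBSCCs, each even pBSCC contributes at least $y$ times its reaching probability to $\widetilde p$, while each odd pBSCC contributes at least $y$ times its reaching probability to $1-\widetilde p$. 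Dropping the nonnegative crossing-absorption mass this yields the two clean inequalities $\widetilde p \ge y\,R_E$ and $1-\widetilde p \ge y\,R_O$.

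The technical heart — and the step I expect to be the main obstacle — is relating $R_E$ and $R_O$, which live in $\wG_{\sigma,\gamma}$, back to the parity value $p$ in $G_{\sigma,\gamma}$. I would use a path-projection (``shadowing'') argument: every absorption-free $\wG$-prefix $\ov_0\widehat v_1\ov_1\cdots\widehat v_m$ reaching a pBSCC projects to the $G$-prefix $v_0\cdots v_m$ reaching the corresponding BSCC, and its $\wG$-probability equals $\Pr_{G_{\sigma,\gamma}}(v_0\cdots v_m)$ times attenuation factors $\prod_i (1-\alpha_{p(v_i)}) \le 1$. Summing over all first-entry prefixes into even (resp. odd) BSCCs gives $R_E \le p$ and $R_O \le 1-p$. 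The delicate points are the first-passage decomposition and the bijection between BSCCs of $G_{\sigma,\gamma}$ and pBSCCs of $\wG_{\sigma,\gamma}$, so that no prefix is double-counted or omitted.

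Finally I would assemble the bounds. For the \emph{lower} bound, combine $R_E = C - R_O$ with $C>x$ and $R_O\le 1-p$ to get $R_E > x-(1-p) = p-1+x$, whence $\widetilde p \ge y\,R_E > yp - y + xy$. The \emph{upper} bound is the exact dual: from $R_O = C - R_E > x-p$ (using $R_E \le p$) and $1-\widetilde p \ge y\,R_O$ we obtain $1-\widetilde p > y(x-p)$, i.e. $\widetilde p < 1 + yp - xy$, which is $\le p + 1 - xy$ because $y < 1$ forces $yp \le p$. This establishes both claimed inequalities (in fact as strict inequalities).
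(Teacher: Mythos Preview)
Your proof is correct and follows essentially the same approach as the paper: the paper's proof (Appendix~\ref{app-subsec:interval}) uses precisely your ingredients $R_E+R_O=C$, $R_E\le p$, $R_O\le 1-p$ (stated as observations, without your explicit path-projection justification) and $\widetilde p\ge y\cdot R_E$ to obtain the lower bound identically. For the upper bound the paper decomposes $\widetilde p$ directly as $(1-C)+R_E\cdot 1+R_O\cdot\wPr^{k}(\winOdd)$ rather than bounding $1-\widetilde p\ge y\,R_O$ as you do, but this is a cosmetic rearrangement of the same inequalities.
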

The proof of this lemma is quite calculatory, and can be found in Appendix~\ref{app-subsec:interval}.

\section{Reducing SPGs to SSGs}\label{sec:reduction}

We now present the direct reduction from SPGs to SSGs. 
Let $G=((V,E), (\Veve, \Vadam, V_{R}), \arntrans)$ be a stochastic arena, $p:V\rightarrow\mathbb{N}$ be a priority function, and $(G,\gparity(p))$ be an SPG. We construct the SSG $(\wG,\greach(\vwin))$ using the gadget presented in Section~\ref{subsec:gadgets}. 
Section~\ref{subsec:interval-bounds} presents a lower bound on the difference between winning probabilities associated to different strategy pairs in $G$. Section~\ref{subsec:reduction-theorem} presents the main theorem establishing the reduction, while Section~\ref{subsec:complexity} gives complexity bounds.

\subsection{A Lower Bound on Different Strategies}\label{subsec:interval-bounds}

We consider two strategy pairs $(\sigma,\gamma),(\sigma',\gamma')\in\Sigma_{\exists}\times\Sigma_{\forall}$ and show a general result on all such pair: if they yield different values in $G$, then there exists a lower bound on the difference between these values.

In the following, we assume for all $u\in V_R, v\in V$ that $\arntrans(u,v)$ is a rational number $\frac{a_{u,v}}{b_{u,v}}$, where $a_{u,v}\in\mathbb{N},b_{u,v}\in\mathbb{N}_{+}$ and $a_{u,v}\leq b_{u,v}$. Let $M = \max\limits_{(u,v)\in E}\{b_{u,v}\}$ and $n=|V|$.
\begin{restatable}{lemma}{lmepsilon}\label{lm:epsilon}
For all $(\sigma,\gamma), (\sigma',\gamma')\in\Sigma_{\exists}\times\Sigma_{\forall}$, for all $v \in V$, 
the following holds: 
\[
\arnpr^{v}_{\sigma,\gamma} \neq  \arnpr^{v}_{\sigma',\gamma'} \Rightarrow
|\arnpr^{v}_{\sigma,\gamma} -  \arnpr^{v}_{\sigma',\gamma'}| > \frac{1}{(n!)^2 M^{2n^2}} = \epsilon
\]
\end{restatable}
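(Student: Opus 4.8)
The plan is to bound the difference $|\arnpr^v_{\sigma,\gamma} - \arnpr^v_{\sigma',\gamma'}|$ from below by controlling the bit-complexity of each value separately. The key observation is that, for a fixed pure memoryless strategy pair, the winning probability is a reachability probability in the induced Markov chain $\mc_{\sigma,\gamma}$, and by Theorem~\ref{th:reachability} this is the unique solution of a linear system $\mathbf{x} = \mathbf{A}\mathbf{x} + \mathbf{b}$, equivalently $(\mathbf{I}-\mathbf{A})\mathbf{x} = \mathbf{b}$. Since $\mc_{\sigma,\gamma}$ has at most $n$ states and every transition probability is a rational $\frac{a_{u,v}}{b_{u,v}}$ with denominator at most $M$, the entries of $\mathbf{A}$ and $\mathbf{b}$ are rationals with denominator dividing $M$. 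I would first clear denominators so that the system has integer coefficients of bounded magnitude, and then apply Cramer's rule.

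The heart of the argument is a bound on the denominator of each $\arnpr^v_{\sigma,\gamma}$. By Cramer's rule, $x_v = \det(\mathbf{A}_v)/\det(\mathbf{I}-\mathbf{A})$ where $\mathbf{A}_v$ replaces the appropriate column by $\mathbf{b}$. First I would rescale: write $\mathbf{I}-\mathbf{A} = \frac{1}{D}\mathbf{N}$ and $\mathbf{b} = \frac{1}{D}\mathbf{c}$ for a common integer $D$ dividing $M^{n}$ (each row involves at most $n$ denominators each at most $M$, so a per-row common denominator is at most $M^n$, and one can use a single $D \mid M^n$ after scaling). The matrix $\mathbf{N}$ then has integer entries, and its determinant, a sum over the $n!$ permutations of products of $n$ integer entries each bounded by $M^n$ in absolute value, satisfies $|\det\mathbf{N}| \le n! \, (M^n)^n = n!\,M^{n^2}$. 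Hence each value $\arnpr^v_{\sigma,\gamma}$ is a rational with denominator bounded by $n!\,M^{n^2}$ (the scaling factors $D$ cancel between numerator and denominator). Therefore $\arnpr^v_{\sigma,\gamma}$ and $\arnpr^v_{\sigma',\gamma'}$ are both rationals expressible with denominators at most $n!\,M^{n^2}$.

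Given two distinct rationals $p_1/q_1$ and $p_2/q_2$ with $q_1,q_2 \le n!\,M^{n^2}$, their difference, if nonzero, is at least $\frac{1}{q_1 q_2}$ in absolute value, since $\left|\frac{p_1}{q_1}-\frac{p_2}{q_2}\right| = \frac{|p_1 q_2 - p_2 q_1|}{q_1 q_2} \ge \frac{1}{q_1 q_2}$. This yields the bound $|\arnpr^v_{\sigma,\gamma} - \arnpr^v_{\sigma',\gamma'}| \ge \frac{1}{(n!)^2 M^{2n^2}}$, and the strict inequality in the statement follows. The final step is simply to combine these pieces; the conclusion matches $\epsilon = \frac{1}{(n!)^2 M^{2n^2}}$.

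The main obstacle I expect is tracking the denominators carefully through the clearing-of-denominators and Cramer's-rule steps so that the exponents come out exactly as $M^{n^2}$ rather than something larger. One must be careful that the system is over the states $V_Q = \textit{Pre}^*(T)\setminus T$ (whose size is at most $n$), that the common denominator $D$ used to integerize truly cancels in the Cramer ratio, and that the Hadamard-type or permutation-expansion bound on $|\det\mathbf{N}|$ uses the correct entry magnitudes. A subtlety worth flagging is ensuring $\det(\mathbf{I}-\mathbf{A})\neq 0$, which holds because Theorem~\ref{th:reachability} guarantees a unique solution, so the system is nonsingular; I would invoke that theorem explicitly to justify the applicability of Cramer's rule.
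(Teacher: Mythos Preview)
Your proposal is correct and follows essentially the same route as the paper: reduce $\arnpr^{v}_{\sigma,\gamma}$ to a reachability probability in $\mc_{\sigma,\gamma}$, clear denominators row by row so the linear system $(I-A)x=b$ has integer entries bounded by a power of $M$, apply Cramer's rule, bound the determinants by the permutation expansion, and conclude that two distinct such rationals differ by at least $1/(n!\,M^{n^2})^2$. The only point to tighten is the passage from $\ge\epsilon$ to the strict $>\epsilon$ claimed in the statement: the paper gets this because the system has size $s<n$ (at least one vertex lies in a BSCC, so $V_Q\subsetneq V$), whence $s!\,(M^{s+1})^s < n!\,M^{n^2}$ strictly; your write-up asserts the strict inequality without explaining it.
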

\begin{proof}

Let $\Pr^{v}_{\sigma,\gamma}(\enterEven)$ be the probability for a play starting from $v\in V$ to reach an even BSCC.
It follows from Corollary~\ref{corollary:pms} that for all $\sigma,\gamma\in\Sigma_{\exists}\times\Sigma_{\forall}$ and $v\in V$, we have $\arnpr^{v}_{\sigma,\gamma} = \Pr^{v}_{\sigma,\gamma}(\enterEven)$. 
We can obtain $\Pr^{v}_{\sigma,\gamma}(\enterEven)$ by setting all vertices belonging to at least one even BSCC as the target set, and calculating the reachability probability. Calculating~$\arnpr^{v}_{\sigma,\gamma}$ is thus reduced to solving a linear equation system $x=Ax+b$ according to Theorem~\ref{th:reachability}. 
We omit the details of $A$ and $b$. Every non-zero entry of $A$ and $b$ is either $1$, or $\frac{a_{u,v}}{b_{u,v}}$ for some $u,v\in V_R\times V$.

We use the following notations:
\newcommand{\qij}{\ensuremath{Q[i,j]}}
\newcommand{\cij}{\ensuremath{c_{i,j}}}
\newcommand{\dij}{\ensuremath{d_{i,j}}}
\newcommand{\bi}{\ensuremath{b[i]}}
\newcommand{\qi}{\ensuremath{Q[i]}}
\begin{itemize}
\item Let $s=|b|$. It follows that $s<n$ since there is at least one vertex in a BSCC. 
\item Let $Q=I-A$. For $i\in 1,2,\ldots,n$ we denote the $i$-th row of $Q$ with $\qi$, and the entry of $Q$ at $i$-th row and $j$-th column with $\qij$. It can be written as $\qij=\frac{\cij}{\dij}$, where $|\cij|$ and $|\dij|$ are natural numbers bounded by $M$ with $|\cij| \le |\dij|$. 
\item We denote the $i$-th entry of $b$ with $\bi$. It can be written as $\bi=\frac{c_{i,s+1}}{d_{i,s+1}}$, where $|c_{i,s+1}|$ and $|d_{i,s+1}|$ are natural numbers bounded by $M$ with $|c_{i,s+1}|\le |d_{i,s+1}|$.     
\end{itemize}
The equation system can be written as:
\[ Qx = b\]
We take an arbitrary row $i$, and write the $i$-th equation $\qi \cdot x = \bi$ as follows:
\begin{equation}\label{eq:before-multiply}
\begin{bmatrix} \frac{c_{i,1}}{d_{i,1}} & \frac{c_{i,2}}{d_{i,2}} & \cdots & \frac{c_{i,s}}{d_{i,s}} \end{bmatrix} \cdot x = \begin{matrix} \frac{c_{i,s+1}}{d_{i,s+1}} \end{matrix}
\end{equation}
We multiply equation~\eqref{eq:before-multiply} with $\prod^{s+1}_{t=1} d_{i,t}$ to obtain:
\begin{itemize}
    \item For all $j=1,\ldots,s$, $\qij$ equals $(\prod^{s+1}_{t=1} d_{i,t})\frac{\cij}{\dij}$, an integer with absolute value bounded by $M^{s+1}$.
    \item For all $i=1,\ldots,s$, $\bi$ equals $(\prod^{s}_{t=1} d_{i,t})c_{i,s+1}$, an integer with absolute value bounded by $M^{s+1}$.
\end{itemize}
We apply this transformation to each row of the equation system, and write the new equation system as:
\[ Q'x=b' \]
By Cramer's rule, for all $i=1,2,\ldots,s$, we obtain:
\[ x[i] = \frac{det(Q'_i)}{det(Q')} \]
where $Q'_i$ is the matrix obtained by replacing the $i$-th column of $Q'$ with the column vector~$b'$. It follows that all entries of $Q'_i$ are also integers with absolute values bounded by $M^{s+1}$.

Since $x[i]$ is a reachability probability, we have $x[i] \le 1$. Following from the calculation of determinants, we obtain the following: 
\[ |det(Q'_i)| \le |det(Q')| \le s!(M^{s+1})^s < n!M^{n^2} \]
Therefore, if the equation system resulting from $\sigma',\gamma'$ yields $x'[i]>x[i]$, we have: 
\[ x'[i]-x[i] > \frac{1}{(n!M^{n^2})^2} = \frac{1}{(n!)^2 M^{2n^2}} \qedhere \]  
\end{proof}

\subsection{Direct Reduction}\label{subsec:reduction-theorem}
We now establish the direct reduction from SPGs to SSGs.
\newcommand{\pvsg}{\ensuremath{\arnpr^{v}_{\sigma,\gamma}}}
\newcommand{\pvspg}{\ensuremath{\arnpr^{v}_{\sigma',\gamma}}}
\newcommand{\opovsg}{\ensuremath{\oarnpr^{\ov}_{\sigma,\gamma}}}
\newcommand{\wpovsg}{\ensuremath{\widetilde{\arnpr}^{\ov}_{\sigma,\gamma}}}
\newcommand{\opovspg}{\ensuremath{\oarnpr^{\ov}_{\sigma',\gamma}}}
\newcommand{\wpovspg}{\ensuremath{\widetilde{\arnpr}^{\ov}_{\sigma',\gamma}}}
\begin{theorem}[Reducing SPGs to SSGs]\label{th:reduction}
If for all $(\sigma,\gamma)\in\Sigma_{\exists}\times\Sigma_{\forall}$, and $v\in V$, the following conditions hold:
\begin{enumerate}
    \item\label{aspt:c0} $\wPr_{\sigma,\gamma}^{\ov}(\crossPath) >  \frac{4-\epsilon}{4}$ 
    \item\label{aspt:c1} $\wPr_{\sigma,\gamma}^{k}(\winEven) \ge \frac{4}{4+\epsilon}$ for all even
 $k$, and $\wPr_{\sigma,\gamma}^{k}(\winOdd) \le 1-\frac{4}{4+\epsilon}$ for all odd $k$
\end{enumerate}
where $\epsilon = \frac{1}{(n!)^2 M^{2n^2}}$,
then every optimal strategy $\sigma\in\Sigma_{\exists}$ of Eve in the SSG $(\wG,\greach(\vwin))$ is also optimal in the SPG $(G,\gparity(p))$.
The same holds for Adam.
\end{theorem}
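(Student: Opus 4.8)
The plan is to leverage the three preceding lemmas (Lemmas~\ref{lm:formal-crosspath}, \ref{lm:formal-wineven}, and~\ref{lm:interval}) together with the separation bound of Lemma~\ref{lm:epsilon} to show that the ordering of strategy values is \emph{preserved} under the reduction, which immediately yields preservation of optimality. The central idea is that the interval bounds of Lemma~\ref{lm:interval} place $\oarnpr^{\ov}_{\sigma,\gamma}$ within a window of width controlled by $x$ and $y$ around $\arnpr^{v}_{\sigma,\gamma}$; the assumptions~\ref{aspt:c0} and~\ref{aspt:c1} are precisely calibrated so that this window is narrower than the minimum gap $\epsilon$ guaranteed by Lemma~\ref{lm:epsilon} between any two distinct SPG values. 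Thus if one strategy strictly beats another in $G$, it must also beat it in $\wG$.

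First I would instantiate Lemma~\ref{lm:interval} with $x = \frac{4-\epsilon}{4}$ and $y = \frac{4}{4+\epsilon}$, which are licensed exactly by conditions~\ref{aspt:c0} and~\ref{aspt:c1} via Lemmas~\ref{lm:formal-crosspath} and~\ref{lm:formal-wineven}. This gives, for every strategy pair $(\sigma,\gamma)$ and every $v$, a two-sided bound
\[
y\cdot \arnpr^{v}_{\sigma,\gamma} - y + x\cdot y \ \leq\ \oarnpr^{\ov}_{\sigma,\gamma}\ \leq\ \arnpr^{v}_{\sigma,\gamma} + 1 - x\cdot y .
\]
Next I would compute, for the chosen $x$ and $y$, how far $\oarnpr^{\ov}_{\sigma,\gamma}$ can deviate from $\arnpr^{v}_{\sigma,\gamma}$ in each direction. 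The routine algebra here is to show that both the quantity $1 - x\cdot y$ (the upper slack) and the analogous lower slack derived from $y\cdot\arnpr - y + x\cdot y$ are each strictly smaller than $\epsilon/2$, so that the total window width is strictly below $\epsilon$. Concretely, substituting $x = 1-\epsilon/4$ and $y = 1/(1+\epsilon/4)$ should make $1 - xy$ collapse to something of order $\epsilon/2$, and symmetrically for the lower bound; this is the calibration the constants $4$ and $4+\epsilon$ were engineered to produce.

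Then I would argue the order-preservation step. Take two strategy pairs $(\sigma,\gamma)$ and $(\sigma',\gamma')$ with $\arnpr^{v}_{\sigma,\gamma} > \arnpr^{v}_{\sigma',\gamma'}$. By Lemma~\ref{lm:epsilon} the gap exceeds $\epsilon$, while each SSG value lies within less than $\epsilon/2$ of its SPG counterpart; a triangle-inequality argument then forces $\oarnpr^{\ov}_{\sigma,\gamma} > \oarnpr^{\ov}_{\sigma',\gamma'}$. Since both games are pure-memoryless determined (Theorem~\ref{th:determinacy}, Corollary~\ref{corollary:pms}), a strategy that is optimal in $\wG$ maximizes $\oarnpr$ against the best Adam response; order-preservation ensures this same strategy maximizes $\arnpr$ in $G$, hence is optimal there. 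The argument for Adam is symmetric, using the complementary objective and the dual bound.

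The main obstacle I anticipate is handling the \emph{minimax} structure rather than merely comparing fixed pairs: optimality in $\wG$ is defined through $\sup_\sigma\inf_\gamma$, and I must verify that preserving the order on every fixed pair $(\sigma,\gamma)$ genuinely transfers to preserving the order of the $\inf_\gamma$ (and thereafter the $\sup_\sigma$) values. The care needed is that the interval bounds hold uniformly over all pairs with the \emph{same} $x,y$, so the $\inf$ and $\sup$ operations can be interchanged through the inequalities without the window width fluctuating; I would make this explicit by showing that if $\sigma^*$ is $\wG$-optimal then for the Adam-response $\gamma$ minimizing $\oarnpr$ one recovers, via the bounds and Lemma~\ref{lm:epsilon}, that $\sigma^*$ attains the SPG value $\vE(\gparity(p))(v)$. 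A secondary delicate point is confirming that the deviation bounds are \emph{strict} ($<\epsilon/2$ rather than $\leq$), since equality could in principle allow ties in $\wG$ to mask a strict ordering in $G$; tracking strictness through the substitution of $x$ and $y$ is where I would be most careful.
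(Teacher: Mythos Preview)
Your plan is essentially the paper's proof: apply Lemma~\ref{lm:interval} with $x=\tfrac{4-\epsilon}{4}$ and $y=\tfrac{4}{4+\epsilon}$ (note that Lemmas~\ref{lm:formal-crosspath} and~\ref{lm:formal-wineven} are not actually needed here, since conditions~\ref{aspt:c0} and~\ref{aspt:c1} are the theorem's \emph{hypotheses}, not something to be derived), compute that the resulting window has width at most $\tfrac{4\epsilon}{4+\epsilon}<\epsilon$, combine with the gap of Lemma~\ref{lm:epsilon}, and argue by contraposition. The paper resolves your minimax concern by additionally fixing $\gamma$ to be Adam-optimal in $\wG$ and invoking the saddle-point inequality $\oarnpr^{\ov}_{\sigma',\gamma}\le\oarnpr^{\ov}_{\sigma,\gamma}$, so that the chain $\oarnpr^{\ov}_{\sigma',\gamma}>\oarnpr^{\ov}_{\sigma,\gamma}$ produced by the interval-plus-gap computation yields the contradiction directly.
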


\begin{proof}
We assume conditions~\ref{aspt:c0} and~\ref{aspt:c1} hold. We show that every optimal strategy $\sigma\in\Sigma_{\exists}$ in SSG $\wG$ is also optimal in SPG $G$. We prove this by contraposition.

We take $\Sigma_{\exists}^*\subseteq\Sigma_{\exists}$ and $\Sigma_{\forall}^*\subseteq\Sigma_{\forall}$ as the sets of optimal strategies of Eve and Adam in $\wG$. 
We obtain by Lemma~\ref{lm:interval} that for all $v \in V$ and all $\sigma,\gamma\in\Sigma_{\exists}^*\times\Sigma_{\forall}^*$, the following holds:
\[
y \cdot \pvsg - y + x \cdot y \ \leq \ \wpovsg \ \leq \ \pvsg + 1-x\cdot y
\]
Since conditions~\ref{aspt:c0} and~\ref{aspt:c1} hold, we substitute $x$ and $y$ to obtain:
\begin{equation}\label{eq-qn:range-0}
\wpovsg \le \pvsg + \frac{2\epsilon}{4+\epsilon}    
\end{equation}
If 
$\sigma$ is not optimal in $G$, then there exists another strategy $\sigma' \in \Sigma_{\exists}$ and a vertex $v\in V$ such that $\pvspg>\pvsg$. It follows again from Lemma~\ref{lm:interval} that:
\begin{equation}\label{eq-qn:range}
y \pvspg - y + x\cdot y \ \leq \ \wpovspg \ \leq \ \pvspg + 1-x\cdot y
\end{equation}
Furthermore, Lemma~\ref{lm:epsilon} yields:
\begin{equation}\label{eq-qn:range-1}
\pvspg>\pvsg+\epsilon    
\end{equation}
As a result, we obtain the following:
\begin{align*}
  \wpovspg \ge\ & y \cdot \pvspg - y+x\cdot y \tag*{\textit{by~\eqref{eq-qn:range}}} \\
>\ & y\cdot (\pvsg+\epsilon)-y+x\cdot y \tag*{\textit{by~\eqref{eq-qn:range-1}}} \\
=\ & \frac{4}{4+\epsilon} (\pvsg+\epsilon) - \frac{4}{4+\epsilon} \cdot \frac{\epsilon}{4} \\
=\ & \pvsg - \frac{\epsilon}{4+\epsilon}\pvsg+\frac{3\epsilon}{4+\epsilon} \\
\ge\ & \pvsg + \frac{2\epsilon}{4+\epsilon} \\
\ge\ & \wpovsg \tag*{\textit{by~\eqref{eq-qn:range-0}}}
\end{align*}
It indicates that $\wpovspg > \wpovsg$, which contradicts the assumption that $\sigma\in\Sigma_{\exists}^*$.
\end{proof}

Until now, we have used the function $\alpha$ in Theorem~\ref{th:reduction}, obtaining inequalities relating parity values in SPG $G$ to transition probabilities in SSG $\wG$. We now give requirements for $\alpha$ that satisfy all these inequalities. 
\begin{restatable}{lemma}{aralpha}\label{lm:ar-alpha}
When the values of $\alpha$ are arranged as follows, the conditions in Theorem~\ref{th:reduction} are satisfied:
\begin{enumerate} 
\item If $\alpha_0\le\frac{\pM^n}{8\epsi}$, then condition~\ref{aspt:c0} is satisfied.
\item If for all $k\in\mathbb{N}$, the following holds, then condition~\ref{aspt:c1} is satisfied:
\begin{equation}
\frac{\alpha_{k+1}}{\alpha_k} \le \frac{\pM^n(1-\pM)}{8\epsi+1} \notag.
\end{equation}

\end{enumerate}
\end{restatable}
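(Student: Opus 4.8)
The plan is to verify each of the two items in Lemma~\ref{lm:ar-alpha} by plugging the proposed constraints on $\alpha$ into the explicit lower bounds established in Lemma~\ref{lm:formal-crosspath} and Lemma~\ref{lm:formal-wineven}, and checking that the resulting quantities meet the thresholds demanded in conditions~\ref{aspt:c0} and~\ref{aspt:c1} of Theorem~\ref{th:reduction}. Since $\alpha$ will be chosen extremely small (exponentially small in $n$ and in the encoding size, through the factor $\epsi = (n!)^2 M^{2n^2}$), the guiding intuition is that as $\alpha_0 \to 0$ the crossing probability tends to $1$, and as $\alpha_{k+1}/\alpha_k \to 0$ the conditional winning probabilities tend to $1$ (for even $k$) or $0$ (for odd $k$). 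The whole argument is therefore a controlled first-order estimate: show that the small deviations from these limits are bounded by $\epsilon/4$ and $\epsilon/(4+\epsilon)$ respectively.

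For item~1, I would start from the bound $\wPr^{\ov}_{\sigma,\gamma}(\crossPath) \ge \frac{(1-x_0)x_0^n}{(1-x_0)-(1-x_0^n)x_1}$ with $x_0 = \pM(1-\alpha_0)$ and $x_1 = (1-\pM)(1-\alpha_0)$. The goal is to show this exceeds $\frac{4-\epsilon}{4} = 1 - \frac{\epsilon}{4}$, i.e.\ that the complementary probability is at most $\epsilon/4$. I would first simplify the denominator: since $x_0 + x_1 = (1-\alpha_0)$, one has $(1-x_0) - (1-x_0^n)x_1 = \alpha_0 + x_1 x_0^n$, so the bound rewrites cleanly as $\frac{(1-x_0)x_0^n}{\alpha_0 + x_1 x_0^n}$. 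The complement is then $\frac{\alpha_0 + x_1 x_0^n - (1-x_0)x_0^n}{\alpha_0 + x_1 x_0^n} = \frac{\alpha_0 - (1-x_0-x_1)x_0^n}{\alpha_0 + x_1 x_0^n} = \frac{\alpha_0(1-x_0^n)}{\alpha_0 + x_1 x_0^n}$, which is at most $\frac{\alpha_0}{x_1 x_0^n}$. Using $x_0 \ge \pM/2$ type crude lower bounds (valid once $\alpha_0 \le 1/2$) and $x_1 \ge (1-\pM)/2$, this is dominated by roughly $\alpha_0 \cdot c / \pM^n$ for a small constant $c$; the hypothesis $\alpha_0 \le \pM^n/(8\epsi)$ then forces the complement below $\epsilon/4 = 1/(4\epsi)$. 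The bookkeeping of constant factors is where care is needed, but the structure is direct.

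For item~2, I would use the even-$k$ bound from Lemma~\ref{lm:formal-wineven} and show it is at least $\frac{4}{4+\epsilon} = 1 - \frac{\epsilon}{4+\epsilon}$; the odd case then follows immediately by the dual form already stated in that lemma, since the odd bound is exactly $1$ minus the even expression. Here the key observation is that the quantity multiplying $(1-\alpha_{k+1})$ involves $x_4 = \pM\alpha_k$ in the numerator, so the whole expression scales with $\alpha_k$, while the threshold deficit we can tolerate scales with $\alpha_{k+1}$ through $\epsilon$-sized terms; this is precisely why the ratio $\alpha_{k+1}/\alpha_k$ must be controlled rather than the individual values. I would estimate the complement $1 - \wPr^{k}_{\sigma,\gamma}(\winEven)$ and show, after simplifying the denominator (exploiting relations among $x_2, x_3, x_5$ analogous to $x_0+x_1$ above), that it is bounded by a constant multiple of $\alpha_{k+1}/(\pM^n(1-\pM)\alpha_k)$ plus an $\alpha_{k+1}$ term; the hypothesis $\alpha_{k+1}/\alpha_k \le \pM^n(1-\pM)/(8\epsi+1)$ then pins this below $\epsilon/(4+\epsilon)$.

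The main obstacle I anticipate is the denominator manipulation in item~2. Unlike item~1, where the algebraic identity $x_0+x_1 = 1-\alpha_0$ collapses things neatly, the expression $1-(x_2+x_3) + x_5 x_2^n + t x_2^{n-1} - x_5 x_2^{n-1}$ mixes three distinct $\alpha$-parameters ($\alpha_k$, $\alpha_{k+1}$, and the auxiliary $t$) and must be shown to be bounded away from zero and from above by manageable quantities so that the fraction can be bounded both from below (to get the $\winEven$ lower bound) and its complement from above. I would handle this by isolating the dominant term $\pM^n(1-\alpha_{k+1})^n$-type contributions, bounding the $\alpha$-dependent corrections crudely using $\alpha_k, \alpha_{k+1} \le 1/2$, and then verifying that the chosen ratio constraint leaves enough slack; tracking the exact constants so that everything fits under $\epsilon/(4+\epsilon)$ rather than merely $O(\epsilon)$ is the delicate part, and I would defer the full arithmetic to the appendix while stating the two estimates as the skeleton of the proof.
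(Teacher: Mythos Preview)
Your overall plan is exactly the paper's: derive from Lemmas~\ref{lm:formal-crosspath} and~\ref{lm:formal-wineven} simplified intermediate bounds (the paper states these as separate corollaries), then algebraically check that the hypotheses on $\alpha_0$ and on the ratio $\alpha_{k+1}/\alpha_k$ force those bounds past the thresholds $\tfrac{4-\epsilon}{4}$ and $\tfrac{4}{4+\epsilon}$. Your denominator identity $(1-x_0)-(1-x_0^n)x_1=\alpha_0+x_1x_0^n$ is correct and is precisely what the paper uses.

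There is, however, one concrete step in your sketch that would fail. For item~1 you propose the crude bound $x_0\ge \pM/2$, which after raising to the $n$th power gives only $x_0^n\ge \pM^n/2^n$. Your claimed upper bound on the complement then becomes roughly $\alpha_0\cdot 2^{n}/\pM^n$, not $\alpha_0\cdot c/\pM^n$ with an absolute constant $c$; with the stated hypothesis $\alpha_0\le \pM^n/(8\epsi)$ this yields $2^{n-1}/\epsi$, which is not below $\epsilon/4=1/(4\epsi)$. The fix (and what the paper does) is to keep $x_0^n=\pM^n(1-\alpha_0)^n$ intact and apply Bernoulli's inequality $(1-\alpha_0)^{n+1}\ge 1-(n+1)\alpha_0$ only at the last moment, so that no $2^n$ factor appears; the paper's intermediate form is $\wPr(\crossPath)\ge \frac{\pM^n(1-\alpha_0)^{n+1}}{2\alpha_0+\pM^n(1-\alpha_0)^{n+1}}$. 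For item~2 the same caution applies: the paper reduces the bound of Lemma~\ref{lm:formal-wineven} all the way to the clean form $\frac{\pM^n(1-\pM)-r}{\pM^n(1-\pM)+r}$ with $r=\alpha_{k+1}/\alpha_k$ (again via Bernoulli and careful term-dropping), from which the stated constraint on $r$ is immediate; your plan to bound the complement by a constant multiple of $r/(\pM^n(1-\pM))$ is the right target, but you must avoid per-step bounds that accumulate an $n$-dependent factor.
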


\textbf{Sketch of Proof:} Both cases follow a similar structure. For $\alpha_0$ (respectively ${\alpha_{k+1}}/{\alpha_k}$), we derive from the bound given by Lemma~\ref{lm:formal-crosspath} (resp. Lemma~\ref{lm:formal-wineven}) a corollary giving a bound that explicitly makes use of $\alpha$. We then directly obtain the two cases of Lemma~\ref{lm:ar-alpha} from these two bounds. The full proof of this lemma, detailing how to compute function $\alpha$ can be found in Appendix~\ref{app-subsec:ar-alpha}.

\subsection{Complexity Considerations}\label{subsec:complexity}

To introduce complexity results, we first define size of a stochastic game $G$ as
$|G| = |V| + |E| + |\arntrans|$ where $|\arntrans|$ is the space needed to store the transition function (which may be stored in unary or binary). 
A now longstanding result shows that most stochastic game settings are polynomially reducible one to the other. In particular:
\begin{theorem}[From Theorem 1 in~\cite{andersson2009complexity}]
Solving stochastic parity games and solving simple stochastic games is polynomial-time equivalent.
Either can be using unary or binary encoding. 
\end{theorem}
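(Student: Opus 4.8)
The plan is to prove the two directions of the equivalence separately and then observe that both transformations run in polynomial time.

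\emph{Simple stochastic games reduce to stochastic parity games.} This direction is essentially an embedding, since a reachability objective is a degenerate parity objective. Given an SSG $(G,\greach(T))$, I would turn every target vertex into an absorbing sink (adding a self-loop of probability $1$ where needed) and define a priority function $p$ assigning priority $0$ to every vertex of $T$ and priority $1$ to every other vertex. A play then settles in a BSCC contained in $T$ exactly when it reaches $T$, in which case its minimum infinitely-often priority is $0$ (even); otherwise it settles in a BSCC of priority-$1$ vertices and loses. Hence $\greach(T)$ and $\gparity(p)$ coincide as objectives, and the transformation alters neither the arena nor the transition probabilities beyond the added self-loops, so it is polynomial under both unary and binary encoding. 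Solving the resulting SPG therefore solves the original SSG.

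\emph{Stochastic parity games reduce to simple stochastic games.} This is the content of the present paper. Given an SPG $(G,\gparity(p))$ I would build the SSG $(\wG,\greach(\vwin))$ via the gadget of Section~\ref{subsec:gadgets}, instantiating $\alpha$ according to Lemma~\ref{lm:ar-alpha} so that conditions~\ref{aspt:c0} and~\ref{aspt:c1} of Theorem~\ref{th:reduction} hold. By Theorem~\ref{th:reduction}, every optimal strategy of Eve (and of Adam) in $\wG$ is optimal in $G$, so solving $\wG$ strategically yields optimal strategies for $G$. To then solve $G$ quantitatively I fix these optimal pure memoryless strategies, view the induced sub-arena as a Markov chain (Corollary~\ref{corollary:pms}), and compute $\arnpr^{v}_{\sigma,\gamma}$ as a reachability probability through the linear system of Theorem~\ref{th:reachability}; this last step is polynomial in $|G|$. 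Finally, Theorem~\ref{th:size} guarantees that $|\wG|$ is polynomial in $|G|$ under binary encoding, so the whole reduction runs in polynomial time.

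Combining the two directions yields the polynomial-time equivalence. The main obstacle, and the reason the encoding is flagged explicitly, lies in the second direction: the sink probabilities $\alpha_k$ produced by Lemma~\ref{lm:ar-alpha} shrink geometrically with a per-priority factor on the order of $M^{-2n^2}$, so $\alpha_k$ is exponentially small. Written in binary these values occupy only polynomially many bits, which is precisely why Theorem~\ref{th:size} states the size bound for binary encoding; written in unary they would blow up exponentially, and our construction would no longer be polynomial. Thus our direct reduction witnesses the binary-encoded case, while the encoding-robust form of the statement asserted here is the one established in~\cite{andersson2009complexity}.
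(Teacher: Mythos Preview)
The paper does not prove this theorem at all: it is quoted verbatim as a known result from~\cite{andersson2009complexity} and used only as background for the complexity discussion. There is therefore no ``paper's own proof'' to compare against.

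Your sketch goes beyond what the paper does, and the parts you supply are sound. The embedding of reachability into parity (make $T$ absorbing, assign priority $0$ to $T$ and $1$ elsewhere) is the standard argument and works under either encoding. For the converse you invoke exactly the machinery the paper develops---Theorem~\ref{th:reduction} for correctness of the gadget, Lemma~\ref{lm:ar-alpha} for a feasible $\alpha$, and Theorem~\ref{th:size} for the polynomial size bound under binary encoding---and you correctly observe that fixing the resulting optimal strategies and solving the induced Markov chain via Theorem~\ref{th:reachability} recovers the quantitative values in polynomial time. This is precisely how the paper itself derives its $\textbf{NP}\cap\textbf{coNP}$ corollary immediately after Theorem~\ref{th:size}.

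Where your argument stops short of the full statement is the unary case: as you yourself note, the $\alpha_k$ are exponentially small, so the gadget construction is not polynomial under unary encoding. You handle this by deferring to~\cite{andersson2009complexity}, which is honest but means your write-up is not a self-contained proof of the theorem as stated. The unary claim genuinely requires the chain of reductions through mean-payoff and discounted-payoff games from~\cite{andersson2009complexity} (or an equivalent device), and the present paper offers no alternative route for it.
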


We show that the reduction we have introduced in this paper is polynomial with binary encoding. 
We recall that $M = \max\limits_{(u,v)\in E}\{b_{u,v}\}$ and $n=|V|$. 
\begin{theorem}\label{th:size}
    Given an SPG $G$, there exist polynomial values for function $\alpha$ that satisfy Theorem~\ref{th:reduction}, such that the SSG $\wG$ is of size $\bigO(n^5 \log M )$ in binary.
\end{theorem}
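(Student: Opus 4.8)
\textbf{The plan is to} bound the size of $\wG$ by tracking how the gadget construction inflates the arena and, crucially, how many bits are needed to store the probability values $\alpha_k$ determined by Lemma~\ref{lm:ar-alpha}. The structural part is immediate: the gadget doubles each vertex (into $\widehat{v}$ and $\ov$) and adds two sinks, so $|\wV \uplus \{\vwin,\vlose\}| = 2n+2$, and the edge set $\wE$ adds at most one sink-transition per $\widehat{v}$ on top of the doubled copies of $E$, so $|\wE| = \bigO(|E|) = \bigO(n^2)$. The entire size question therefore reduces to $|\warntrans|$, i.e.\ the number of bits needed to write down each $\alpha_k$ in binary.

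\textbf{Next I would} unfold the recurrence from Lemma~\ref{lm:ar-alpha}. The first clause fixes $\alpha_0 \le \frac{\pM^n}{8\epsi}$ and the second forces $\frac{\alpha_{k+1}}{\alpha_k} \le \frac{\pM^n(1-\pM)}{8\epsi+1}$ for every $k$. Choosing the largest admissible values turns these into a geometric chain, so $\alpha_k$ is (up to the constant factor in $\alpha_0$) a product of $k+1$ factors, each of the form $\frac{\pM^n(\dots)}{8\epsi+1}$. Since priorities can be normalized so that the largest relevant priority is at most $n$, I only need $\alpha_0,\dots,\alpha_n$, giving at most $n+1$ values. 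The key arithmetic observation is that $\pM \ge \frac{1}{M}$ and $\epsi = (n!)^2 M^{2n^2}$, so each factor has numerator and denominator whose \emph{bit-lengths} are $\bigO(n^2 \log M + \log(n!)) = \bigO(n^2 \log M + n\log n) = \bigO(n^2\log M)$; taking logarithms, $\log(1/\alpha_k) = \bigO\bigl(k\cdot n^2\log M\bigr)$, so the worst value $\alpha_n$ needs $\bigO(n^3 \log M)$ bits. Writing $\alpha_k$ as a single rational $\frac{a_k}{b_k}$ in lowest terms costs $\bigO(n^3\log M)$ bits each.

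\textbf{Then I would} sum over all the transition probabilities. There are $\bigO(n^2)$ transitions in $\wG$; the transitions inherited from $\arntrans$ cost $\bigO(\log M)$ bits each, and the $\bigO(n)$ sink-transitions carry the $\alpha_k$ values at $\bigO(n^3\log M)$ bits each. The dominant contribution comes from storing an $\alpha$-value on each of the $\bigO(n^2)$ copied transitions that lead into some $\widehat{v}$ (or, more carefully, from the at most $n$ distinct $\alpha$-values each appearing on $\bigO(n)$ edges), yielding $|\warntrans| = \bigO(n^2 \cdot n^3\log M)$ in the crudest count but $\bigO(n \cdot n^3\log M) = \bigO(n^4\log M)$ if each of the $n+1$ distinct values is stored once. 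I would argue the stated $\bigO(n^5\log M)$ bound absorbs the edge multiplicity: each of $\bigO(n^2)$ transitions annotated with an $\alpha_k$ of bit-length $\bigO(n^3\log M)$ contributes, so $|\warntrans| = \bigO(n^5\log M)$, and this term dominates $|V|$ and $|E|$, giving $|\wG| = \bigO(n^5\log M)$.

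\textbf{The main obstacle} will be pinning down the exact bit-length of $\alpha_k$ honestly: I must verify that choosing the $\alpha_k$ at the boundary of the Lemma~\ref{lm:ar-alpha} inequalities keeps them as clean rationals whose denominators multiply without blow-up beyond the geometric estimate, and confirm that the number of needed priority levels is genuinely $\bigO(n)$ (priorities exceeding $n$ can be collapsed since only the minimum priority in each BSCC matters and there are at most $n$ vertices). The secondary subtlety is deciding \emph{which} accounting of edge-times-value multiplicity yields precisely $n^5$ rather than $n^4$; I expect the paper charges a full $\alpha$-value of size $\bigO(n^3\log M)$ to each of the $\bigO(n^2)$ gadget edges, and I would make that charging explicit rather than sharing a single stored value across edges.
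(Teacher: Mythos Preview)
Your proposal is correct and follows essentially the same approach as the paper: bound the structural blow-up trivially, then show each $\alpha_k$ has bit-length $\bigO(n^3\log M)$ because it is a product of $\bigO(n)$ factors each of bit-length $\bigO(n^2\log M)$, and charge this to each of the $\bigO(n^2)$ transitions. The one place where the paper is more direct is your ``main obstacle'': rather than iterating the recurrence and worrying about whether the boundary values stay clean rationals, the paper simply writes down a single closed form
\[
\alpha_k = \left(\frac{1}{16(n!)^2 M^{2n^2+n}+1}\right)^{k+1},
\]
checks (using $\pM\ge 1/M$ and $1-\pM\ge 1/2$) that this satisfies both clauses of Lemma~\ref{lm:ar-alpha} simultaneously, and reads off the bit-length from the explicit denominator---which dissolves the rationality concern you flagged.
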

\begin{proof}
Since $\pM \ge \frac{1}{M}$ and $1-\pM\ge\frac{1}{2}$, the following is a valid instance of $\alpha$, polynomial in $G$ (and polynomial in the transition probabilities appearing in $G$) under binary encoding: 
\[
\forall k\in\mathbb{N}, \ \alpha_k = \left( \frac{1}{16 (n!)^2 M^{2n^2+n} + 1} \right)^{k+1}
\]
Then the size of the SSG $\wG$ is:
\begin{align*}
|\wG| =\ & |\wV \uplus\{\vwin,\vlose\}| + |\wE| + |\warntrans| \\
= \ & \bigO(n) + \bigO(n^2) + \bigO(n^2)\cdot\bigO(n\cdot (n \log n + n^2\log M)) \\
= \ & \bigO(n^5 \log M )
\qedhere
\end{align*}
\end{proof}
According to~\cite{andersson2009complexity}, quantitative SSGs under unary and binary encoding are in the same complexity class, and so in $\textbf{NP}\,\mathbf{\cap}\,\textbf{coNP}$~\cite{condon1992complexity}. We thus obtain that our reduction yields an $\textbf{NP}\,\mathbf{\cap}\,\textbf{coNP}$ algorithm for solving SPGs.

\section{Epilogue}\label{ch:conclusion}
We have given a polynomial reduction from quantitative SPGs to quantitative SSGs, taking inspiration from a gadget used in~\cite{Chatterjee_2011} to obtain a reduction from deterministic PGs to quantitative SSGs.
After fixing a pair of strategies, the values of both the SPG and the SSG are determined, but the construction of the SSG makes it difficult to establish coinciding values. 
Using these fixed strategies, we showed that the value of the SSG falls into a range around the value of the SPG, where this range depends on the probability to reach a pBSCC of the SSG and the minimum probability to reach a winning sink in pBSCCs of the SSG. 
When considering all possible strategy pairs, we obtained a lower bound $\epsilon$ on their value differences in the SPG, by restricting transition probabilities to rational numbers and analyzing reachability equation systems of Markov chains. 
We then showed that by arranging transition probabilities of the SSG properly in terms of the size of the SPG, its smallest probability, and $\epsilon$, the value ranges of different strategy pairs can be narrowed so that they do not overlap. In this case, a reduction from SPGs to SSGs is achieved. 

Although under unary encoding exponential numbers can be introduced into the probability function $\alpha$ of the newly constructed SSGs, both reductions are polynomial. Hence, our construction yields an $\textbf{NP}\,\mathbf{\cap}\,\textbf{coNP}$ algorithm in both qualitative and quantitative SSGs under unary and binary encoding, substantiating the complexity results from previous works~\cite{condon1992complexity,andersson2009complexity,Chatterjee_2011}. .

Our result enables solving SPGs by first reducing them to SSGs and then applying algorithms for SSGs. However, its implementability is in question, due to the possibly huge representation of $\alpha$.  
Our reduction also captures the transformation from an MDP with a parity objective into an SSG. As we assume the minimum transition probability to be $\pM\in(0,\frac{1}{2}]$ in the original SPG, we cannot capture the subcase of reducing DPGs to quantitative SSGs. 
Although our reduction is unlikely to be leveraged to effectively solve SPGs in practice, some improvements are possible. First, we have not formally examined the optimal arrangement of $\alpha$. It is possible to find the weakest requirements on $\alpha$ so that the reductions are correct, thus optimizing possible implementations. Second, some specific cases lead to very small values of $\alpha$. These cases are similar to the ones that can challenge classical MDP solvers using VI, and so we can benefit from any family of arena structure where these cases are avoided, leading to implementable valuations of $\alpha$.  

\newpage
\bibliography{literature}
\newpage
\appendix
\section{Appendix}

\subsection{A General Lemma}
\label{app-subsec:lower_bound_general}
\begin{lemma}[Lower Bound of Sink Reachability]\label{lm:lower_bound_general} 
Let $\mc=(V\uplus\{v_f, v_s, v_b\}, \mcprob, v_I)$ be a Markov Chain with $m$ states.
If the following conditions hold:
\begin{enumerate}%[label=(\arabic*)]
    \item There are only two BSCCs, namely $\{v_s\}$ and $\{v_b\}$, and $\mcprob(v_s,v_s)= \mcprob(v_b,v_b) =1$.
    \item\label{cond:tri-entries} For all states $v \in V$, 
    \begin{itemize}
        \item $\mcprob(v, v_s) = \alpha$, $\mcprob(v, V\uplus\{v_f\}) = 1-\alpha$, $\mcprob(v, v_b) = 0$. 
        \item $\Pr^{v}(\Reach(v_b))>0$.
        \item For all $v' \in V\uplus \{v_f\}$, $\mcprob(v,v')>0$ implies that $\mcprob(v,v')\ge s$.
    \end{itemize}  
    \item For $v_f$, $\mcprob(v_f,V\uplus\{v_f\}) = k$ and $\mcprob(v_f, v_b) = l$.
\end{enumerate}
Then for all $v \in V$, we can obtain a lower bound for the reachability probability of $v_b$ as follows:
\[
\Pr^{v}(\Reach(v_b)) \ge \frac{(1-s)\cdot s^{m}\cdot l}{1-(s+t) + k\cdot s^{m+1} + t\cdot s^{m} - k \cdot s^{m}}
\]
where $t=1-\alpha_0-s$.
\end{lemma}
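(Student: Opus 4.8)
The plan is to exploit that $v_b$ is reachable only through $v_f$, which decouples the bound into two nearly independent quantities: the probability of reaching $v_f$ before the absorbing loser $v_s$, and the probability of eventually escaping from $v_f$ into $v_b$. For every $u\in V\uplus\{v_f\}$ write $r(u)=\Pr^{u}(\text{reach }v_f\text{ before }v_s)$. Since by Condition~\ref{cond:tri-entries} each state of $V$ sends mass $\alpha$ to $v_s$, nothing to $v_b$, and the rest inside $V\uplus\{v_f\}$, and since $\{v_s\}$ and $\{v_b\}$ are the only BSCCs with $v_b$ accessible solely from $v_f$, a play starting in $V$ almost surely reaches $v_f$ or $v_s$ first. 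The strong Markov property then yields the factorization $\Pr^{u}(\Reach(v_b))=r(u)\cdot\Pr^{v_f}(\Reach(v_b))$ for all $u\in V\uplus\{v_f\}$ (with $r(v_f)=1$).

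Next I would compute the escape probability by a one-step expansion at $v_f$. Writing $\bar r$ for the expectation of $r$ over the successor of $v_f$ inside $V\uplus\{v_f\}$ (reached with total mass $k$), the relation $\Pr^{v_f}(\Reach(v_b))=l+k\,\bar r\,\Pr^{v_f}(\Reach(v_b))$ gives $\Pr^{v_f}(\Reach(v_b))=\frac{l}{1-k\bar r}$. Condition~\ref{cond:tri-entries} together with $\Pr^{v}(\Reach(v_b))>0$ forces $l>0$, hence $k<1$ and $1-k\bar r>0$, and the map $x\mapsto\frac{xl}{1-kx}$ is increasing on $[0,1]$. Combining the two steps gives $\Pr^{v}(\Reach(v_b))=\frac{r(v)\,l}{1-k\bar r}$, and since both $r(v)$ and $\bar r$ are at least $r_{\min}:=\min_{u}r(u)$, monotonicity yields $\Pr^{v}(\Reach(v_b))\ge\frac{r_{\min}\,l}{1-k\,r_{\min}}$.

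It then remains to lower-bound $r_{\min}$, and here I would identify the worst-case configuration as a single line: a chain of at most $m$ states on which each non-target state proceeds to its successor with the minimal probability $s$, falls into $v_s$ with probability $\alpha$, and resets to the farthest state with the remaining probability $t=1-\alpha-s$, with $v_f$ placed at the far end. Solving the recurrence $r_i=s\,r_{i+1}+t\,r_0$ with top value $1$ gives the closed form $r_{\min}=\frac{(1-s)s^{m}}{(1-s)-t(1-s^{m})}$; this expression is decreasing in the line length, so replacing the true length by the global bound $m$ only weakens the estimate and is therefore safe. Substituting into $\frac{r_{\min}l}{1-k\,r_{\min}}$ and simplifying with the identity $1-(s+t)=\alpha$ reproduces exactly the stated denominator $1-(s+t)+k\,s^{m+1}+t\,s^{m}-k\,s^{m}$, finishing the argument.

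The main obstacle is the justification that this line is genuinely the minimizer, i.e. that $r(u)\ge r_{\min}$ for every $u$. I would establish this through the monotonicity of the unique solution of the reachability linear system of Theorem~\ref{th:reachability}: redirecting mass from short forward steps toward $v_s$ and toward resets to the farthest vertex, and lengthening the shortest route to $v_f$ up to $m$ steps, can only decrease each reachability value while preserving all the structural conditions. Making this coupling/monotonicity argument precise — verifying that every such local modification keeps the hypotheses intact and never increases $r$ — is the delicate part, whereas the recurrence solution and the concluding algebraic simplification are routine.
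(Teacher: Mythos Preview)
Your factorization and the concluding algebra are correct, and you correctly isolate the difficulty: showing that the straight chain with forward probability $s$, reset probability $t$, and sink probability $\alpha$ realizes the minimum of $r$. The monotonicity/coupling sketch you offer, however, is not a proof; making ``redirect mass toward resets and lengthen the route'' rigorous for an arbitrary transition structure respecting Condition~\ref{cond:tri-entries} would require a careful sequence of local modifications together with a verification that each one weakly decreases \emph{every} $r$-value, and it is not obvious how to organize this without essentially re-deriving the bound.

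The paper closes exactly this gap, and does so without ever identifying a worst-case Markov chain. Working directly with $p_i=\Pr^{v_i}(\Reach(v_b))$, it relabels the states of $V$ so that $0<p_1\le\cdots\le p_m\le p_{m+1}$ with $p_{m+1}=\Pr^{v_f}(\Reach(v_b))$. Because $p_i=\sum_j\mcprob(v_i,v_j)p_j$ with total outgoing mass $1-\alpha<1$ and $p_i>0$, some successor of $v_i$ must have index $>i$; hence the mass to indices $>i$ is at least $s$ and the remaining mass to indices $\le i$ is at most $t=1-\alpha-s$. Using $p_j\ge p_{i+1}$ for $j>i$ and $p_j\ge p_1$ for $j\le i$, a short convexity check gives $p_i\ge s\,p_{i+1}+t\,p_1$, equivalently $p_i-\tfrac{t}{1-s}p_1\ge s\bigl(p_{i+1}-\tfrac{t}{1-s}p_1\bigr)$. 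Iterating $m$ times and using $p_{m+1}\ge k\,p_1+l$ (one-step expansion at $v_f$) yields the stated bound on $p_1$. This ordering argument is exactly the missing ingredient in your proposal: it replaces the structural ``worst case is a line'' claim by a purely analytic chain of inequalities that holds for the actual chain. You can graft it onto your decomposition (applied to the $r$-values) or, as the paper does, apply it directly to the $p$-values.
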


\begin{proof} We assume all conditions in Lemma~\ref{lm:lower_bound_general} hold.

For all $v \in V$, a play starting from $v$ has to reach $v_f$ first to reach $v_b$, and therefore:
\[
\Pr^{v}(\Reach(v_b)) = \Pr^{v}(\Reach(v_f)) \cdot \Pr^{v_f}(\Reach(v_b)) \le \Pr^{v_f}(\Reach(v_b))
\]
We rename all $v \in V$ in the order of the probability for a play starting from $v$ to reach $v_b$, such that:
\begin{equation}\label{eq:worstcase-before}
0 < \Pr^{v_1}(\Reach(v_b)) \le \dots \le \Pr^{v_m}(\Reach(v_b)) \le \Pr^{v_f}(\Reach(v_b))    
\end{equation}
For all $i=1,2,\cdots, m$, we denote $\Pr^{v_i}(\Reach(v_b))$ with $p_i$, and we denote $\Pr^{v_f}(\Reach(v_b))$ with $p_{m+1}$, so we write inequality~\ref{eq:worstcase-before} as: 
\[
0<p_1\le p_2 \le \dots \le p_m \le p_{m+1}
\]
It follows from condition~\ref{cond:tri-entries} that for all $i=1,2,\cdots, m$:
\begin{align*}
    & \mcprob(v_i,v_s) = \alpha,\ \mcprob(v_i, v_b) = 0  \\
    & \sum^{m+1}_{j=1} \mcprob(v_i,v_j) = 1-\alpha
\end{align*}
We recall Theorem~\ref{th:reachability} for the linear equation system for calculating reachability probabilities in MCs. As there is no direct transition from $v_i$ to $v_b$, we write $p_i$ as:
\begin{equation}\label{eq:mc_worst1}
p_i = \sum^{m+1}_{j=1}(\mcprob(v_i,v_j)\cdot p_j)    
\end{equation}
    
It follows from condition~\ref{cond:tri-entries} that $\sum^{m+1}_{j=1}\mcprob(v_i,v_j) = 1-\alpha$ and $p_i>0$. We claim that for equation~\ref{eq:mc_worst1} to hold, there must exist $j>i$ such that $\delta(v_i,v_j)>0$. 
Therefore we obtain the following for $p_i$:
\begin{align}\label{eq:chain-pre}
    p_i &=\ \sum^{m+1}_{j=1}(\mcprob(v_i,v_j)\cdot p_j) \notag\\
    & =\ \sum^i_{j=1}(\mcprob(v_i,v_j)\cdot p_j) + \sum^{m+1}_{j=i+1}(\mcprob(v_i,v_j)\cdot p_j) \notag \\
    & \ge\ (\sum^i_{j=1}\mcprob(v_i,v_j))\cdot p_1 + (\sum^{m+1}_{j=i+1}\mcprob(v_i,v_j))\cdot p_{i+1} \notag \\
    & \ge\ t p_1 + s p_{i+1} 
\end{align}
where $t = 1-\alpha-s$.

We rearrange inequality~\ref{eq:chain-pre} into the following form:
\begin{equation}\label{eq:chain-first}
p_i - \frac{t}{1-s}p_1 \ge s(p_{i+1} - \frac{t}{1-s}p_1)    
\end{equation}

Similarly, for $p_{m+1}$ we have the following:
\begin{align}\label{eq:chain-second}
    p_{m+1} &=\ \sum^{m+1}_{j=1}(\mcprob(v_f,v_j)\cdot p_j) + \mcprob(v_f,v_b) \notag\\ 
    & \ge\ (\sum^{m+1}_{j=1}\mcprob(v_f,v_j))\cdot p_1 +  \mcprob(v_f,v_b) \notag\\ 
    & =\ k p_1 + l 
\end{align}
We combine inequalities~\ref{eq:chain-first} and~\ref{eq:chain-second} together to obtain the following:
\begin{align}\label{eq:chain-third}
    p_1 - \frac{t}{1-s}p_1 \notag
    & \ge\ s(p_{2} - \frac{t}{1-s}p_1) \notag\\
    & \ge\ s^2(p_{3} - \frac{t}{1-s}p_1) \notag\\
    & \ge\ \dots \notag\\
    & \ge\ s^{m} (p_{m+1} - \frac{t}{1-s}p_1) \notag\\
    & \ge\ s^{m} (k p_1 + l - \frac{t}{1-s}p_1) 
\end{align}
We rearrange~\eqref{eq:chain-third} to obtain:
\[
p_1 \ge  \frac{(1-s)\cdot s^{m}\cdot l}{1-(s+t) + k\cdot s^{m+1} + t\cdot s^{m} - k \cdot s^{m}}
\]
\end{proof}
\begin{remark}[Actual Worst Case]
In the classical example of Figure~\ref{fig:worst-case-mc}, where $m=5$, the inequality of Lemma~\ref{lm:lower_bound_general} becomes an equality. It can easily be generalized to an arbitrary number of states to obtain the general worst-case. We draw the transitions to $v_s$ with dotted arrows for visual neatness. Note that Lemma~\ref{lm:lower_bound_general} applies no matter $\delta(v_f,v_s)>0$ or not.
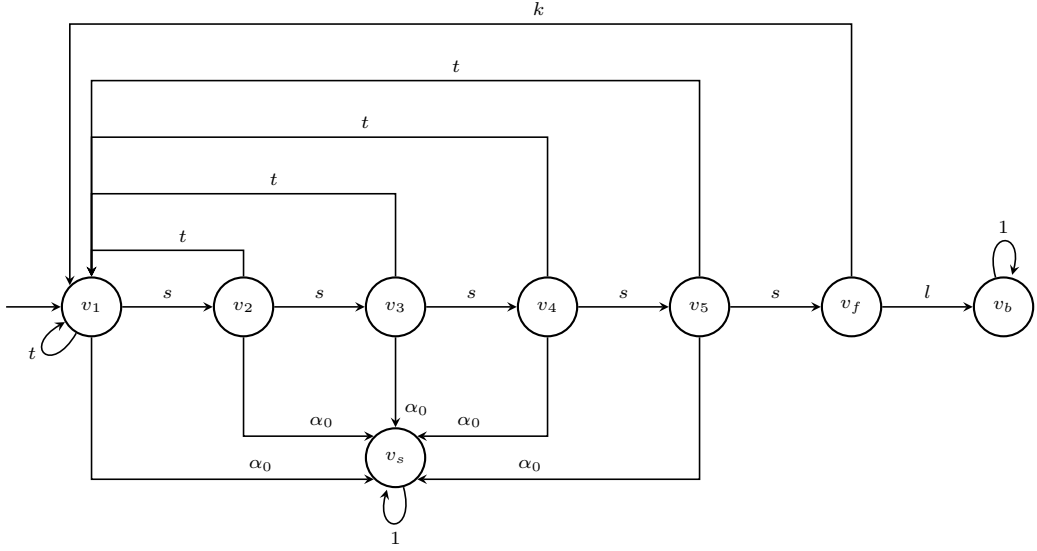
\begin{figure}
    \centering
    \begin{tikzpicture}[]
    \node[circle] (-1) at (-1.3,0) {};
    \node[plainNodes] (0) at (0,0) {$v_1$};
    \node[plainNodes] (1) at (2,0) {$v_2$};
    \node[plainNodes] (2) at (4,0) {$v_3$};
    \node[plainNodes] (3) at (6,0) {$v_4$};
    \node[plainNodes] (4) at (8,0) {$v_5$};
    \node[plainNodes] (5) at (10,0) {$v_f$};
    \node[plainNodes] (6) at (12,0) {$v_b$};
    \node[plainNodes] (7) at (4,-2) {$v_s$};
    
    \draw[->] (-1) -- (0);
    \draw[->] (0) -- node[labelNodes,above] {$s$} (1);
    \draw[->] (1) -- node[labelNodes,above] {$s$} (2);
    \draw[->] (2) -- node[labelNodes,above] {$s$} (3);
    \draw[->] (3) -- node[labelNodes,above] {$s$} (4);
    \draw[->] (4) -- node[labelNodes,above] {$s$} (5);
    \draw[->] (5) -- node[labelNodes,above] {$l$} (6);

    \draw[->] (0) |- node[above, pos=0.8, labelNodes] {$\alpha_0$} (7.south west);
    \draw[->] (1) |- node[above, pos=0.8, labelNodes] {$\alpha_0$} (7.north west);
    \draw[->] (2) -- node[right, pos=0.8, labelNodes] {$\alpha_0$} (7);
    \draw[->] (3) |- node[above, pos=0.8, labelNodes] {$\alpha_0$} (7.north east);
    \draw[->] (4) |- node[above, pos=0.8, labelNodes] {$\alpha_0$} (7.south east);
    % \draw[->,dashed] (5) -- (7);
    % \draw[->,dotted] (5) -- node[above, pos=0.8] {$\alpha$} (7);

    \draw[->] (0) edge[loop, in=210, out=240, looseness=8] node[labelNodes,left] {$t$} (0);
    \draw[->] (1) -- ++(0,0.75) -| node[labelNodes, pos=0.2, above] {$t$} (0);
    \draw[->] (2) -- ++(0,1.5) -| node[labelNodes, pos=0.2, above] {$t$} (0);
    \draw[->] (3) -- ++(0,2.25) -| node[labelNodes, pos=0.2, above] {$t$} (0);
    \draw[->] (4) -- ++(0,3) -| node[labelNodes, pos=0.2, above] {$t$} (0);
    \draw[->] (5) -- ++(0,3.75) -| node[labelNodes, pos=0.2, above] {$k$} (0.north west);
    \draw (6) edge[loop above] node[labelNodes] {$1$} (6);
    \draw (7) edge[loop below] node[labelNodes] {$1$} (7);
    \end{tikzpicture}
    \caption{Worst-case Markov chain}
    \label{fig:worst-case-mc}
\end{figure}
\end{remark}

\subsection{Proof of Lemma~\ref{lm:formal-crosspath}}
\label{app-subsec:crosspath}
\lowercross* 
To prove Lemma~\ref{lm:formal-crosspath}, we first fix an arbitrary strategy pair $\sigma,\gamma\in\Sigma\times\Gamma$ and consider the Markov chains $\mc_{\sigma,\gamma}$ and $\omc_{\sigma,\gamma}$ associated to $G_{\sigma,\gamma}$ and $\wG_{\sigma,\gamma}$. 
Note that in the following part of this proof, we leave out the initial state of the Markov chains. Hence we have $\mc_{\sigma,\gamma}=(V,\mcprob)$ and $\omc_{\sigma,\gamma} = (\oV\uplus \widehat{V}\uplus\{\vwin,\vlose\}, \omcprob)$. Additionally, we denote the set of vertices that belongs to BSCCs in $\mc_{\sigma,\gamma}$ with $V_{B}$ and let $V_{T} = V \backslash V_{B}$. Correspondingly, the set of vertices that belong to pBSCCs in $\omc_{\sigma,\gamma}$ is $\oV_{B}\uplus\widehat{V}_{B}$.

\begin{figure}
\begin{comment}
\begin{subfigure}[]{\textwidth}
\centering
\begin{tikzpicture}
% [baseline=(current bounding box.south)]
\node[circle,draw=none] (s) at (-1.2,0) {};
\node[plainNodes] (0) at (0,0) {$v_0$};
\draw[->] (s) -- (0);

\node[plainNodes] (2) at (4,-2.5) {$v_1$};
\node[plainNodes] (3) at (4,0) {$v_2$};
\node[plainNodes] (4) at (4,-1.5) {$v_3$};
\node[plainNodes] (5) at (8,0) {$v_4$};

\draw[->] (0) -- node[labelNodes,below] {$\frac{1}{2}$} (2);
\draw[->] (0) -- node[labelNodes,above] {$\frac{1}{4}$} (3);
\draw[->] (0) -- node[labelNodes,above] {$\frac{1}{4}$} (4);
\draw[->] (2) to[bend left] node[labelNodes,below] {$\frac{1}{2}$} (0);
\draw[->] (2) -- node[labelNodes,below] {$\frac{1}{2}$} (5);
\draw[->] (3) -- node[labelNodes,above] {$1$} (5);
\draw[->] (4) -- node[labelNodes,right] {$1$} (3);
\draw[->] (5) edge[loop right] node[labelNodes,] {$1$} (5);
\end{tikzpicture}
\caption{An example sub-arena $G_{\sigma,\gamma}$}
\label{subfig:original-sub-arena}
\end{subfigure}
% \hfill
\end{comment}

\begin{subfigure}[b]{\textwidth}
\centering
% \centering
\begin{tikzpicture}[scale=1.1]
\node[circle,draw=none] (s) at (2,1) {};
\node[plainNodes] (0) at (2,0) {$\ov_0$};
\draw[->] (s) -- (0);
\node[hatNodes] (00) at (0,0) {$\widehat{v}_0$};
\node[plainNodes] (2) at (6,-4.5) {$\ov_1$};
\node[hatNodes] (02) at (4,-4.5) {$\widehat{v}_1$};
\node[plainNodes] (3) at (6,0) {$\ov_2$};
\node[hatNodes] (03) at (4,0) {$\widehat{v}_2$};
\node[plainNodes] (4) at (6,-1.5) {$\ov_3$};
\node[hatNodes] (04) at (4,-1.5) {$\widehat{v}_3$};
\node[plainNodes] (5) at (10,0) {$\ov_4$};
\node[hatNodes] (05) at (8,0) {$\widehat{v}_4$};

\node[winNode] (win) at (4,1.5) {$\vwin$};
\node[loseNode] (lose) at (5.5,-3) {$\vlose$};

\draw[->] (0) -- node[labelNodes,right,pos=0.3] {$\frac{1}{2}$} (02);
\draw[->] (0) -- node[labelNodes,above] {$\frac{1}{4}$} (03);
\draw[->] (0) -- node[labelNodes,above,pos=0.6] {$\frac{1}{4}$} (04);
% \draw[->] (1) -- node[above] {$1$} (04);
\draw[->] (2) -- node[labelNodes,above,pos=0.8] {$\frac{1}{2}$} (00);
\draw[->] (2) -- node[labelNodes,right] {$\frac{1}{2}$} (05);
\draw[->] (3) -- node[labelNodes,above] {$1$} (05);
\draw[->] (4) -- node[labelNodes,above] {$1$} (03);
\draw[->] (5) to[bend right=45] node[labelNodes,above] {$1$} (05);

\draw[->] (00) -- node[labelNodes, above] {$1{-}\alpha_0$} (0);
\draw[->] (02) -- node[labelNodes, above] {$1{-}\alpha_1$} (2);
\draw[->] (03) -- node[labelNodes, above] {$1{-}\alpha_2$} (3);
\draw[->] (04) -- node[labelNodes, above] {$1{-}\alpha_3$} (4);
\draw[->] (05) -- node[labelNodes, above] {$1{-}\alpha_4$} (5);

\draw[->] (00) to[bend left=15] node[labelNodes, above, pos=0.7] {$\alpha_0$} (win);
\draw[->] (02.north east) -- node[labelNodes, left, pos=0.3] {$\alpha_1$} (lose);
\draw[->] (03) -- node[labelNodes,right, pos=0.7] {$\alpha_2$} (win);
\draw[->] (04) -- node[labelNodes, right, pos=0.3] {$\alpha_3$} (lose);
\draw[->] (05) to[bend right=15] node[labelNodes, above, pos=0.7] {$\alpha_4$} (win);

\draw[->] (win) edge [loop above] node[labelNodes] {$1$} ();
\draw[->] (lose) edge [loop right] node[labelNodes] {$1$} ();
\end{tikzpicture}
% \caption{Sub-arena $\wG_{\sigma,\gamma}$ induced by the gadget}
\label{subfig:new-sub-arena}
\end{subfigure}
\caption{A sub-arena $\wG_{\sigma,\gamma}$}
% \caption{A sub-arena and }
\label{fig:step0}
\end{figure}
We use the Markov chain in Figure~\ref{fig:step0} as an example. For all vertices $v_i$ in the figure, we let $p(v_i)=i$. We leave out $\vwin$ , $\vlose$  and their incoming transitions, and represent newly introduced vertices with dotted circles for visual neatness. Let us observe that $V_B=\{v_5\}$ and $V_T=\{v_0,v_1,v_2,v_3,v_4\}$.

The proof of Lemma~\ref{lm:formal-crosspath} consists of two parts:
\begin{enumerate}
    \item We make a $4$-step transformation to $\omc_{\sigma,\gamma}$. We denote the resulting Markov Chain of the $i$-th step with $\mc_i = (V_i, \mcprob_i)$. In the $i$-th Markov Chain, we denote with $\Pr_i^v(\Reach(V'_i))$ the probability of reaching $V'_i\subseteq V_i$ from $v$. We show that for all $\ov \in \oV_T$, we have:
    \[\wPr^{\ov}_{\sigma,\gamma}(\crossPath) \ge \Pr^{v}_4(\Reach(v_b))\]
    where $v_b \in V_4$ is a specific vertex in $\mc_4$.
    \item We apply Lemma~\ref{lm:lower_bound_general} to $\mc_4$ to obtain a lower bound of $\Pr^{v}_4(\Reach(v_b))$, and hence a lower bound of $\wPr^{\ov}_{\sigma,\gamma}(\crossPath)$. 
    % \zzcomment{And it stops here.} 
    % We show that when $\alpha_{0}=\pM^{2n+3}$, the following holds: \[ \wPr_{\sigma,\gamma}^\ov(\crossPath)>1-\frac{1}{2}b\]
\end{enumerate}
We now present the proof as follows.
\subsubsection{First Part: Transforming the Markov Chain}\label{app-subsec:transform} 

We use the Markov Chains from Figure~\ref{fig:step0} as a running example.
\begin{enumerate}%[label=\textbf{STEP $\mathbf{\arabic*}$}, start=1, leftmargin=*]
\item\label{step1} In the first step, we eliminate $\widehat{V}$ according to Lemma 1 in~\cite{hahn2011probabilistic}.  Formally we have $V_1 = \oV\uplus\{\vwin,\vlose\}$, and the resulting transition function $\mcprob_1$ can be described as follows:
\begin{itemize}
    \item For all $\ou,\ov \in \oV\times\oV$, $\mcprob_1(\ou, \ov) = (1-\alpha_{p(v)})\cdot\omcprob(\ou, \widehat{v})$.
    \item For all $\ou \in \oV$, 
        \begin{align*}
            \mcprob_1(\ou, \vwin) = \sum_{p(v)\text{ is even}} \alpha_{p(v)}\cdot\omcprob(\ou,\widehat{v}) \\
            \mcprob_1(\ou, \vlose) = \sum_{p(v)\text{ is odd}} \alpha_{p(v)}\cdot\omcprob(\ou,\widehat{v}).
        \end{align*}
    \item $\mcprob_1(\vwin,\vwin)=\mcprob_1(\vlose,\vlose)=1$.
\end{itemize}

Note that this transformation does not change the probability for a play starting from $\ov\in\oV$ to reach $\vwin$ or $\vlose$. We refer to~\cite{hahn2011probabilistic} for the details of this transformation. 
In the rest of this proof, we rename all vertices $\ov\in\oV$  as $v$ for visual neatness. It follows that $\mc_1 = (V\uplus\{\vwin,\vlose\}, \mcprob_1)$, and for all vertices $v \in V_{T}$:
\[
\wPr^{\ov}_{\sigma,\gamma}(\crossPath) = \Pr_1^{v}(\Reach(V_{B}))
\]
In Figure~\ref{fig:step1}, we present the resulting $\mc_1$ for the running example. For visual neatness, we only draw the transitions of $v_0$, which are the most complicated ones, as a demonstration. 
\begin{figure}[H]
\centering
\begin{tikzpicture}[scale=1.5,
]
\node[circle,draw=none] (s) at (-1.2,0) {};
\node[plainNodes] (0) at (0,0) {$v_0$};
\draw[->] (s) -- (0);
\node[plainNodes] (2) at (4,-2.5) {$v_1$};
\node[plainNodes] (3) at (4,0) {$v_2$};
\node[plainNodes] (4) at (4,-1.5) {$v_3$};
\node[winNode] (win) at (0,1.5) {$\vwin$};
\node[loseNode] (lose) at (0,-1.5) {$\vlose$};

\draw[->] (0) -- node[labelNodes,sloped,above] {$\frac{1}{2}(1{-}\alpha_1)$} (2);
\draw[->] (0) -- node[labelNodes,left] {$\frac{1}{4}\alpha_2$} (win);
\draw[->] (0) -- node[labelNodes,left] {$\frac{1}{2}\alpha_1{+}\frac{1}{4}\alpha_3$} (lose);

\draw[->] (0) -- node[labelNodes,sloped,above] {$\frac{1}{4}(1{-}\alpha_2)$} (3);
\draw[->] (0) -- node[labelNodes,sloped,above] {$\frac{1}{4}(1{-}\alpha_3)$} (4);
\end{tikzpicture}
\caption{Before entering an BSCC: eliminating $\widehat{V}$}
\label{fig:step1}
\end{figure}
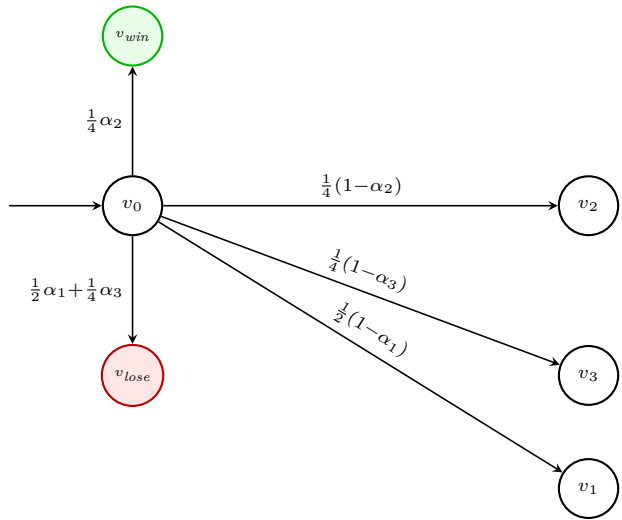
\begin{observation}\label{obs:step1}
Let us recall the original Markov chain $\mc_{\sigma,\gamma} = (V,\mcprob)$. An Effect of this transformation is that for all pairs of vertices $u,v\in V\times V$, $\mcprob_1(u,v)=(1-\alpha_{p(v)})\mcprob(u,v)$. Intuitively, if we do not consider $\vwin$ and $\vlose$, $\mc_1$ has the same transitions as $\mc_{\sigma,\gamma}$ with discounted probabilities, and it follows that for all $v\in V_T$, $\Pr_1^{v}(\Reach(V_{B}))>0$.
\end{observation}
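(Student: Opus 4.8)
The observation makes two claims, and the plan is to establish them in order: first the transition identity $\mcprob_1(u,v)=(1-\alpha_{p(v)})\mcprob(u,v)$, and then, using it, the positivity $\Pr_1^{v}(\Reach(V_B))>0$ for every $v\in V_T$. The whole argument hinges on what eliminating $\widehat{V}$ in Step~\ref{step1} does locally, so I would begin there.

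For the identity, I would recall that in $\omc_{\sigma,\gamma}$ every edge leaving a vertex $\ou\in\oV$ enters some $\widehat{v}\in\widehat{V}$ with probability $\omcprob(\ou,\widehat{v})=\mcprob(u,v)$, while each $\widehat{v}$ has exactly two successors: $\ov$ with probability $1-\alpha_{p(v)}$ and a sink ($\vwin$ or $\vlose$) with probability $\alpha_{p(v)}$. In particular there is no edge inside $\oV$ and no self-loop on any $\widehat{v}$. Eliminating $\widehat{v}$ by the state-elimination procedure of Lemma 1 in~\cite{hahn2011probabilistic} therefore introduces, for each predecessor $\ou$, a single direct edge $\ou\to\ov$ of probability $\omcprob(\ou,\widehat{v})\cdot(1-\alpha_{p(v)})$; no geometric correction $1/(1-\mcprob(\widehat{v},\widehat{v}))$ is incurred, precisely because $\widehat{v}$ carries no self-loop. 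After renaming $\ov$ to $v$ this is exactly $\mcprob_1(u,v)=(1-\alpha_{p(v)})\mcprob(u,v)$, and I would record the by-product that, since $1-\alpha_{p(v)}>0$, the support of $\mc_1$ restricted to $V$ coincides edge-for-edge with that of $\mc_{\sigma,\gamma}$.

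For the positivity, fix $v\in V_T$. Since $v$ is transient in $\mc_{\sigma,\gamma}$, Theorem~\ref{th:limitMarkov} guarantees that a BSCC is reached from $v$ with probability one, so there is a finite path $v=u_0u_1\cdots u_k$ with $u_k\in V_B$ and $\mcprob(u_i,u_{i+1})>0$ for all $i$. By the identity just proved, the same sequence is a path in $\mc_1$ of probability $\prod_{i=0}^{k-1}(1-\alpha_{p(u_{i+1})})\,\mcprob(u_i,u_{i+1})$, every factor of which is strictly positive because each $\alpha_{p(u_{i+1})}<1$; hence $\Pr_1^{v}(\Reach(V_B))>0$.

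The one point that needs care is checking the hypotheses under which Step~\ref{step1} is valid: each $\widehat{v}$ must be transient and self-loop-free, so that the elimination formula degenerates to a plain product, and the whole set $\widehat{V}$ must be removable at once. The latter is legitimate here because $\widehat{V}$ is an independent set --- there are no edges within $\widehat{V}$ --- so the eliminations do not interact, and none of them disturbs the two genuine absorbing BSCCs $\{\vwin\}$ and $\{\vlose\}$, which is what lets us keep reading off reachability probabilities afterwards. Once this is settled, the positivity claim is almost immediate, the only fact worth flagging being $\alpha_{p(v)}<1$, which is exactly what turns \emph{discounting} into a support-preserving operation.
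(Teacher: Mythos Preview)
Your proposal is correct and matches the paper's intent: the paper does not give a separate proof of this observation but treats it as immediate from the description of Step~\ref{step1}, where the formula $\mcprob_1(\ou,\ov)=(1-\alpha_{p(v)})\cdot\omcprob(\ou,\widehat{v})$ is already spelled out and the identification $\omcprob(\ou,\widehat{v})=\mcprob(u,v)$ follows from the gadget construction together with the one-to-one correspondence of strategies. Your unpacking of the elimination step (no self-loop on $\widehat{v}$, hence a plain product) and your positivity argument via a finite path in $\mc_{\sigma,\gamma}$ surviving the discount are exactly the verification the paper leaves implicit.
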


\item\label{step2} In this step, we scale up the values of $\alpha$ to its maximum. Specifically, for all $n\in \mathbb{N}$, we let $\alpha_{n} = \alpha_{0}$. The vertices of the Markov Chain remain the same, hence $V_2 = V\uplus\{\vwin,\vlose\}$. We obtain the resulting transition function $\mcprob_2$ by substituting every occurrence of $\alpha_{p(v)}$ in $\mcprob_1$ with $\alpha_{0}$. Since we scale up the values of $\alpha$, it is observable that for all $v\in V_T$, the probability of reaching $\vwin$ or $\vlose$ from $v$ does not decrease, and thus:
\[
\Pr_1^{v}(\Reach(V_{B})) \ge \Pr_2^{v}(\Reach(V_{B}))
\]
In Figure~\ref{fig:step2}, we present $v_0$ and its outgoing transitions in the resulting $\mc_2$ for the running example. The same idea applies to all other vertices.
\begin{figure}[H]
\centering
\begin{tikzpicture}[scale=1.5,
]
\node[circle,draw=none] (s) at (-1.2,0) {};
\node[plainNodes] (0) at (0,0) {$v_0$};
\draw[->] (s) -- (0);
\node[plainNodes] (2) at (4,-2.5) {$v_1$};
\node[plainNodes] (3) at (4,0) {$v_2$};
\node[plainNodes] (4) at (4,-1.5) {$v_3$};
\node[winNode] (win) at (0,1.5) {$\vwin$};
\node[loseNode] (lose) at (0,-1.5) {$\vlose$};

\draw[->] (0) -- node[labelNodes,sloped,above] {$\frac{1}{2}(1{-}\alpha_0)$} (2);
\draw[->] (0) -- node[labelNodes,left] {$\frac{1}{4}\alpha_0$} (win);
\draw[->] (0) -- node[labelNodes,left] {$\frac{1}{2}\alpha_1{+}\frac{1}{4}\alpha_0$} (lose);

\draw[->] (0) -- node[labelNodes,sloped,above] {$\frac{1}{4}(1{-}\alpha_0)$} (3);
\draw[->] (0) -- node[labelNodes,sloped,above] {$\frac{1}{4}(1{-}\alpha_0)$} (4);
\end{tikzpicture}
\caption{Before entering an BSCC: scaling up $\alpha$}
\label{fig:step2}
\end{figure}
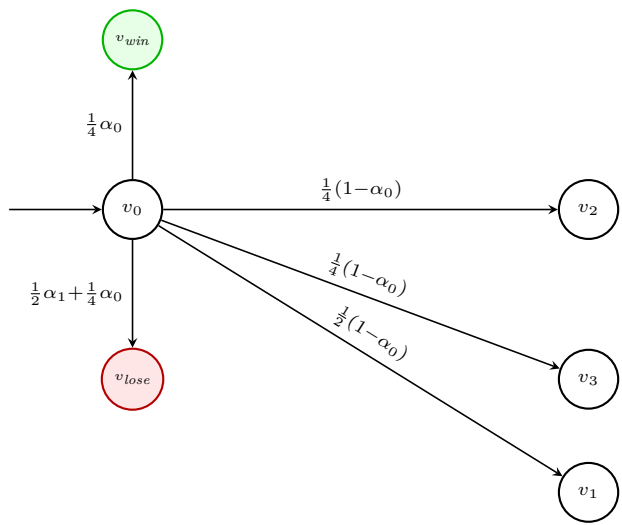
\begin{observation}\label{obs:step2}
\begin{itemize}
    \item For all vertices $v\in V$, we have $\mcprob_2(v,\{\vwin,\vlose\}) = \alpha_{0}$ and $\mcprob_2(v,V)=1-\alpha_{0}$.
    \item For all pairs of vertices $u,v\in V\times V$, $\mcprob_2(u,v) = (1-\alpha_{0})\mcprob(u,v)$. If we do not consider $\vwin$ and $\vlose$, $\mc_2$ still has the same transitions as $\mc_{\sigma,\gamma}$ with discounted probabilities, and it follows that for all $v\in V_T$, $\Pr_2^{v}(\Reach(V_{B}))>0$.
\end{itemize}    
\end{observation}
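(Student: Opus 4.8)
The plan is to derive both bullet points of Observation~\ref{obs:step2} directly from the closed form of $\mcprob_2$, which is obtained by combining the step-1 identity of Observation~\ref{obs:step1} with the step-2 substitution $\alpha_{p(v)}\mapsto\alpha_0$, together with the elementary fact that $\mc_{\sigma,\gamma}$ is a genuine Markov chain, so that $\sum_{w\in V}\mcprob(v,w)=1$ for every $v\in V$. No approximation is involved here: the two claims are exact structural identities together with one positivity statement, so the whole argument is bookkeeping on top of the construction.

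For the first bullet, I would start from the step-1 formulas for $\mcprob_1(\cdot,\vwin)$ and $\mcprob_1(\cdot,\vlose)$ and apply the substitution. Since step 2 replaces every $\alpha_{p(w)}$ by the common value $\alpha_0$, the contribution of each successor $w$ to a sink becomes $\alpha_0\,\mcprob(v,w)$ irrespective of the parity of $p(w)$. Summing the win-sink and lose-sink contributions therefore merges the even and odd sums into $\mcprob_2(v,\{\vwin,\vlose\})=\alpha_0\sum_{w\in V}\mcprob(v,w)=\alpha_0$. The companion identity $\mcprob_2(v,V)=1-\alpha_0$ then follows either by the same summation applied to $\mcprob_2(v,w)=(1-\alpha_0)\mcprob(v,w)$, or, more economically, by normalisation: the only states of $\mc_2$ outside $V$ are $\vwin$ and $\vlose$, so the two quantities must add up to $1$.

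For the second bullet, the transition identity $\mcprob_2(u,v)=(1-\alpha_0)\mcprob(u,v)$ is immediate: it is exactly Observation~\ref{obs:step1}'s $\mcprob_1(u,v)=(1-\alpha_{p(v)})\mcprob(u,v)$ after substituting $\alpha_{p(v)}$ by $\alpha_0$. The remaining positivity claim is where I would spend the little care this proof needs. Because $1-\alpha_0>0$, scaling by this constant preserves supports, so $\mcprob_2(u,v)>0$ if and only if $\mcprob(u,v)>0$; hence the directed graph induced on $V$ by $\mc_2$ coincides with that of $\mc_{\sigma,\gamma}$. In the finite Markov chain $\mc_{\sigma,\gamma}$ every transient vertex reaches a bottom SCC (a standard fact, consistent with Theorem~\ref{th:limitMarkov}), so for each $v\in V_T$ there is a finite directed path from $v$ to $V_B$ using only edges inside $V$. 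This path is present in $\mc_2$ as well, and every edge along it carries positive probability, whence $\Pr_2^{v}(\Reach(V_{B}))>0$.

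I expect no genuine obstacle: the statement is a structural consequence of the gadget and of the two transformation steps. The only subtlety worth flagging explicitly is the positivity argument, which hinges on the observation that multiplying all intra-$V$ transition probabilities by the strictly positive factor $1-\alpha_0$ leaves edge supports---and therefore graph reachability to $V_B$---unchanged; the positivity then transfers verbatim from $\mc_{\sigma,\gamma}$ to $\mc_2$.
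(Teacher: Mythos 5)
Your proposal is correct and follows essentially the same route as the paper, which treats this observation as immediate bookkeeping: substitute $\alpha_{p(v)}\mapsto\alpha_0$ into the step-1 formulas, sum over successors using stochasticity of $\mcprob$, and note that the intra-$V$ transition structure of $\mc_2$ is that of $\mc_{\sigma,\gamma}$ with probabilities uniformly discounted by $1-\alpha_0>0$, so supports and hence reachability of $V_B$ from every $v\in V_T$ are preserved. Your explicit spelling-out of the positivity step (support preservation plus the standard fact that transient states of a finite Markov chain reach a BSCC) is exactly the content the paper compresses into its one-line justification.
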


\item\label{step3} In this step, we merge $\vwin$ and $\vlose$ as a single sink $v_s$ with a self-loop of probability $1$. The subscript $s$ here stands for \textit{sinks}. All incoming transitions to either $\vwin$ or $\vlose$ now go to $v_s$. 

Additionally, we merge all vertices in $V_B$ as a single vertex. To give a clear intuition, we consider this as two sub-steps:
\begin{itemize}
    \item We first duplicate $v_s$, and denote the new one with $v'_s$. For all vertices $v\in V_{B}$, we replace the transitions $(v,v_s)$ with transitions $(v,v'_s)$.
    \item It follows from the construction that there is no transition leaving $V_{B}\uplus\{v'_s\}$. We collapse $V_{B}\uplus\{v'_s\}$ into one single vertex $v_b$ with a self-loop of probability $1$. The subscript $b$ here stands for \textit{bottom}. For all vertices $u\in V_{T}$, all transitions from $u$ into $V_{B}$ are merged into transition $(u,v_b)$. 
\end{itemize}
Formally the vertices now become $V_3 = V_{T}\uplus\{v_b,v_s\}$, and the new transition function $\mcprob_3$ can be described as follows:
\begin{itemize}
    \item For all $u, v \in V_T\times V_T$, we have $\mcprob_3(u,v) = \mcprob_2(u,v)$.
    \item For all $u\in V_{T}$, $\mcprob_3(u,v_s) = \mcprob_2(u,\vwin) + \mcprob_2(u,\vlose)=\alpha_{0}$.
    \item For all $u\in V_{T}$, $\mcprob_3(u,v_b) = \sum_{v\in V_{B}} \mcprob_2(u,v)$.
    \item $\mcprob_3(v_s,v_s) = \mcprob_3(v_b, v_b) = 1$.
\end{itemize}
It follows from the transformation that for all vertices $v \in V_{T}$:
\[
\Pr_2^{v}(\Reach(V_{B})) = \Pr_3^{v}(\Reach(v_{b}))
\]
In Figure~\ref{fig:step3}, we present the resulting $\mc_3$ for the running example. 
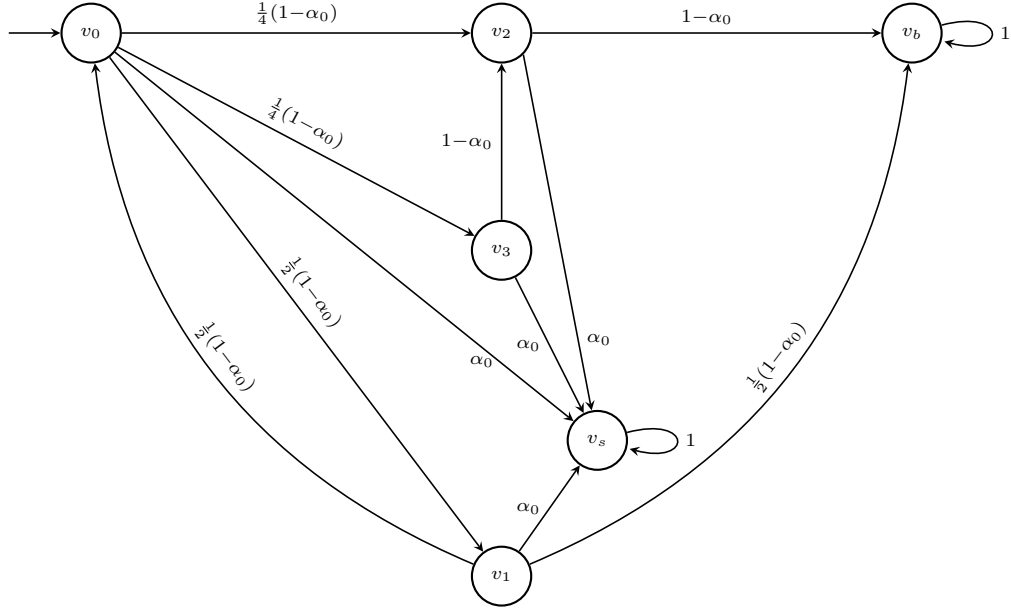
\begin{figure}
\centering
\begin{tikzpicture}[scale=1.8,
]
\node[plainNodes] (0) at (0,4) {$v_0$};
\node[circle, draw=none] (s) at (-0.7, 4) {};
\draw[->] (s) -- (0);
\node[plainNodes] (2) at (3,0) {$v_1$};
\node[plainNodes] (3) at (3,4) {$v_2$};
\node[plainNodes] (4) at (3,2.4) {$v_3$};
\node[plainNodes] (5) at (6,4) {$v_b$};
\node[plainNodes] (6) at (3.7,1) {$v_s$};

\draw[->] (0) -- node[labelNodes,sloped,above] {$\frac{1}{2}(1{-}\alpha_0)$} (2);
\draw[->] (0) -- node[labelNodes,sloped,above] {$\frac{1}{4}(1{-}\alpha_0)$} (3);
\draw[->] (0) -- node[labelNodes,sloped,above] {$\frac{1}{4}(1{-}\alpha_0)$} (4);
% \draw[->] (1) -- node[sloped,above] {$1{-}\alpha_0$} (4);
\draw[->] (2) to[bend left] node[labelNodes,sloped,above] {$\frac{1}{2}(1{-}\alpha_0)$} (0);
\draw[->] (2) to[bend right] node[labelNodes,sloped,above] {$\frac{1}{2}(1{-}\alpha_0)$} (5);
\draw[->] (3) -- node[labelNodes,sloped,above] {$1{-}\alpha_0$} (5);
\draw[->] (4) -- node[labelNodes,left] {$1{-}\alpha_0$} (3);
\draw[->] (5) edge[loop right] node[labelNodes,] {$1$} (5);

\draw[->] (6) edge [loop right] node[labelNodes,] {$1$} (6);

\draw[->] (0) -- node[labelNodes, below, pos=0.8]{$\alpha_0$} (6);
\draw[->] (2) -- node[labelNodes, left, pos=0.5]{$\alpha_0$} (6);
\draw[->] (3.south east) -- node[labelNodes, right, pos=0.8]{$\alpha_0$} (6);
\draw[->] (4) -- node[labelNodes, left, pos=0.5]{$\alpha_0$} (6);
\end{tikzpicture}
\caption{Before entering an BSCC: merging sinks and BSCCs}
\label{fig:step3}
\end{figure}
\begin{observation}\label{obs:step3}
If we do not consider $\vwin$ and $\vlose$, the non-BSCC part of $\mc_3$ still has the same transitions as the non-BSCC part of $\mc_{\sigma,\gamma}$ with discounted probabilities, and it follows that for all $v\in V_T$, $\Pr_3^{v}(\Reach(v_b))>0$.
\end{observation}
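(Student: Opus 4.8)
For this observation the plan is to verify its two parts separately: first the structural statement that, away from the sinks, $\mc_3$ is simply $\mc_{\sigma,\gamma}$ with uniformly discounted probabilities, and then the positivity $\Pr_3^v(\Reach(v_b)) > 0$, which I would derive from this structure together with the limit behavior of $\mc_{\sigma,\gamma}$.

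For the structural part I would just chain the defining formulas of Steps~\ref{step2} and~\ref{step3}. Step~\ref{step3} sets $\mcprob_3(u,v) = \mcprob_2(u,v)$ for all $u,v \in V_T$, while Observation~\ref{obs:step2} gives $\mcprob_2(u,v) = (1-\alpha_0)\mcprob(u,v)$ on $V \times V$, hence in particular on $V_T \times V_T$. Combining them yields $\mcprob_3(u,v) = (1-\alpha_0)\mcprob(u,v)$ for every pair of non-BSCC vertices, which is exactly the "same transitions with discounted probabilities" claim. For the transitions leaving $V_T$ towards the collapsed vertex $v_b$, the same substitution gives $\mcprob_3(u,v_b) = \sum_{v\in V_B}\mcprob_2(u,v) = (1-\alpha_0)\sum_{v\in V_B}\mcprob(u,v)$, so an edge $u \to v_b$ in $\mc_3$ is present exactly when $u$ has an edge into $V_B$ in $\mc_{\sigma,\gamma}$.

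For the positivity part the key remark is that the discount factor satisfies $0 \le \alpha_0 < 1$, so $1-\alpha_0 > 0$; multiplying by a strictly positive constant preserves the support of the transition function. Thus $\mcprob_3(u,v) > 0 \iff \mcprob(u,v) > 0$ for $u,v \in V_T$, and $\mcprob_3(u,v_b) > 0$ iff $u$ reaches $V_B$ in one step in $\mc_{\sigma,\gamma}$. Consequently the reachability graph of $\mc_3$ on $V_T \cup \{v_b\}$ coincides with that of $\mc_{\sigma,\gamma}$ on $V_T \cup V_B$ after collapsing $V_B$ to $v_b$. I would then invoke Theorem~\ref{th:limitMarkov}: since every $v \in V_T$ is transient (contained in no BSCC), a play from $v$ almost surely enters a BSCC, i.e.\ reaches $V_B$, so in particular there is a finite path from $v$ to $V_B$ in $\mc_{\sigma,\gamma}$. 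By support preservation this path survives in $\mc_3$ and ends in $v_b$, giving $\Pr_3^v(\Reach(v_b)) > 0$.

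I do not expect any genuine obstacle, as this is a bookkeeping observation rather than a quantitative bound. The only point needing explicit care is to record that $\alpha_0 < 1$, which guarantees the discounting destroys no edges; and I would note the equivalent shortcut that positivity also follows at once by combining Observation~\ref{obs:step2} (which yields $\Pr_2^v(\Reach(V_B)) > 0$) with the Step~\ref{step3} identity $\Pr_2^v(\Reach(V_B)) = \Pr_3^v(\Reach(v_b))$, so the structural argument and the chain of earlier equalities agree.
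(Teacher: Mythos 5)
Your proposal is correct and matches the paper's (largely implicit) justification: the paper treats this as immediate bookkeeping, chaining $\mcprob_3(u,v)=\mcprob_2(u,v)=(1-\alpha_0)\mcprob(u,v)$ on $V_T\times V_T$ and obtaining positivity exactly via your noted shortcut, i.e.\ Observation~\ref{obs:step2} together with the Step~\ref{step3} identity $\Pr_2^{v}(\Reach(V_B))=\Pr_3^{v}(\Reach(v_b))$. Your primary route through support preservation and Theorem~\ref{th:limitMarkov} just makes explicit the graph-theoretic fact (every transient vertex has a path into $V_B$) that the earlier observations already rely on, so the two arguments are substantively the same.
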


\item Before going into the transformation, we introduce the notion of \textit{frontier vertex}. In the resulting Markov Chain $\mc_3$ of previous steps, we call $v_f \in V_{T}$ a \textit{frontier vertex} if $\mcprob_3(v_f,v_b)>0$, and we denote the set of frontier vertices with $V_{F}$. In the running example, we note that $V_F=\{v_2,v_3\}$.

Since all paths of a finite-state Markov Chain reach a BSCC, for all vertices $v \in V_{T}$, there is at least one path from $v$ to $v_b$. We fix a starting vertex $v_0 \in V_T$. Without loss of generality, we assume all other vertices are reachable from $v_0$. Otherwise, we can always eliminate the unreachable fragment. In the running example, we fix a starting vertex $v_0$, and $v_4$ can be ignored as it is not reachable from $v_0$.

In this step, we apply the following procedure to $\mc_3 = (V_{T}\uplus\{v_b,v_s\}, \mcprob_3)$ to \textit{`defrontierize'} all but one frontier vertices:
\begin{algorithm}[H]
\begin{algorithmic}[1]
\Procedure{DFV}{$\mc = (V_{T}\uplus\{v_b,v_s\}, \mcprob)$}\Comment{ $\mc$ results from~\ref{step3}}
\While{$|V_F| \ge 2$}\Comment{ $\mc$ has more than one frontier vertices}
\State Take an arbitrary vertex $ v_f \in V_F$.
\State Take a vertex $v_p \in V_{T} \backslash V_{F}$, s.t. there exists $\pi = v_p \cdots v'_{f} v_b$ where $v'_f \neq v_f$.
\State $\mcprob(v_f, v_p) \gets \mcprob(v_f, v_p)+\mcprob(v_f,v_b)$
\State $\mcprob(v_f,v_b) \gets 0$
\EndWhile
\State \textbf{return} $\mc$
\EndProcedure
\end{algorithmic}
\end{algorithm}
\begin{remark}[Defrontierize Frontier Vertices]\label{rm:all_can_reach} We note the following regarding the procedure:
\begin{itemize}
    \item Regarding line $4$, there must exist a vertex $v_p \in V_{T} \backslash V_{F}$, such that there exists a path $\pi = v_p \cdots v'_{f} v_b$, where $v'_{f} \neq v_{f}$. Otherwise, it indicates that $v_f$ is the only frontier vertex. The subscript $p$ here stands for \textit{pivot}. Moreover, this ensures that if there is a path $v\cdots v_f v_b$ before an iteration, there is a path $v\cdots v_f v_p \cdots v_b$ after the iteration.
    % for all vertices $  v \in   V_{T} \backslash   V_{F}$, there does not exist a path $\pi=  v\cdots  v'_{f} v_b$,
    \item Regarding lines $5-6$, we consider them as the following equivalent version for clarity: 
    \begin{align*}
        & \mcprob' \gets \mcprob \\
        & \mcprob'(v_f, v_p) \gets \mcprob(v_f, v_p)+\mcprob(v_f,v_b) \\
        & \mcprob'(v_f,v_b) \gets 0 \\
        & \mcprob \gets \mcprob'
    \end{align*}
    It follows that $v_f$ is no longer a frontier vertex at the end of the iteration.
\end{itemize}
\end{remark}

In Figure~\ref{fig:step4}, we present the resulting Markov Chain for the running example. 
We defrontierize $v_3$ by substituting transition $(v_3,v_b)$ with transition $(v_3,v_0)$. We draw deleted and added transitions in orange and purple respectively.
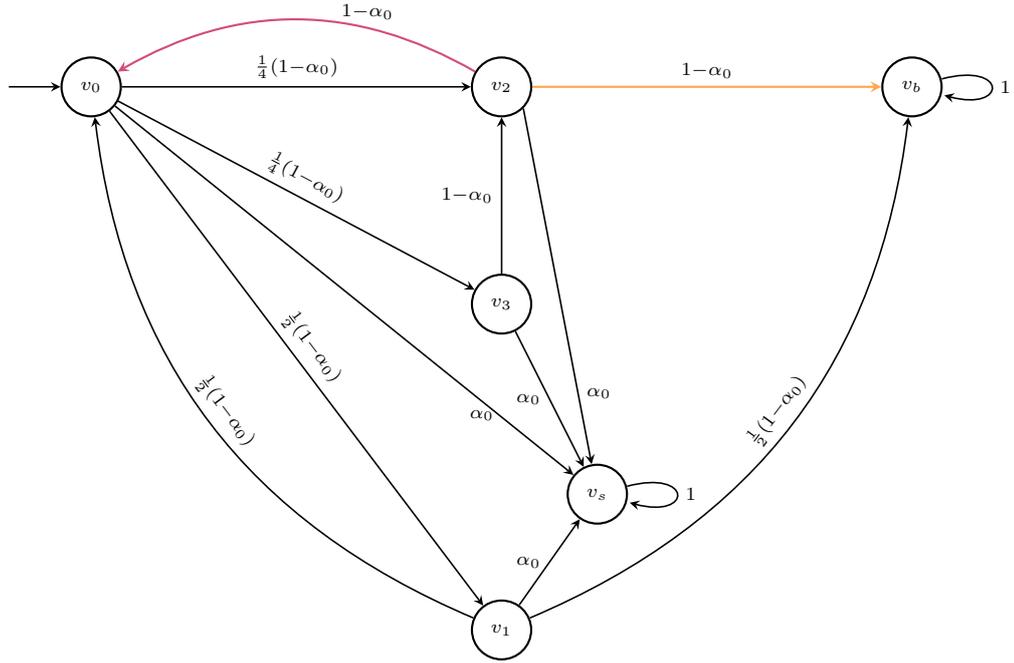
\begin{figure}
\centering
\begin{tikzpicture}[scale=1.8,
]
\node[plainNodes] (0) at (0,4) {$v_0$};
\node[circle, draw=none] (s) at (-0.7, 4) {};
\draw[->] (s) -- (0);
\node[plainNodes] (2) at (3,0) {$v_1$};
\node[plainNodes] (3) at (3,4) {$v_2$};
\node[plainNodes] (4) at (3,2.4) {$v_3$};
\node[plainNodes] (5) at (6,4) {$v_b$};
\node[plainNodes] (6) at (3.7,1) {$v_s$};

\draw[->] (0) -- node[labelNodes,sloped,above] {$\frac{1}{2}(1{-}\alpha_0)$} (2);
\draw[->] (0) -- node[labelNodes,sloped,above] {$\frac{1}{4}(1{-}\alpha_0)$} (3);
\draw[->] (0) -- node[labelNodes,sloped,above] {$\frac{1}{4}(1{-}\alpha_0)$} (4);
% \draw[->] (1) -- node[sloped,above] {$1{-}\alpha_0$} (4);
\draw[->] (2) to[bend left] node[labelNodes,sloped,above] {$\frac{1}{2}(1{-}\alpha_0)$} (0);
\draw[->] (2) to[bend right] node[labelNodes,sloped,above] {$\frac{1}{2}(1{-}\alpha_0)$} (5);
\draw[->,deletedTrans] (3) -- node[labelNodes,sloped,above] {$1{-}\alpha_0$} (5);
\draw[->, addedTrans] (3) to[bend right] node[labelNodes,above,pos=0.3] {$1{-}\alpha_0$} (0);
\draw[->] (4) -- node[labelNodes,left] {$1{-}\alpha_0$} (3);
\draw[->] (5) edge[loop right] node[labelNodes,] {$1$} (5);

\draw[->] (6) edge [loop right] node[labelNodes,] {$1$} (6);

\draw[->] (0) -- node[labelNodes, below, pos=0.8]{$\alpha_0$} (6);
\draw[->] (2) -- node[labelNodes, left, pos=0.5]{$\alpha_0$} (6);
\draw[->] (3.south east) -- node[labelNodes, right, pos=0.8]{$\alpha_0$} (6);
\draw[->] (4) -- node[labelNodes, left, pos=0.5]{$\alpha_0$} (6);
\end{tikzpicture}
\caption{Before entering an BSCC: merging sinks and BSCCs}
\label{fig:step4}
\end{figure}

Regarding this procedure, we present the following lemma, which claims that the probability of reaching $v_b$ from $v_0$ does not increase after each iteration. 
\begin{lemma}[Non-Increasing Reachability]\label{lm:decreasing_reachability}
For each loop iteration, we denote with $\Pr$ and $\Pr'$ the probabilities associated with transition function $\mcprob$ and $\mcprob'$ respectively, and it holds that $\Pr'^{v_0}(\Reach(v_b)) \le \Pr^{v_0}(\Reach(v_b))$.
\end{lemma}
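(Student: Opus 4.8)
The plan is to read both reachability vectors as the unique fixed points of affine, monotone operators on $[0,1]^{V_T}$, and then to show that the operator of the rerouted chain sits pointwise below the operator of the original one. Throughout, write $r(v) = \Pr^{v}(\Reach(v_b))$ and $r'(v) = \Pr'^{v}(\Reach(v_b))$ for $v \in V_T$, and let $c = \mcprob(v_f, v_b)$ be the mass rerouted in the iteration.

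First I would set up the linear systems. Since $v_s$ is absorbing and cannot reach $v_b$, Theorem~\ref{th:reachability} (with target $\{v_b\}$) identifies $r$ and $r'$ as the unique solutions of $x = Ax + b$ and $x = A'x + b'$, where $A, A'$ are the $V_T \times V_T$ blocks of $\mcprob, \mcprob'$ and $b, b'$ collect the direct transition probabilities into $v_b$. Define $F(x) = Ax + b$ and $F'(x) = A'x + b'$ on $[0,1]^{V_T}$; both have nonnegative coefficients, hence are monotone, and $r = F(r)$, $r' = F'(r')$. Note that every $V_T$-vertex keeps its escape probability $\alpha_0 > 0$ to $v_s$ (the iteration only moves mass from $v_b$ to $v_p$), so each row of $A$ sums to at most $1 - \alpha_0 < 1$; thus $A$ has spectral radius below $1$, $I - A$ is invertible, and the Neumann iteration $F^n(x) = A^n x + \sum_{k<n} A^k b$ converges to $r$ from every starting point.

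The crux is the pointwise comparison $F' \le F$. The iteration changes only the row of $v_f$: it sets $\mcprob'(v_f, v_b) = 0$ and adds $c$ to the edge $(v_f, v_p)$. Hence $F'(x)(v) = F(x)(v)$ for every $v \neq v_f$, while at $v_f$
\[
F'(x)(v_f) \;=\; \sum_{u \in V_T} \mcprob(v_f, u)\, x(u) + c\, x(v_p) \;=\; F(x)(v_f) - c\,\bigl(1 - x(v_p)\bigr).
\]
Because $x(v_p) \le 1$ for every $x \in [0,1]^{V_T}$, this gives $F'(x) \le F(x)$ pointwise. In particular $r' = F'(r') \le F(r')$, so $r'$ is a pre-fixpoint of $F$. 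Applying the monotone map $F$ repeatedly produces the increasing, converging chain $r' \le F(r') \le F^2(r') \le \dots \to r$, whence $r' \le r$, and in particular $\Pr'^{v_0}(\Reach(v_b)) = r'(v_0) \le r(v_0) = \Pr^{v_0}(\Reach(v_b))$, as claimed.

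I expect the main obstacle to be the final step, namely upgrading the purely \emph{local} single-row inequality $F' \le F$ to the \emph{global} comparison of the two fixed points. This is exactly where the escape probability $\alpha_0$ to the sink $v_s$ is essential: it forces $A$ to be strictly substochastic, which guarantees uniqueness of the fixed points and monotone convergence of the iteration; without a guaranteed leak out of $V_T$ the operator could have spectral radius $1$ and the deduction $r' \le r$ would fail. I would also remark that the argument never uses $v_p \notin V_F$ nor the existence of the path $\pi = v_p \cdots v'_f v_b$; those hypotheses are needed elsewhere (to ensure the \textsc{DFV} procedure preserves reachability of $v_b$ and terminates), not for the monotonicity of the reachability value proved here.
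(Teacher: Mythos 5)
Your proof is correct, but it takes a genuinely different route from the paper. The paper argues combinatorially on finite paths: it splits the paths from $v_0$ to $v_b$ according to whether they use the edge $(v_f,v_b)$ (before the iteration) or the rerouted edge $(v_f,v_p)$ (after), and shows the rerouted mass is multiplied by a return probability at most $1$, yielding an explicit expression for the difference $\Pr'^{v_0}(\Reach(v_b)) - \Pr^{v_0}(\Reach(v_b)) = \Pr(\{\pi \mid \pi = v_0\dots v_f\})\cdot\mcprob(v_f,v_b)\cdot(\Pr'(\cdot)-1) \le 0$; this requires a "without loss of generality, no pre-existing edge $(v_f,v_p)$" normalization and some care that prefix probabilities agree under both measures. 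You instead compare the two affine Bellman operators, observe the single-row pointwise inequality $F' \le F$ coming from $x(v_p)\le 1$, and lift it to the fixed points via monotonicity and the convergent Neumann iteration. Your route buys a cleaner argument that sidesteps the path-decomposition bookkeeping and the WLOG step, and it makes explicit the one structural fact that actually does the work — the uniform leak $\alpha_0>0$ to $v_s$, which keeps the $V_T\times V_T$ block strictly substochastic so that $I-A$ is invertible and the pre-fixpoint argument closes; your closing remark that the pivot conditions ($v_p\notin V_F$ and the path $v_p\cdots v'_f v_b$) are irrelevant to this lemma and only serve to preserve positive reachability in the \textsc{DFV} loop is also accurate. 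What the paper's approach buys in exchange is an elementary, quantitative identity for the decrease, with no appeal to spectral radius or operator iteration.
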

\begin{proof}
Without loss of generality, we assume that there is no transition $(v_f,v_p)$ before the loop iteration. Otherwise, we can consider it as a separate transition from the newly added one.

\begin{itemize}
\item Before the loop iteration, we can classify the finite paths from $v_0$ to $v_b$ into two sets, namely: 
\begin{itemize}
    \item the paths that reach $v_b$ via the transition ($v_f, v_b)$;
    \item the paths that reach $v_b$ via a transition ($v'_f, v_b)$, where $ v'_f \neq v_f$.
\end{itemize} 
We write the finite paths from $v_0$ to $v_b$ formally as:
\[
\{ v_0 \cdots v_f v_b \} \uplus \{ v_0 \cdots v'_f v_b\ |\ v'_f \neq v_f \}
\]
Note that before the loop iteration, the transition $(v_f,v_p)$ is not available in either case, so we also write them as:
\[
\{ \overbrace{v_0 \cdots}^{\text{no }v_f v_p}  v_f v_b \} \uplus \{ \overbrace{v_0 \cdots}^{\text{no } v_f v_p} v'_f v_b\ |\ v'_f \neq v_f \}
\]
\item After the iteration, we can classify the finite paths from $v_0$ to $v_b$ into two sets:
\begin{itemize}
    \item the paths that take the transition $(v_f,v_p)$ at least once;
    \item the paths never take the transition $(v_f,v_p)$.
\end{itemize}
We write them formally as:
\[
    \{ \overbrace{v_0 \cdots}^{\textit{no } v_f v_p} v_f v_p \cdots v_b\} \uplus \{\overbrace{v_0 \cdots}^{\textit{no } v_f v_p} v'_f v_b\ |\  v'_f \neq v_f \}
\]
Note that in the latter set $v'_f \neq v_f$ is ensured implicitly since the transition $(v_f,v_b)$ has been removed.
\end{itemize}
Therefore the difference in reachability probabilities lies in the former set of finite paths, and we obtain the following:
\begin{align*}
  & \Pr'^{v_0}(\Reach(v_b)) - \Pr^{v_0}(\Reach(v_b)) \\
=\ & \Pr'(\{ \overbrace{v_0 \cdots}^{\textit{no } v_f v_p} v_f v_p \cdots v_b\}) - \Pr(\{ \overbrace{v_0 \cdots}^{\textit{no } v_f v_p} v_f v_b \}) \\
=\ & \Pr'(\{\pi\ |\ \pi = \overbrace{v_0\dots}^{\textit{no } v_f v_p} v_f\}) \cdot \mcprob'(v_f, v_p) \cdot \Pr'(\{\pi\ |\ \pi = v_p\dots v_f\}) \\
  & - \Pr(\{\pi\ |\ \pi=  v_0 \dots v_f\}) \cdot \mcprob(v_f,v_b) \\
=\ & \Pr(\{\pi\ |\ \pi = v_0 \dots v_f\}) \cdot \mcprob(v_f,v_b) \cdot \Pr'(\{\pi\ |\ \pi = v_p\dots v_f\}) \\
  & - \Pr(\{\pi\ |\ \pi= v_0 \dots v_f\}) \cdot \mcprob(v_f,v_b) \\
=\ & \Pr(\{\pi\ |\ \pi = v_0 \dots v_f\}) \cdot \mcprob(v_f,v_b) \cdot (\Pr'(\{\pi\ |\ \pi = v_p\dots v_{f}\})-1) \\
\le\ & 0
\end{align*}
We thus conclude that in each loop iteration, the probability of reaching $v_b$ from $v_0$ does not increase.
\end{proof}

Furthermore, we apply a similar transformation as before to the last frontier vertex $v_f$ so that $\mcprob_4(v_f,v_b)=\pM\alpha_{0}$. The redundant probability is transferred to transition $(v_f,v_0)$. The transformation on the running example is demonstrated in Figure~\ref{fig:step5}, where $v_2$ is the last frontier vertex. We consider the transition $(v_2,v_b)$ as two separate ones, and substitute one of them with a transition $(v_2,v_0)$ for clarity. Note that the deleted and added transitions are drawn in orange and purple respectively.
\begin{figure}
\centering
\begin{tikzpicture}[scale=1.5]
\node[plainNodes] (0) at (0,4) {$v_0$};
\node[plainNodes] (2) at (3,5) {$v_2$};
\node[plainNodes] (5) at (6,4) {$v_b$};
\draw[->] (2) -- node[labelNodes,above,sloped] {$\frac{1}{2}(1{-}\alpha_0)$} (0);
\draw[->,addedTrans] (2) to[bend left] node[labelNodes,sloped, below] {$(\frac{1}{2}{-}\pM)(1{-}\alpha_0)$} (0);
\draw[->] (2) -- node[labelNodes,sloped,above] {$\pM(1{-}\alpha_0)$} (5);
\draw[->,deletedTrans] (2) to[bend right] node[labelNodes,sloped,below] {$(\frac{1}{2}{-}\pM)(1{-}\alpha_0)$} (5);

\end{tikzpicture}
\caption{Before entering an rBSCC: the final action}
\label{fig:step5}
\end{figure}
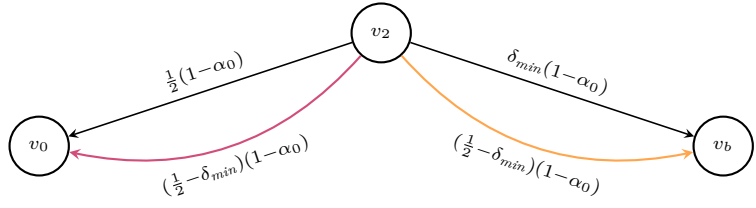
As this transformation is essentially the same as `defrontierizing', we obtain that the probability of reaching $v_b$ from $v_0$ does not increase through a similar analysis to the proof of Lemma~\ref{lm:decreasing_reachability}.

After applying the procedure, only one frontier vertex remains, and we denote it with $v_f$. The resulting Markov Chain is $\mc_4 = (V_{T}\uplus\{v_s,v_b\}, \mcprob_4)$. It follows from Lemma~\ref{lm:decreasing_reachability} that:
\[\Pr^{v_0}_3(\Reach(v_b)) \ge \Pr^{v_0}_4(\Reach(v_b))\]

\begin{observation}\label{obs:stpe4}
\begin{itemize}
\item It follows from previous observations and Remark~\ref{rm:all_can_reach} that for all vertices $v\in V_T\backslash\{v_f\}$, we have $\Pr_4^{v}(\Reach(v_{b}))>0$.
\item It follows from previous transformation steps that:
\begin{itemize}
\item $\{v_s\}$ and $\{v_b\}$ are the only BSCCs, and $\mcprob_4(v_s,v_s)= \mcprob_4(v_b,v_b) =1$.
\item For all vertices $v\in V\backslash\{v_f\}$, we have $\mcprob_4(v,v_s) = \alpha_{0}$, $\mcprob_4(v,V_T)=1-\alpha_{0}$ and $\mcprob_4(v,v_b)=0$.
\item For all $v,v'\in V_T\times V_T$, $\mcprob_4(v,v')>0$ implies that $\mcprob_4(v,v')>\pM(1-\alpha_{0})$.
\end{itemize}
\item For the frontier vertex, we have $\mcprob_4(v_f,v_b)=\pM\alpha_{0}$ and $\mcprob_4(v_f,V_T)=\pM(1-\alpha_{0})$.
\end{itemize}
\end{observation}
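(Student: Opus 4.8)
The plan is to verify each clause of the observation by tracking both the transition probabilities and the reachability structure through the four transformation steps, treating the defrontierization loop as a sequence of local probability transfers whose invariants are exactly those established in Observation~\ref{obs:step2}, Observation~\ref{obs:step3}, and Remark~\ref{rm:all_can_reach}. Since steps~\ref{step1}--\ref{step3} already fix the shape of $\mc_3$ and only the fourth step (the defrontierization together with the concluding action on $v_f$) acts on $\mc_3$, the bulk of the argument is to show that this step preserves the transition profile of every non-frontier vertex while concentrating all direct access to $v_b$ in the single surviving frontier vertex.

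First I would settle the transition profile of the non-frontier vertices. By Observation~\ref{obs:step2} every $v\in V$ satisfies $\mcprob_2(v,\{\vwin,\vlose\})=\alpha_0$ and $\mcprob_2(v,V)=1-\alpha_0$, with $\mcprob_2(u,v)=(1-\alpha_0)\mcprob(u,v)$ for $u,v\in V$. Merging the two sinks in step~\ref{step3} turns the first identity into $\mcprob_3(v,v_s)=\alpha_0$, and by definition a non-frontier vertex has $\mcprob_3(v,v_b)=0$, hence $\mcprob_3(v,V_T)=1-\alpha_0$. The only effect of the defrontierization on a vertex is, when it is chosen as $v_f$, to move its whole $v_b$-mass onto a pivot $v_p\in V_T$; this leaves the $v_s$-mass fixed at $\alpha_0$, keeps the total $V_T$-mass at $1-\alpha_0$, and makes the direct $v_b$-transition vanish. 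Thus every $v\in V_T\setminus\{v_f\}$ ends with $\mcprob_4(v,v_s)=\alpha_0$, $\mcprob_4(v,V_T)=1-\alpha_0$ and $\mcprob_4(v,v_b)=0$. Since $\mcprob(u,v)\ge\pM$ whenever positive, every surviving $V_T$-to-$V_T$ transition has value at least $\pM(1-\alpha_0)$: defrontierization only adds mass to existing such transitions or creates one out of $\mcprob_3(v_f,v_b)$, itself a sum of terms each at least $\pM(1-\alpha_0)$.

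Next I would establish the reachability clause and the BSCC clause together, since they rest on a single invariant. By Observation~\ref{obs:step3}, $\Pr_3^{v}(\Reach(v_b))>0$ for every $v\in V_T$. Remark~\ref{rm:all_can_reach} shows each loop iteration preserves this: a path $v\cdots v_f v_b$ is turned into a path $v\cdots v_f v_p\cdots v_b$, so a positive-probability route to $v_b$ remains from every vertex, and the concluding action on the last frontier vertex is an instance of the same transfer. Hence $\Pr_4^{v}(\Reach(v_b))>0$ for all $v\in V_T\setminus\{v_f\}$, and trivially for $v_f$ itself, which reaches $v_b$ in one step. This in turn yields the BSCC clause: $v_s$ and $v_b$ retain their probability-$1$ self-loops from step~\ref{step3} and are untouched afterwards, while no subset of $V_T$ can be bottom because every $V_T$-vertex keeps positive-probability access to $v_b$; so $\{v_s\}$ and $\{v_b\}$ are the only BSCCs.

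Finally, the frontier clause is read off from the concluding action that fixes the residual $v_b$-mass of the unique remaining frontier vertex and shifts the surplus onto a $V_T$-transition exactly as in defrontierization, giving $\mcprob_4(v_f,v_b)=\pM\alpha_0$ and $\mcprob_4(v_f,V_T)=\pM(1-\alpha_0)$. I expect the main obstacle to be precisely this defrontierization bookkeeping: one must argue that the while loop terminates leaving exactly one frontier vertex, that the pivot $v_p$ required by Remark~\ref{rm:all_can_reach} always exists (otherwise $v_f$ would already be the unique frontier vertex), and that each local transfer simultaneously preserves the $v_s$-mass, the total $V_T$-mass, and the $\pM(1-\alpha_0)$ lower bound on positive transitions --- with the monotone non-increase of $\Pr^{v_0}(\Reach(v_b))$ supplied by Lemma~\ref{lm:decreasing_reachability}, so that this simplified chain $\mc_4$ under-approximates $\wPr_{\sigma,\gamma}^{\ov}(\crossPath)$ as required for the remainder of Lemma~\ref{lm:formal-crosspath}.
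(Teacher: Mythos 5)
Your verification of the first two bullets is sound and follows the same route the paper intends: the $v_s$-mass $\alpha_0$ and the total $V_T$-mass $1-\alpha_0$ are invariants of each \textsc{DFV} iteration, the path surgery of Remark~\ref{rm:all_can_reach} preserves $\Pr_4^{v}(\Reach(v_b))>0$, and this rules out any BSCC inside $V_T$. (You also correctly restrict the second bullet to $V_T\setminus\{v_f\}$, where the paper's ``$V\backslash\{v_f\}$'' is loose, since $V_B$ no longer exists in $\mc_4$.)

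The genuine gap is in the frontier clause, which you claim can be ``read off'' from the concluding action. It cannot. The action you yourself describe --- keep a $\pM$-fraction of the $v_b$-mass of the last frontier vertex and shift the surplus $(c-\pM)(1-\alpha_0)$ onto a $V_T$-edge --- yields $\mcprob_4(v_f,v_b)=\pM(1-\alpha_0)$, and then by the very conservation invariants you established one paragraph earlier ($\mcprob_4(v_f,v_s)=\alpha_0$, total mass $1$) it forces $\mcprob_4(v_f,V_T)=(1-\pM)(1-\alpha_0)$. The values you assert instead, $\mcprob_4(v_f,v_b)=\pM\alpha_0$ and $\mcprob_4(v_f,V_T)=\pM(1-\alpha_0)$, do not sum correctly: $\alpha_0+\pM\alpha_0+\pM(1-\alpha_0)=\alpha_0+\pM\neq 1$. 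The corrected values are the ones actually depicted in Figure~\ref{fig:step5} and the ones used immediately afterwards when Lemma~\ref{lm:lower_bound_general} is instantiated with $s=l=\pM(1-\alpha_0)$ and $k=t=(1-\pM)(1-\alpha_0)$; the $\pM\alpha$ pattern belongs to the inside-BSCC analysis (Observation~\ref{obs:insiderbscc}), where the sink is entered through an $\alpha$-edge, not to the present setting, where $v_b$ is entered through $(1-\alpha_0)$-discounted edges. A blind proof should have derived the transition masses and flagged this discrepancy rather than endorsing the stated numbers, which are underivable from the construction. A smaller lacuna of the same kind: your argument that every positive $V_T$-to-$V_T$ edge has mass at least $\pM(1-\alpha_0)$ covers the \textsc{DFV} iterations (where a full $v_b$-mass $\geq\pM(1-\alpha_0)$ is transferred) but not the final action, whose surplus $(c-\pM)(1-\alpha_0)$ may create a fresh edge $(v_f,v_0)$ of smaller mass; this is harmless downstream, because Lemma~\ref{lm:lower_bound_general} imposes no lower bound on the outgoing edges of $v_f$, but it means the clause as literally stated is not established (and in fact need not hold) for $v=v_f$.
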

\end{enumerate}

Combining the reasoning of the $4$-step transformation, we obtain that for all $\ov \in \oV_T$, $\wPr^{\ov}_{\sigma,\gamma}(\crossPath) \ge \Pr^{v}_4(\Reach(v_b))$, which concludes the first part of the proof of Lemma~\ref{lm:formal-crosspath}.

\subsubsection{Second Part: Applying Lemma~\ref{lm:lower_bound_general}} 

\begin{comment}
In this section, we show that conditions \ref{aspt:b0} and \ref{aspt:b1} are satisfiable if we arrange the monotonically decreasing function $\alpha:\mathbb{N}\rightarrow [0,1]$ properly. We first present a useful lemma, and then show how to arrange $\alpha$ to satisfy~\ref{aspt:b0} and~\ref{aspt:b1} respectively.
\end{comment}

We apply Lemma~\ref{lm:lower_bound_general} to $\mc_4 = (V_{T}\uplus\{v_s,v_b\}, \mcprob_4)$ from to obtain a lower bound of $\wPr^{\ov}_{\sigma,\gamma}(\crossPath)$.
It follows from Observation~\ref{obs:stpe4} that Lemma~\ref{lm:lower_bound_general} can be applied to $\mc_4$, where:
\begin{itemize}
    \item $s=l=\pM(1-\alpha_{0})$
    \item $k=t=(1-\pM)(1-\alpha_{0})$
\end{itemize}
Therefore we obtain a lower bound of $\Pr^{v}_4(\Reach(v_b))$ as follows:
\[
 \Pr^{v}_4(\Reach(v_b)) \ge \frac{(1-x_0)x_0^n}{(1-x_0)-(1-x_0^n)x_1} 
\]
where $x_0 = \pM(1-\alpha_{0})$ and $x_1=(1-\pM)(1-\alpha_{0})$. We thus obtain a lower bound of $\wPr^{\ov}_{\sigma,\gamma}(\crossPath)$ as well.

\subsection{Proof of Lemma~\ref{lm:formal-wineven}}
\label{app-subsec:wineven}
\lowerwineven*

To prove Lemma~\ref{lm:formal-wineven},
we fix an arbitrary pair of strategies $\sigma,\gamma\in\Sigma\times\Gamma$, and obtain as before the Markov chains $\mc_{\sigma,\gamma}$ and $\omc_{\sigma,\gamma}$. We again leave out the initial state. Hence we have $\mc_{\sigma,\gamma}=(V,\mcprob)$ and $\omc_{\sigma,\gamma} = (\oV\uplus \widehat{V}\uplus\{\vwin,\vlose\}, \omcprob)$. 

We consider an arbitrary even BSCC $C$ in $\mc_{\sigma,\gamma}$, but the numeric results would be the exact same in an odd BSCC, replacing the winning sink with a losing sink. We denote the smallest priority of $C$ with $k$. The counterpart of $C$ in $\omc_{\sigma,\gamma}$ is denoted with $\oC\uplus\widehat{C}$. It follows from the construction of $\wG$ that no transitions are leaving $\oC\uplus\widehat{C}$ except for entering the winning and losing sinks. Therefore, $\oC\uplus\widehat{C}\uplus\{\vwin,\vlose\}$ and their internal transitions can be viewed as an independent Markov Chain, and we denote it with $\omc_c=(\oC\uplus\widehat{C}\uplus\{\vwin,\vlose\},\omcprob_c)$. We denote all vertices of priority $k$ in $C$ with $C_k$, and let $C_{>k} = C\backslash C_k$.

Similar to the proof of Lemma~\ref{lm:formal-crosspath}, the proof of Lemma~\ref{lm:formal-wineven} also consists of two parts:
\begin{enumerate}
\item We make a $4$-step transformation to $\omc_c$. 
We denote the resulting Markov Chain of the $i$-th step with $\mc_{ci} = (V_{ci}, \mcprob_{ci})$. In the $i$-th Markov Chain, we denote with $\Pr^v_{ci}(\Reach(V'_i))$ the probability of reaching $V'_i\subseteq V_i$ from $v$. We show that for all $\ov\in\oC$, we have:
\[\wPr^{\ov}_{\sigma,\gamma}(\Reach(\vwin)) \ge \Pr^{v}_{c4}(\Reach(\vwin))\]
\item We apply Lemma~\ref{lm:lower_bound_general} to $\mc_{c4}$ to obtain a lower bound of $\Pr^{v}_{c4}(\Reach(\vwin))$, and hence a lower bound of $\wPr^{k}_{\sigma,\gamma}(\winEven)$. 
\end{enumerate}
We now present the proof as follows.
\subsubsection{First Part: Transforming the Markov Chain}
\begin{enumerate}
\item We first apply the same transformation as~\ref{step1} of the previous proof to $\omc_c$, eliminating $\widehat{C}$ and renaming all $\ov\in\oC$ as $v$. We have $\mc_{c1} = (C\uplus\{\vwin,\vlose\}, \mcprob_{c1})$, and for all $\ov\in \oC$, it holds that $\wPr^{\ov}_{\sigma,\gamma}(\Reach(\vwin)) = \Pr^{v}_{c1}(\Reach(\vwin))$.
\item For all vertex pairs $u,v\in C\times C_{>k}$ with $p(v)=2m$ for some $m\in\mathbb{N}$, we transfer probability $\mcprob(u,v)\cdot\alpha_{2m}$ from transition $(u,\vwin)$ to transition $(u,\vlose)$. 
We give Figure~\ref{fig:insideRBSCCs-2} as an illustration. The deleted and added transitions are drawn in orange and purple respectively, and the transition $(u,\vlose)$ has probability $y+\mcprob(u,v)\cdot\alpha_{2m}$ with two separate ones combined. 
It follows that for all $v\in C$, we have $\Pr^{v}_{c1}(\Reach(\vwin)) \ge \Pr^{v}_{c2}(\Reach(\vwin))$.
\begin{figure}[H]
\centering
\begin{tikzpicture}[scale=1.2,
]
\node[plainNodes] (0) at (0,0) {$u$};
\node[plainNodes,label=above:{\scriptsize $p(v)=2m>k$}] (1) at (-4,1) {$v$};
\node[winNode] (2) at (4,0) {$\vwin$};
\node[loseNode] (3) at (0,-3) {$\vlose$};
\draw[->] (0) -- node[labelNodes,sloped,below] {$\mcprob(u,v)(1{-}\alpha_{2m})$} (1);

\draw[->] (0) -- node[labelNodes,above] {$x{-}\mcprob(u,v)\alpha_{2m}$} (2);
\draw[->,deletedTrans] (0) to[bend right] node[labelNodes,below] {$\mcprob(u,v)\alpha_{2m}$} (2);
\draw[->,addedTrans] (0) to[bend left] node[labelNodes,right] {$\mcprob(u,v)\alpha_{2m}$} (3);
\draw[->] (0) -- node[labelNodes,left] {$y$} (3);

\end{tikzpicture}
\caption{Inside a BSCC: the second step}
\label{fig:insideRBSCCs-2}
\end{figure}
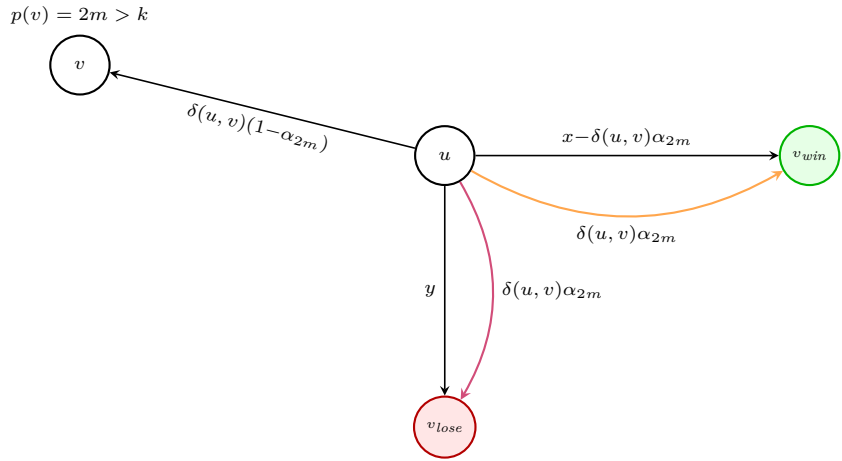
After the second step, we observe that all transitions entering $\vwin$ have the probability $(\sum_{v\in C_k}\mcprob(u,v))\cdot\alpha_{k}$ for some $u\in C$.

\item We scale up $\alpha$ in a similar manner as in step~\ref{step2} of Section~\ref{app-subsec:transform}. For all integers $n\ge k+1$, we let $\alpha_{n}=\alpha_{k+1}$. It follows that for all~$v\in C$, the probability of reaching $\vlose$ from $v$ does not decrease, and thus we have $\Pr^{v}_{c2}(\Reach(\vwin)) \ge \Pr^{v}_{c3}(\Reach(\vwin))$.
\item We look at an vertex pair $v,v_k\in C\times C_k$. 
\begin{figure}[H]
\centering
\begin{tikzpicture}[scale=1.4,
]
% \node[plainNodes] (0) at (0,0) {$v$};
% \node[plainNodes] (1) at (-4,1) {$v_k$};
% \node[winNode] (2) at (4,0) {$\vwin$};
% \node[loseNode] (3) at (0,-2) {$\vlose$};
% \draw[->] (0) -- node[labelNodes,sloped,below] {$\mcprob(v,v_k)(1-\alpha_k)$} (1);
% \draw[->] (0) -- node[labelNodes,above] {$x - \mcprob(v,v_k)\alpha_k$} (2);
% \draw[->] (0) to[bend right] node[labelNodes,below] {$\mcprob(v,v_k)\alpha_k$} (2);
% \draw[->] (0) -- node[labelNodes,left] {$y$} (3);

% \draw[-{Latex[length=4mm,width=4mm]}] (0,-4) -- (0,-5);

\node[plainNodes] (00) at (0,-6) {$v$};
\node[plainNodes] (01) at (-4,-5) {$v_k$};
\node[winNode] (02) at (4,-6) {$\vwin$};
\node[loseNode] (03) at (0,-8) {$\vlose$};
\draw[->] (00) -- node[labelNodes,sloped,below] {$\mcprob(v,v_k)(1{-}\alpha_{k})+$\textcolor{purple}{$\mcprob(v,v_k)(\alpha_k {-} \alpha_{k+1})$}} (01);
\draw[->] (00) -- node[labelNodes,above] {$x{-}\mcprob(v,v_k)\alpha_k$} (02);
\draw[->,deletedTrans] (00) to[bend right] node[labelNodes,below] {$\mcprob(v,v_k)\alpha_k$} (02);
\draw[->] (00) -- node[labelNodes,left] {$y+$ \textcolor{purple}{$\mcprob(v,v_k)\alpha_{k+1}$}} (03);
% \draw[->] (00) to[bend left] node[right] {$\mcprob(v,v_k)\alpha_k+1)$} (03);

\end{tikzpicture}
\caption{Inside a BSCC: the fourth step}
\label{fig:insideRBSCCs-4}
\end{figure}
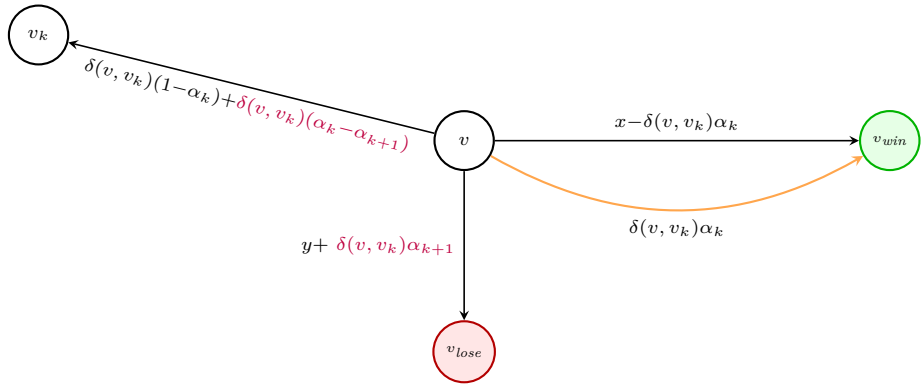
If we make the transformation given in Figure~\ref{fig:insideRBSCCs-4}, with a similar analysis as in the proof of Lemma~\ref{lm:decreasing_reachability}, we obtain the following for all $v_0\in C$, where $\Pr$ and $\Pr'$ are the probabilities associated with the transition functions before and after the transformation respectively.
\begin{align*}
 & \Pr'^{v_0}(\Reach(\vwin)) - \Pr^{v_0}(\Reach(\vwin)) \\
=\ &  \Pr(\{\pi\ |\ \pi=v_0\cdots v\})\mcprob(v,v_k)(\alpha_{k}-\alpha_{k+1})\Pr'(\Reach(\vwin)) \\
   & - \Pr(\{\pi\ |\ \pi=v_0\cdots v\})\mcprob(v,v_k) \\
=\ & \Pr(\{\pi\ |\ \pi=v_0\cdots v\})\mcprob(v,v_k)((\alpha_{k}-\alpha_{k+1})\Pr'(\Reach(\vwin))-1) \\
\le\ & 0
\end{align*}
We apply this transformation to all but one vertex pairs $v,v_k\in C\times C_k$. After this there exists only one vertex pair $v,v_k\in C\times C_k$, where transition $(v,v_k)$ and transition $(v,\vwin)$ have probabilities $x(1-\alpha_{k})$ and $x\alpha_{k}$ respectively. For clarity, we consider these two transitions as four, and apply the transformation given in Figure~\ref{fig:insideRBSCCs-5} to finish the last step.  
\begin{figure}[H]
\centering
\begin{tikzpicture}[scale=1.2,
]
% \node[plainNodes] (0) at (0,0) {$v$};
% \node[plainNodes] (1) at (-4,1) {$v_k$};
% \node[winNode] (2) at (4,0) {$\vwin$};
% \node[loseNode] (3) at (0,-3) {$\vlose$};
% \draw[->] (0) -- node[sloped,above] {$\pM(1{-}\alpha_k)$} (1);
% \draw[->] (0) to[bend left] node[below] {$(x{-}\pM)(1{-}\alpha_k)$} (1);
% \draw[->] (0) -- node[above] {$\pM\alpha_k$} (2);
% \draw[->] (0) to[bend right] node[below] {$(x{-}\pM)\alpha_k$} (2);
% \draw[->] (0) -- node[right,pos=0.8] {$y$} (3);

% \draw[-{Latex[length=4mm,width=4mm]}] (0,-4) -- (0,-5);

\node[plainNodes] (00) at (0,-6) {$v$};
\node[plainNodes] (01) at (-4,-5) {$v_k$};
\node[winNode] (02) at (4,-6) {$\vwin$};
\node[loseNode] (03) at (0,-8.5) {$\vlose$};
\draw[->] (00) -- node[labelNodes,sloped,above] {$\pM(1{-}\alpha_k)$} (01);
\draw[->] (00) to[bend left] node[labelNodes,pos=0.1, anchor=south east, yshift=-0.5cm,xshift=0.3cm] {$(x{-}\pM)(1{-}\alpha_{k})+$\textcolor{purple}{$(x{-}\pM)(\alpha_k {-} \alpha_{k+1})$}} (01);
\draw[->] (00) -- node[labelNodes,above] {$\pM\alpha_k$} (02);
\draw[->, deletedTrans] (00) to[bend right] node[labelNodes,below] {$(x{-}\pM)\alpha_k$} (02);
\draw[->] (00) -- node[labelNodes, left] {$y+$ \textcolor{purple}{$(x{-}\pM)\alpha_{k+1}$}} (03);
% \draw[->] (00) to[bend left] node[right] {$\mcprob(v,v_k)\alpha_k+1)$} (03);

\end{tikzpicture}
\caption{Inside a BSCC: the final action}
\label{fig:insideRBSCCs-5}
\end{figure}
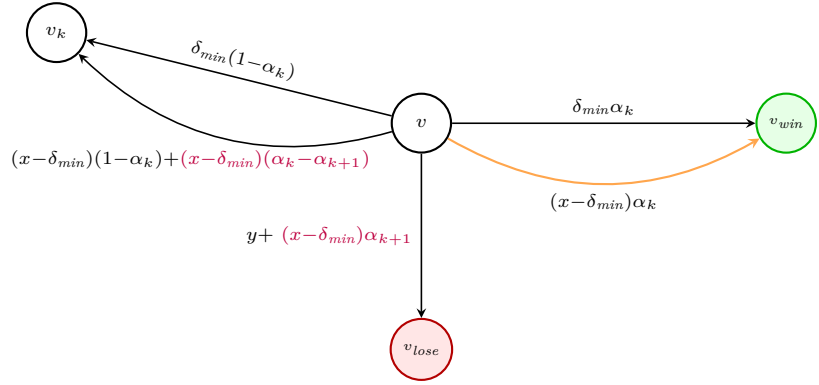
As a result, we observe that $\mcprob_{c4}(v,\vwin)=\pM\alpha_{k}$ and $\mcprob_{c4}(v,\vlose)=(1-\pM)\alpha_{k+1}$. As this last action is essentially the same as the transformation in Figure~\ref{fig:insideRBSCCs-4}, we claim that for all $v\in C$, we have $\Pr^{v}_{c3}(\Reach(\vwin)) \ge \Pr^{v}_{c4}(\Reach(\vwin))$. We also make the following observation regarding $\mc_{c4}$.

\begin{observation}\label{obs:insiderbscc}
\begin{itemize}
\item There are only two BSCCs in $\mc_{c4}$, namely $\{\vwin\}$ and $\{\vlose\}$, and $\mcprob_{c4}(\vwin)=\mcprob_{c4}(\vlose)=1$.
\item There is only one vertex, denoted with $v_f\in C$, such that $\mcprob_{c4}(v_f,\vwin)>0$. We also have $\mcprob_{c4}(v_f,C) = \pM(1-\alpha_{k})+(1-\pM)(1-\alpha_{k+1})$ and $\mcprob_{c4}(v_f,\vwin) = \pM\alpha_{k}$.
\item Since we never change the connectivity inside $C$ during the transformations, for all $v \in C$, we have $\Pr^{v}_{c4}(\Reach(\vwin))>0$. For all $v\in C\backslash\{v_f\}$, we have $\mcprob_{c4}(v,\vlose)=\alpha_{k+1}$, $\mcprob_{c4}(v,C)=1-\alpha_{k+1}$ and $\mcprob_{c4}(v,\vwin)=0$; for all $v'\in C$, $\mcprob_{c4}(v,v')>0$ implies that $\mcprob_{c4}(v,v')>\pM(1-\alpha_{k+1})$.
\end{itemize}
\end{observation}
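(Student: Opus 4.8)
The plan is to prove Observation~\ref{obs:insiderbscc} by treating its three asserted properties as invariants and tracking them through the four-step transformation $\omc_c \to \mc_{c1} \to \mc_{c2} \to \mc_{c3} \to \mc_{c4}$ already defined above. The guiding principle is that none of the four steps ever alters the outgoing transitions of $\vwin$ or $\vlose$, and that each step either redistributes probability mass between $\vwin$ and $\vlose$, or moves mass onto edges that remain inside $C$; in both cases the total outgoing mass of every vertex and the within-$C$ reachability structure are preserved. I would first record, once and for all, the per-step effect on the transition function: step~1 gives $\mcprob_{c1}(u,v)=(1-\alpha_{p(v)})\mcprob(u,v)$ for $u,v\in C$ together with sink edges of mass $\alpha_{p(v)}$; step~2 removes from $\vwin$ exactly the contributions of even-priority $v\in C_{>k}$; step~3 replaces every $\alpha_n$ with $n\ge k+1$ by $\alpha_{k+1}$ while leaving $\alpha_k$ untouched; and step~4 performs the redistribution of Figures~\ref{fig:insideRBSCCs-4} and~\ref{fig:insideRBSCCs-5}.

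For the first bullet I would argue that $\{\vwin\}$ and $\{\vlose\}$ are the only BSCCs. Their self-loops of probability $1$ survive all four steps, since no step touches sink-outgoing edges. To rule out a BSCC inside $C$, it suffices to show that every $v\in C$ keeps a positive-probability edge into a sink in $\mc_{c4}$: the final per-vertex formulas established for the second and third bullets give $\mcprob_{c4}(v,\vlose)=\alpha_{k+1}>0$ for $v\neq v_f$ and $\mcprob_{c4}(v_f,\vwin)=\pM\alpha_k>0$, so no subset of $C$ can be closed under $\mcprob_{c4}$. Since $C$ is finite and $\vwin,\vlose$ are absorbing, Theorem~\ref{th:limitMarkov} then forces $\{\vwin\}$ and $\{\vlose\}$ to be the only BSCCs.

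The second and third bullets are proved together by following the $\vwin$-incoming mass and the intra-$C$ edges. After step~1 the mass entering $\vwin$ from $u$ is $\sum_{p(v)\text{ even}}\alpha_{p(v)}\mcprob(u,v)$; step~2 deletes the $C_{>k}$ summands, leaving $\mcprob_{c2}(u,\vwin)=\alpha_k\sum_{v\in C_k}\mcprob(u,v)$; step~3 does not affect $\alpha_k$, so this is unchanged in $\mc_{c3}$. Step~4 then applies the transformation of Figure~\ref{fig:insideRBSCCs-4} to every vertex pair but one, each time redirecting that vertex's $\vwin$-mass into its $C_k$-successor edge and into $\vlose$; the identity $(1-\alpha_k)+(\alpha_k-\alpha_{k+1})=1-\alpha_{k+1}$ shows that this simultaneously normalizes each such intra-$C$ edge to carry the factor $(1-\alpha_{k+1})$ and sets $\mcprob_{c4}(v,\vwin)=0$, $\mcprob_{c4}(v,\vlose)=\alpha_{k+1}$, $\mcprob_{c4}(v,C)=1-\alpha_{k+1}$ for every $v\neq v_f$. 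For the single remaining frontier vertex $v_f$ I would read the exact values off Figure~\ref{fig:insideRBSCCs-5}: the $\vwin$-edge is $\mcprob_{c4}(v_f,\vwin)=\pM\alpha_k$, while the two surviving $C$-edges contribute $\pM(1-\alpha_k)$ to the distinguished $C_k$-successor and $(1-\pM)(1-\alpha_{k+1})$ to the rest, summing to $\mcprob_{c4}(v_f,C)=\pM(1-\alpha_k)+(1-\pM)(1-\alpha_{k+1})$, with $\mcprob_{c4}(v_f,\vlose)$ the complement. The bound $\mcprob_{c4}(v,v')>\pM(1-\alpha_{k+1})$ on positive intra-$C$ edges follows from $\mcprob(v,v')\ge\pM$ together with the factor $(1-\alpha_{k+1})$, and connectivity inside $C$ is preserved because step~4 never deletes an intra-$C$ edge without augmenting another, so every $v$ still reaches $v_f$ and hence $\vwin$, giving $\Pr^{v}_{c4}(\Reach(\vwin))>0$.

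The main obstacle is the bookkeeping in step~4: I must verify that the repeated Figure~\ref{fig:insideRBSCCs-4}/\ref{fig:insideRBSCCs-5} redistribution, argued non-increasing for reachability exactly as in Lemma~\ref{lm:decreasing_reachability}, achieves all three effects at once --- collapsing the $\vwin$-mass onto a unique $v_f$, rewriting every intra-$C$ edge with the uniform factor $(1-\alpha_{k+1})$, and leaving connectivity intact --- and that the final arithmetic of Figure~\ref{fig:insideRBSCCs-5} yields precisely $\pM\alpha_k$ and $\pM(1-\alpha_k)+(1-\pM)(1-\alpha_{k+1})$. Everything else is routine invariant-checking propagated step by step.
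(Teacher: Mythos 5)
Your proposal is correct and takes essentially the same approach as the paper, which justifies this observation in exactly the same way: by reading the three bullets off the four transformation steps as bookkeeping invariants (the paper states the observation without further argument, as a direct consequence of the construction), and your verification of the mass totals for $v_f$ and for $v\in C\backslash\{v_f\}$ matches the paper's Figures~\ref{fig:insideRBSCCs-4} and~\ref{fig:insideRBSCCs-5}. One nitpick: your factor argument yields $\mcprob_{c4}(v,v')\ge\pM(1-\alpha_{k+1})$ rather than the paper's strict inequality, but the non-strict bound is in fact the correct one (equality occurs when $\mcprob(v,v')=\pM$) and is all that the subsequent application of Lemma~\ref{lm:lower_bound_general} requires.
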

\end{enumerate}
\subsubsection{Second Part: Applying Lemma~\ref{lm:lower_bound_general}} 
When applying Lemma~\ref{lm:lower_bound_general} to $\mc_{c4}$ from the previous part we obtain a lower bound of $\wPr^{\ov}_{\sigma,\gamma}(\Reach(\vwin))$. To obtain a general lower bound for all $\wv\in \oV\uplus\widehat{V}$, we multiply it by a factor $(1-\alpha_{k+1})$. For the case of an odd pBSCC, we would obtain the exact same lower bound, but for reaching the losing sink, and so an upper bound on the probability of winning would be $1-\wPr^{\ov}_{\sigma,\gamma}(\Reach(\vlose))$.

\subsection{Proof of Lemma~\ref{lm:interval}}
\label{app-subsec:interval}

\interval*

For all vertices $v \in V$, we make the following observations:
\begin{enumerate}%[label=(\arabic*)]
    \item\label{aaa} $\Pr_{\sigma,\gamma}^v(\enterEven)+\Pr_{\sigma,\gamma}^v(\enterOdd) = 1$
    \item\label{aab} $\wPr^{\ov}_{\sigma,\gamma}(\enterEven)+\wPr^{\ov}_{\sigma,\gamma}(\enterOdd) = \wPr^{\ov}_{\sigma,\gamma}(\crossPath)$
    \item\label{bbb} $\wPr_{\sigma,\gamma}^\ov(\enterEven) \le \Pr_{\sigma,\gamma}^v(\enterEven)$ and $\wPr_{\sigma,\gamma}^\ov(\enterOdd) \le \Pr_{\sigma,\gamma}^v(\enterOdd)$
    \item\label{ccc} Regarding $\wPr_{\sigma,\gamma}^\ov(\enterEven)$, we obtain the following:
    \begin{align*}
       & \wPr_{\sigma,\gamma}^\ov(\enterEven) \\ 
    =\ & \wPr_{\sigma,\gamma}^\ov(\crossPath) - \wPr_{\sigma,\gamma}^\ov(\enterOdd) & \textit{by~\ref{aab}} &\\
    \ge\ & \wPr_{\sigma,\gamma}^\ov(\crossPath) -\Pr_{\sigma,\gamma}^v(\enterOdd) & \textit{by~\ref{bbb}} & \\
    =\ & \wPr_{\sigma,\gamma}^\ov(\crossPath) + \Pr_{\sigma,\gamma}^v(\enterEven) - 1 & \textit{by~\ref{aaa}} &
    \end{align*}
\end{enumerate}

We thus obtain a lower bound of $\wPr_{\sigma,\gamma}^\ov(\Reach(\vwin))$ as follows:
\begin{align*}
        & \wPr_{\sigma,\gamma}^\ov(\Reach(\vwin)) \\ 
    \ge\ & \wPr_{\sigma,\gamma}^\ov(\enterEven) \cdot \wPr_{\sigma,\gamma}^{k}(\winEven) \\
    \ge\ & (\wPr_{\sigma,\gamma}^\ov(\crossPath) + \Pr_{\sigma,\gamma}^v(\enterEven) - 1) \cdot \wPr_{\sigma,\gamma}^{k}(\winEven) \quad & \tag*{\textit{by~\ref{ccc}}} \\
    \ge\ & (x+\Pr_{\sigma,\gamma}^v(\enterEven)-1)\cdot y \quad  \\
    =\   & y \cdot \Pr_{\sigma,\gamma}^v(\enterEven) - y + x\cdot y
\end{align*}

Similarly, we obtain an upper bound of $\wPr_{\sigma,\gamma}^\ov(\Reach(\vwin))$ as follows:
\begin{align*}
       & \wPr_{\sigma,\gamma}^\ov(\Reach(\vwin))  \\
    <\ & (1-\wPr_{\sigma,\gamma}^\ov(\crossPath)) + \wPr_{\sigma,\gamma}^\ov(\enterEven)\cdot 1 + \wPr_{\sigma,\gamma}^\ov(\enterOdd) \cdot \wPr_{\sigma,\gamma}^{k}(\winOdd) \\
    \le\ & (1-\wPr_{\sigma,\gamma}^\ov(\crossPath))+\Pr_{\sigma,\gamma}^v(\enterEven) \cdot 1 + \\
         & (\wPr_{\sigma,\gamma}^\ov(\crossPath)-\Pr_{\sigma,\gamma}^v(\enterEven)) \cdot \wPr_{\sigma,\gamma}^{k}(\winOdd) \tag*{\textit{by~\ref{aab}}}\\
    =\ & (1-\wPr_{\sigma,\gamma}^{k}(\winOdd))\cdot \Pr_{\sigma,\gamma}^v(\enterEven) + 1 - \wPr_{\sigma,\gamma}^\ov(\crossPath)\cdot(1-\wPr_{\sigma,\gamma}^{k}(\winOdd)) \\
    \le\ & \Pr_{\sigma,\gamma}^v(\enterEven) + 1-x\cdot y 
\end{align*}
Therefore we get a range of $\wPr_{\sigma,\gamma}^\ov(\Reach(\vwin))$ as follows:
\[
y \cdot \Pr_{\sigma,\gamma}^v(\enterEven) - y+x\cdot y \ \le \ \wPr_{\sigma,\gamma}^\ov(\Reach(\vwin)) \ \le\ \Pr_{\sigma,\gamma}^v(\enterEven) + 1-x\cdot y
\]

By Corollary~\ref{corollary:pms}, we have:
\begin{itemize}
    \item In the SPG $(G,\gparity(p))$, $\arnpr^{v}_{\sigma,\gamma}(\gparity(p)) = \Pr^{v}_{\sigma,\gamma}(\enterEven)$.
    \item In the SSG $(\wG, \greach(\vwin))$, $\oarnpr^{\ov}_{\sigma,\gamma}(\greach(\vwin)) = \wPr^{\ov}_{\sigma,\gamma}(\Reach(\vwin))$.
\end{itemize}
Hence we conclude:
\[
y\cdot \arnpr^{v}_{\sigma,\gamma} - y+x\cdot y \  \leq \
\oarnpr^{\ov}_{\sigma,\gamma} \ \leq \
\arnpr^{v}_{\sigma,\gamma} + 1-x \cdot y
\]

\subsection{Proof of Lemma~\ref{lm:ar-alpha}}
\label{app-subsec:ar-alpha}

\aralpha*

\subsubsection{Arranging $\alpha_{0}$}

We start by giving a bound on $\wPr^{\ov}_{\sigma,\gamma}(\crossPath)$ probability, that involves $\alpha_0$. To do so, we make use of Lemma~\ref{lm:formal-crosspath}. 
\begin{corollary}[Another Lower Bound of $\crossPath$ Probability]  \label{coro:actual-crosspath}
For all strategy pairs $\sigma,\gamma\in\Sigma_{\exists}\times\Sigma_{\forall}$, for all $\ov\in\oV$, the following holds:
\begin{equation}
\wPr^{\ov}_{\sigma,\gamma}(\crossPath) > \frac{\pM^n(1-\alpha_0)^{n+1}}{2\alpha_0+\pM^n(1-\alpha_0)^{n+1}}\notag
\end{equation}
\end{corollary}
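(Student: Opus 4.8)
The plan is to derive the stated bound purely algebraically from Lemma~\ref{lm:formal-crosspath}, which already gives
\[
\wPr^{\ov}_{\sigma,\gamma}(\crossPath) \ge \frac{(1-x_0)x_0^n}{(1-x_0)-(1-x_0^n)x_1},
\]
with $x_0 = \pM(1-\alpha_0)$ and $x_1 = (1-\pM)(1-\alpha_0)$. The first thing I would record is the identity $x_0 + x_1 = (1-\alpha_0)$, so that $1 - x_0 - x_1 = \alpha_0$. Expanding the denominator then gives $(1-x_0)-(1-x_0^n)x_1 = 1 - x_0 - x_1 + x_0^n x_1 = \alpha_0 + x_0^n x_1$, and the lower bound collapses to the much cleaner $\frac{(1-x_0)x_0^n}{\alpha_0 + x_0^n x_1}$. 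It therefore suffices to show that this quantity strictly exceeds the claimed bound, since $\wPr^{\ov}_{\sigma,\gamma}(\crossPath)$ dominates it.

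Next I would substitute $x_0^n = \pM^n(1-\alpha_0)^n$, observing that the target numerator satisfies $\pM^n(1-\alpha_0)^{n+1} = x_0^n(1-\alpha_0)$, so the goal becomes
\[
\frac{(1-x_0)x_0^n}{\alpha_0 + x_0^n x_1} > \frac{x_0^n(1-\alpha_0)}{2\alpha_0 + x_0^n(1-\alpha_0)}.
\]
After cancelling the positive factor $x_0^n$ and cross-multiplying (all denominators are positive), I would expand and group terms. Using the cancellation $1 - x_0 - x_1 = \alpha_0$ once more, the $x_0^n$-terms combine into $x_0^n(1-\alpha_0)\alpha_0$, and the remaining terms combine into $\alpha_0(1 - 2x_0 + \alpha_0)$, so the difference of the two sides reduces to $\alpha_0\bigl[(1 - 2x_0 + \alpha_0) + x_0^n(1-\alpha_0)\bigr]$.

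The concluding step invokes the standing assumption $\pM \in (0, \tfrac12]$: from $x_0 = \pM(1-\alpha_0) \le \tfrac12(1-\alpha_0)$ we get $2x_0 \le 1-\alpha_0$, hence $1 - 2x_0 + \alpha_0 \ge 2\alpha_0 > 0$; combined with $x_0^n(1-\alpha_0)\ge 0$ and $\alpha_0 > 0$, the bracket is strictly positive, giving the strict inequality. I expect the main obstacle to be purely bookkeeping — correctly expanding the cross-multiplied polynomial and spotting the cancellation $1 - x_0 - x_1 = \alpha_0$ that makes everything collapse. Once that cancellation is in place, the entire argument hinges on the single clean condition $\pM \le \tfrac12$, and the strictness follows immediately.
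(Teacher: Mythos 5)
Your proposal is correct and follows essentially the same route as the paper: both start from Lemma~\ref{lm:formal-crosspath}, collapse the denominator via $1-x_0-x_1=\alpha_0$, and close the argument using $\pM\le\tfrac12$ together with $\alpha_0>0$ for strictness. The only cosmetic difference is that you verify the target inequality by one cross-multiplication and factorization of the difference as $\alpha_0\bigl[(1-2x_0+\alpha_0)+x_0^n(1-\alpha_0)\bigr]$, whereas the paper chains two intermediate term-by-term estimates; the content is the same.
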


\begin{proof}
We scale down $\wPr^{\ov}_{\sigma,\gamma}(\crossPath)$ as follows: 
\begin{align*}
& \wPr^{\ov}_{\sigma,\gamma}(\crossPath) \\
\ge\ & \frac{(1-x_0)x_0^n}{(1-x_0)-(1-x_0^n)x_1} \tag*{\textit{Lemma~\ref{lm:formal-crosspath}}}\\
=\ & \frac{(1-\pM(1-\alpha_0))\pM^n(1-\alpha_0)^n}{\alpha_0+(1-\pM)\pM^n(1-\alpha_0)^{n+1}} \\
>\ & \frac{(1-\pM)(1-\alpha_0)\pM^n(1-\alpha_0)^n}{\alpha_0+(1-\pM)\pM^n(1-\alpha_0)^{n+1}} \tag*{\textit{since $1-\pM(1-\alpha_0)>(1-\pM)(1-\alpha_0)$}}\\
=\ & \frac{\pM^n(1-\alpha_0)^{n+1}}{\frac{\alpha_0}{1-\pM}+\pM^n(1-\alpha_0)^{n+1}} \\
\ge\ & \frac{\pM^n(1-\alpha_0)^{n+1}}{2\alpha_0+\pM^n(1-\alpha_0)^{n+1}} \tag*{\textit{since $1-\pM\ge\frac{1}{2}$}}
\end{align*}
\end{proof}

We can now arrange $\alpha_0$. It follows from Corollary~\ref{coro:actual-crosspath} that:
\begin{equation*}
\wPr^{\ov}_{\sigma,\gamma}(\crossPath) > \frac{\pM^n(1-\alpha_0)^{n+1}}{2\alpha_0+\pM^n(1-\alpha_0)^{n+1}}
\end{equation*}
Therefore to show:
\begin{equation}\label{eq:qcrosspath-1}
\wPr^{\ov}_{\sigma,\gamma}(\crossPath) > \frac{4-\epsilon}{4}
\end{equation}
where $\epsilon = \frac{1}{\epsi}$, it suffices to show that:
\begin{equation}
\frac{\pM^n(1-\alpha_0)^{n+1}}{2\alpha_0+\pM^n(1-\alpha_0)^{n+1}} \ge \frac{4-\epsilon}{4}
\end{equation}
which can be further simplified as:
\begin{equation}\label{eq:qcrosspath-2}
\epsilon \ge \frac{8\alpha_0}{2\alpha_0+\pM^n(1-\alpha_0)^{n+1}}
\end{equation}
We show that when $\alpha_0\le\frac{\pM^n}{8\epsi}$, inequality~\ref{eq:qcrosspath-2} holds. We start with the right side:
\begin{align*}
     & \frac{8\alpha_0}{2\alpha_0+\pM^n(1-\alpha_0)^{n+1}} \\
\le\ & \frac{8\alpha_0}{2\alpha_0+\pM^n(1-(n+1)\alpha_0)} \tag*{\textit{follows from Bernoulli's inequality}} \\
=\ & \frac{8\alpha_0}{\alpha_0+\pM^n+\alpha_0({1-(n+1)\pM^n})} \\
\le\ & \frac{8\alpha_0}{\alpha_0+\pM^n} \tag*{\textit{since $1-(n+1)\pM^n>0$ }} \\
\le\ & \frac{\frac{\pM^n}{\epsi}}{\frac{\pM^n}{8\epsi}+\pM^n} \\
=\ & \frac{1}{\frac{1}{8}+\epsi} \\
\le\ & \frac{1}{\epsi} = \epsilon
\end{align*}
Therefore we obtain that when $\alpha_{0}\le\frac{\pM^n}{8\epsi}$, inequality~\ref{eq:qcrosspath-1} holds, and thus condition~\ref{aspt:c0} is satisfied.

\subsubsection{Arranging $\alpha_{k+1}/\alpha_{k}$}

We start by getting a lower bound on $\wPr_{\sigma,\gamma}^{k}(\winEven)$ for even $k$'s, which makes use of $\alpha_{k+1}$ and $\alpha_k$. The reasoning for odd $k$'s is symmetric. To do so, we use Lemma~\ref{lm:formal-wineven}.

\begin{corollary}[Another Lower Bound of $\winEven$ Probability]  \label{coro:actual-wineven}
For all strategy pairs $\sigma,\gamma\in\Sigma\times\Gamma$, for all even $k$, the following holds:
\begin{equation}
\wPr_{\sigma,\gamma}^{k}(\winEven) > \frac{\pM^n(1-\pM)-\frac{\aKP}{\aK}}{\pM^n(1-\pM)+\frac{\aKP}{\aK}}\notag
\end{equation}
\end{corollary}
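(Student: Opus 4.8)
The plan is to mirror the proof of Corollary~\ref{coro:actual-crosspath}: start from the lower bound of Lemma~\ref{lm:formal-wineven}, substitute the closed forms of $x_2,x_3,x_4,x_5$ (together with $t=x_3$), and then scale the resulting fraction down through a short chain of elementary inequalities until the quotient in the statement appears. Throughout I write $a=\alpha_k$ and $b=\alpha_{k+1}$ for brevity and set $C=\pM^n(1-\pM)$, so that the target reads $\frac{C-b/a}{C+b/a}$, equivalently $\frac{aC-b}{aC+b}$ after clearing the fraction by $a$.

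First I would compute the numerator and denominator of the Lemma~\ref{lm:formal-wineven} bound explicitly. Substituting $x_2=\pM(1-b)$, $x_4=\pM a$ and $1-x_2=1-\pM(1-b)$, the numerator (including the leading factor $1-b$) collapses to
\[ N \;=\; a\,\pM^n(1-b)^n\bigl(1-\pM(1-b)\bigr). \]
For the denominator, using $x_2+x_3=1-b$ so that $1-(x_2+x_3)=b$, and factoring $x_2^{n-1}$ out of the remaining terms, I would show
\[ D \;=\; b \;+\; \pM^n(1-b)^{n-1}\,G, \qquad G=(1-\pM)(1-b)^2-(1-a)\bigl(1-\pM(1-b)\bigr). \]
The bound then becomes $\wPr_{\sigma,\gamma}^{k}(\winEven)\ge N/D$, and the task is purely to bound $N/D$ from below.

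The key — and least obvious — step is to first expand $G$ as $G=a(1-\pM)-b(2-\pM-a\pM)+b^2(1-\pM)$ and only then discard its (negative) middle term, giving $G\le(1-\pM)(a+b^2)$; dropping the negative summand of $G$ before expanding, or dropping the whole negative block of $D$, loses so much that the resulting estimate no longer beats the target. After this, lower-bounding the numerator via $1-\pM(1-b)\ge 1-\pM$, and applying Bernoulli's inequality $(1-b)^n\ge 1-nb$ in the numerator together with $(1-b)^{n-1}\le 1$ and $b^2\le b$ in the denominator, I reduce the claim to the purely rational inequality
\[ \frac{aC(1-nb)}{aC+b(1+C)} \;\ge\; \frac{aC-b}{aC+b}, \]
which after cross-multiplication reduces to $u(1-C)+b(1+C)\ge nu(u+b)$ with $u=aC$; this in turn follows from $n\,\pM^n(1-\pM)\le \tfrac12$ (a consequence of $\pM\le\tfrac12$ and $a=\alpha_k\le1$, using $C\le 2^{-n}$). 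Dividing numerator and denominator by $a$ restores the form $\frac{C-b/a}{C+b/a}$, and strictness of the last scaling step gives the strict inequality claimed.

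The main obstacle I anticipate is the denominator $D$, whose negative contribution through $G$ is genuinely of the same order as its positive part, so it cannot be discarded wholesale. Making the cancellation between $(1-\pM)(1-b)^2$ and $(1-a)(1-\pM(1-b))$ explicit by expanding $G$ before bounding is what keeps the estimate tight enough to cross the $\frac{aC-b}{aC+b}$ threshold; getting the $(1-b)$-powers to vanish cleanly while preserving this margin, and tracking where the single strict inequality originates, is the delicate part. The odd-$k$ case is then symmetric, bounding the probability of reaching the losing sink instead.
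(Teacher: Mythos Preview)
Your proposal is correct and follows essentially the same strategy as the paper: substitute the closed forms into the Lemma~\ref{lm:formal-wineven} bound, upper-bound the denominator by estimating $G$, lower-bound the numerator via Bernoulli, and compare to the target fraction. The differences are cosmetic. The paper replaces $1-\pM(1-b)$ by $(1-\pM)(1-b)$ (picking up an extra $(1-b)$ factor, hence Bernoulli with exponent $n{+}1$), and drops the entire negative block of $G$ to get $G\le a(1-\pM)$, landing on the intermediate estimate $\frac{aC(1-(n{+}1)b)}{aC+b}$, which is then compared to $\frac{aC-b}{aC+b}$ by a direct numerator comparison using $(n{+}1)\pM^n(1-\pM)\alpha_k<1$. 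You instead replace $1-\pM(1-b)$ by $1-\pM$, keep the $b^2(1-\pM)$ term of $G$ a step longer, and end up with $\frac{aC(1-nb)}{aC+b(1+C)}$, which requires a short cross-multiplication; your version has the minor advantage that it never needs $G\ge 0$ (the paper's step $(1-b)^{n-1}\to 1$ in the denominator implicitly uses this, though it does hold whenever the target bound is positive). Your remark that dropping the whole negative block of $G$ ``loses so much that the resulting estimate no longer beats the target'' is slightly off: the paper does exactly that and still succeeds, because it pairs it with the tighter numerator replacement.
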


\begin{proof}
We scale down the right side as follows:

\allowdisplaybreaks
\begin{align*}
& \wPr_{\sigma,\gamma}^{k}(\winEven) \\
\ge & (1-\alpha_{k+1}) \cdot \frac{(1-x_2)\cdot x_2^{n-1}\cdot x_4}{1-(x_2+x_3) + x_5\cdot x_2^{n} + t\cdot x_2^{n-1} - x_5 \cdot x_2^{n-1}} \tag*{\textit{Lemma~\ref{lm:formal-wineven}}}\\
=\ & \frac{(1-\pM+\pM\aKP) \cdot \pM^{n-1}(1-\aKP)^{n} \cdot \pM\aK }{\aKP+\pM^{n-1}(1-\aKP)^{n-1}(\pM^2(-\aK+\aKP+\aK\aKP-\aKP^2)+\pM(\aK-2\aKP+\aKP^2))} \\
>\ & \frac{(1-\pM) \cdot \pM^{n}\cdot(1-\aKP)^{n+1} \cdot \aK}{\aKP+\pM^{n}\cdot(\pM(-\aK+\aKP+\aK\aKP-\aKP^2)+(\aK-2\aKP+\aKP^2))} \tag*{\textit{since $1-\pM+\pM\aKP>(1-\pM)(1-\aKP)$ and $(1-\aKP)^{n-1} < 1$}} \\
\ge\ & \frac{(1-\pM)\pM^n\aK(1-(n+1)\aKP)}{\aKP+\pM^n(\aK(1-\pM)-\aKP(2-\aKP-\pM(1+\aK-\aKP)))} \tag*{\textit{follows from Bernoulli inequality}}\\
>\ & \frac{(1-\pM)\pM^n\aK-(n+1)\pM^n(1-\pM)\aK\aKP}{\aKP+(1-\pM)\pM^n\aK} \tag*{\textit{since $\aKP(2-\aKP-\pM(1+\aK-\aKP))>0$}}\\
>\ & \frac{\pM^n\aK(1-\pM)-\aKP}{\pM^n\aK(1-\pM)+\aKP} \tag*{\textit{since $(n+1)\pM^n(1-\pM)\aK<1$}} \\
=\ & \frac{\pM^n(1-\pM)-\frac{\aKP}{\aK}}{\pM^n(1-\pM)+\frac{\aKP}{\aK}}
\end{align*}
\end{proof}

We now arrange $\alpha_{k+1}/\alpha_{k}$. It follows from Corollary~\ref{coro:actual-wineven} that for all strategy pairs $\sigma,\gamma\in\Sigma\times\Gamma$, for all even $k$, the following holds:
\begin{equation}
\wPr_{\sigma,\gamma}^{k}(\winEven) > \frac{\pM^n(1-\pM)-\frac{\aKP}{\aK}}{\pM^n(1-\pM)+\frac{\aKP}{\aK}}
\end{equation}
Therefore to show:
\begin{equation}\label{eq:qwineven-0}
\wPr_{\sigma,\gamma}^{k}(\winEven)\ge \frac{4}{4+\epsilon}
\end{equation}
it suffices to show that for all $k\in\mathbb{N}$:
\begin{equation}\label{eq:qwineven-1}
\frac{\pM^n(1-\pM)-\frac{\aKP}{\aK}}{\pM^n(1-\pM)+\frac{\aKP}{\aK}} \ge \frac{4}{4+\epsilon}
\end{equation}
We denote $\frac{\aKP}{\aK}$ with $r$ in the following calculation. Inequality~\ref{eq:qwineven-1} can be further simplified to:
\begin{equation}
\frac{\epsilon}{4+\epsilon} \ge \frac{2r}{(1-\pM)\pM^n+r}
\end{equation}
and can be finally simplified to:
\begin{equation}
r \le \frac{(1-\pM)\pM^n}{8\epsi+1}
\end{equation}
Therefore we obtain that if for all $k\in\mathbb{N}$, the following holds:
\begin{equation}
\frac{\alpha_{k+1}}{\alpha_{k}} \le \frac{(1-\pM)\pM^n}{8\epsi+1} \notag
\end{equation}
then inequality~\ref{eq:qwineven-0} holds, and thus condition~\ref{aspt:c1} is satisfied.

\end{document}